\documentclass[a4paper,english,cleveref,autoref]{lipics-v2019}
\usepackage{amsmath}
\usepackage{amssymb}
\usepackage{latexsym}
\usepackage{graphicx}
\usepackage{color}
\usepackage{amsfonts}
\usepackage{mathrsfs}
\usepackage{stmaryrd}
\usepackage{url}
\usepackage{ulem}
\usepackage{mathpartir}
\usepackage{enumerate}
\usepackage{prooftree}
\usepackage{multicol}
\usepackage{multirow}
\usepackage{framed}
\usepackage{xspace}

\long\def\ignore#1{\relax}

\newcommand{\fubla}[1]{\red{blablabla}} 
\newcommand{\furef}[1]{\red{ref}} 
\newcommand{\fucite}[1]{\red{cite}} 

\newcommand{\bcen}{\begin{center}}
\newcommand{\ecen}{\end{center}}

\newcommand{\compF}[2]{\mathtt{#1}\left(#2\right)}

\newcommand{\cmin}[2]{#1{\setminus\!\!\setminus}\, #2}






\newcommand{\fv}[1]{\mathtt{fv}(#1)}
\newcommand{\bv}[1]{\mathtt{bv}(#1)}

\def\l{\lambda}
\def\Gam{\Gamma}

\def\Del{\Delta}

\def\sig{\sigma}

\newcommand{\sep}{\hspace*{0.5cm}}

\newcommand{\Rewplus}[1]{\rightarrow^{+}_{#1}}
\newcommand{\Rewparam}[2]{\rightarrow^{#1}_{#2}}
\newcommand{\notRew}[1]{{\not\!\!\!\!\Rew{#1}}}

\newcommand{\emm}{ [\,] }
\newcommand{\es}{ \emptyset}

\newcommand{\A}{\mathcal{A}}
\newcommand{\B}{\mathcal{B}}

\newcommand{\R}{\mathcal{R}}

\newcommand{\mult}[1]{[#1]}

\newcommand{\deft}[1]{{\bf #1}}

\def\l{\lambda}

\newcommand{\slist}{\mathtt{L}}

\newcommand{\betas}{{\tt b}}
\newcommand{\exps}{{\tt e}}
\newcommand{\matchs}{{\tt m}}

\newcommand{\tri}{\triangleright}

\newcommand{\dom}[1]{\mathtt{dom}(#1)}

\newcommand{\iI}{{i \in I}}

\newcommand{\kK}{{k \in K}}

\newcommand{\sz}[1]{\compF{sz}{#1}}

\newcommand{\app}{{\tt app}}

\newcommand{\appn}{{\tt app_{p}}}
\newcommand{\pattn}{{\tt pat_p}}












\newcommand{\rew}{\rightarrow}



\newcommand{\ttabs}[1]{\mathtt{abs}(#1)}

\newcommand{\seq}[2]{#1 \vdash #2}
\newcommand{\Seq}[2]{#1 \Vdash #2}

\newcommand{\amuju}[6]{#5 \vdash^{(#1,#2,#3,#4)} #6}



\newcommand{\amuJu}[6]{#5 \vdash^{(#1,#2,#3,#4)} #6}

\newcommand{\typend}{\bullet}
\newcommand{\typendm}{\bullet_{\M}}
\newcommand{\typendn}{\bullet_{\N}}




\newcommand{\tight}{\mathtt{t}}






\newcommand{\isnotabs}[1]{\neg \ttabs{#1}}

\newcommand{\istight}[1]{{\tt tight}(#1)}








\newcommand{\sigk}{\sigma_k}                   



\newcommand{\inter}{\wedge}
\newcommand{\interpret}[1]{[ \! [ #1 ] \! ]}

\newcommand{\emul}{\mult{\, }}

\renewcommand{\emph}[1]{{\it  #1}}





\newcommand{\Gamk}{\Gam_k}            

\newcommand{\Delk}{\Del_k}

\newcommand{\many}{\mathtt{many}} 

\newcommand{\PT}{ \mathcal{P} }

\newcommand{\K}{ \mathtt{K} }








\newcommand{\ruletight}{{\tt abs_p}}

\usepackage{ifthen}



\usepackage{mathtools}
\usepackage{yfonts}
\usepackage{appendix}
\usepackage{xcolor}


\renewcommand\l{\lambda}

\newcommand\emptymultiset{\ems}
\newcommand\ems{[\,]}
\newcommand\msunion{\sqcup}

\newcommand\vars{\mathcal{V}}

\newcommand\Deribbase[5]{{#3}\ {\pmb\vdash}_{#2}^{#1} {#4}\  {:}\  {#5}}

\makeatletter

\newcommand{\Deribase}[1]{%
  \def\DeribW[##1]{\Deribbase{##1}{#1}}%
  \def\DeribWO{\Deribbase{}{#1}}%
  \@ifnextchar[\DeribW\DeribWO%
  }

  \newcommand{\Deri}{%
  \def\DeriW_##1{\Deribase{##1}}%
  \def\DeriWO{\Deribase{}}%
  \@ifnextchar_\DeriW\DeriWO%
  }
\makeatother

\newcommand{\result}{r}






\makeatletter
\newcommand{\exder}{%
  \def\exderW[##1]{\triangleright_{##1}\ }%
  \def\exderWO{\triangleright\ }%
  \@ifnextchar[\exderW\exderWO%
  }
\makeatother

\makeatletter
\newcommand{\appresult}{%
  \def\appresultW<##1>{\app_\result^{##1}}%
  \def\appresultWO{\app_\result}%
  \@ifnextchar<\appresultW\appresultWO%
  }
\makeatother

\newcommand{\introarrow}{{\tt abs}}


\newcommand{\tderiv}{\Phi}













\newcommand{\id}{{\tt I}}






  %
  %

\newcommand{\var}[1]{{\tt var}(#1)}




\makeatletter

\newcommand{\Rewbase}{%
  \def\RewbaseW[##1]##2##3{\ {\xrightarrow{##1}}{}_{##2}^{##3}\, }%
  \def\RewbaseWO##1##2{\ {\xrightarrow{}}{}_{##1}^{##2}\, }%
  \@ifnextchar[\RewbaseW\RewbaseWO%
  }

\newcommand{\Rewbasebis}{%
  \def\RewbaseW[##1]##2##3{\ {\xrightarrow{##1}}{}_{##2}^{##3}}%
  \def\RewbaseWO##1##2{\ {\xrightarrow{}}{}_{##1}^{##2}}%
  \@ifnextchar[\RewbaseW\RewbaseWO%
  }

\newcommand{\Rew}[1]{%
  \def\RewW[##1]{\Rewbase[##1]{#1}{}}%
  \def\RewWO{\Rewbase{#1}{}}%
  \@ifnextchar[\RewW\RewWO%
}

\newcommand{\Rewbis}[1]{%
  \def\RewW[##1]{\Rewbasebis[##1]{#1}{}}%
  \def\RewWO{\Rewbasebis{#1}{}}%
  \@ifnextchar[\RewW\RewWO%
  }

\newcommand{\Rewn}[2][*]{%
  \def\RewnW[##1]{\Rewbase[##1]{#2}{#1}}%
  \def\RewnWO{\Rewbase{#2}{#1}}%
  \@ifnextchar[\RewnW\RewnWO%
  }

\makeatother








\newcommand{\head}{{\tt h}}















\newcommand{\nbvctxtwo}[1]{\nbvctxtwo{#1}}



\newcommand{\isub}[2]{\{#1/#2\}}
\newcommand{\esub}[2]{[#1 \backslash #2]}
\renewcommand{\isub}[2]{\{#1\backslash#2\}}







\newcommand{\llbrace}{\{ \kern -0.27em \vert}
\newcommand{\rrbrace}{\vert \kern -0.27em \}}


\renewcommand{\l}{\lambda}
\newcommand{\cf}{\emph{cf.}\xspace} 
\newcommand{\ie}{\emph{i.e.}\xspace}

\newcommand{\ih}{\emph{i.h.}\xspace}



\newcommand{\red}[1]{{\color{red} {#1}}}


\newcommand{\myinput}[1]{\ifthenelse{\boolean{withimages}}{\input{#1}}{}}









\newcommand{\ax}{\mathsf{ax}}























\newcommand{\size}[1]{|#1|}























\newcommand{\pair}[2]{\langle #1,#2 \rangle}









\newcommand{\ov}[1]{\overline{#1}}



\newcommand{\pder}{\Vdash}
\newcommand{\prodt}[2]{\times(#1,#2)}

\newcommand{\produ}[1]{\times_1(#1)}
\newcommand{\prodd}[1]{\times_2(#1)}

\newcommand{\trempty}{\mathtt{emptypair}}
\newcommand{\tnormalpair}{\mathtt{pair_p}}

\newcommand{\trpair}{\mathtt{pair}}

\newcommand{\trsub}{\mathtt{match}}

\newcommand{\trvarpat}{\mathtt{pat_v}}
\newcommand{\trpairpat}{\mathtt{pat_{\times}}}

\newcommand{\oprod}{o}


\newcommand{\Upper}{$\mathscr{U}$}
\newcommand{\Exact}{$\mathscr{E}$}

\newcommand{\M}{\mathcal{M}}
\newcommand{\N}{\mathcal{N}}
\newcommand{\appctx}[2]{#1 {[ \! [}  #2 {] \! ]} }
\newcommand{\pattern}{{\tt p}}
\newcommand{\ctxt}[1]{{\tt #1}}

\usepackage{appendix}
\setcounter{toappendix}{2}
\capturecounter{theorem}
\capturecounter{lemma}
\capturecounter{proposition}





\title{A Quantitative Understanding of Pattern Matching} 


\author{Sandra Alves}{DCC-FCUP \& CRACS, Univ. Porto, Portugal}{}{}{}

\author{Delia Kesner}{IRIF, Univ. Paris, CNRS, France \and Institut Universitaire de France (IUF), France}{}{}{}

\author{Daniel Ventura}{INF, Univ. Federal de Goi\'as,
  Brazil}{}{}{}

\authorrunning{S. Alves and D. Kesner and D. Ventura}

\Copyright{Sandra Alves and Delia Kesner and Daniel Ventura}

\ccsdesc[100]{Theory of computation~Type theory}
\ccsdesc[100]{Theory of computation~Models of computation}

\keywords{Intersection Types, Pattern Matching, Exact Bounds}

\category{}


\supplement{}

\funding{Supported by the Brazilian Research Council
  (CNPq) grant Universal 430667/2016-7.}


\nolinenumbers 

\hideLIPIcs  

\EventEditors{John Q. Open and Joan R. Access}
\EventNoEds{2}
\EventLongTitle{42nd Conference on Very Important Topics (CVIT 2016)}
\EventShortTitle{CVIT 2016}
\EventAcronym{CVIT}
\EventYear{2016}
\EventDate{December 24--27, 2016}
\EventLocation{Little Whinging, United Kingdom}
\EventLogo{}
\SeriesVolume{42}
\ArticleNo{23}


\begin{document}

\maketitle
\begin{abstract} This paper shows that the recent approach to
    quantitative typing systems for programming languages can be
    extended to pattern matching features. Indeed, we define
    two     resource-aware type systems, named \Upper\ and \Exact,
    for a $\l$-calculus equipped with pairs
    for both patterns and terms.  Our typing systems
    borrow some basic  ideas from~\cite{BKRDR15}, which characterises (head)
    normalisation in a {\it qualitative} way, in the sense that
    typability and normalisation coincide. But, in contrast
    to~\cite{BKRDR15}, our ({\it static}) systems also provide {\it
      quantitative} information about the {\it dynamics} of the
    calculus. Indeed,
    system \Upper\  provides {\it upper bounds}
    for the
    {\it length} of (head) normalisation sequences {\it plus}  the {\it size} of
    their corresponding normal forms, while
    system \Exact, which can be seen as a refinement of
    system \Upper, 
    produces {\it exact bounds} for {\it each of} them.
    This is achieved by means of a
    non-idempotent intersection type system equipped with different
    technical tools. First of all,  we use product types to type pairs,
    instead of the disjoint unions in~\cite{BKRDR15},
    thus avoiding an overlap between
      ``being a pair'' and ``being duplicable'',
     resulting in  an essential tool to reason
      about quantitativity.
    Secondly, typing sequents in system \Exact\ are decorated with
    tuples of integers, which provide quantitative information about
    normalisation sequences, notably {\it time} (\cf length) and {\it
      space} (\cf size).  Another key tool
    of  system \Exact\ is that
    the type system distinguishes
    between {\it consuming} (contributing to time) and {\it
      persistent} (contributing to space) constructors.  Moreover, the
    time resource information is remarkably refined, because
    it     discriminates between different kinds of reduction steps performed
    during evaluation, so that beta, substitution and
    matching steps are counted separately.
\end{abstract}

\section{Introduction}
\label{s:introduction}
This paper gives a quantitative understanding of pattern matching, by
introducing two typing systems, named \Upper\ and \Exact, that
respectively provide upper an exact bounds for the length of
(head) normalisation sequences, as well as the size of the reached normal forms.

\noindent \deft{Pattern Calculi:}
Modern (functional) programming languages, like OCaml and Haskell, and proof
assistants, such as Coq and Isabelle, are equipped with rich pattern
matching mechanisms, that allow data to be processed and decomposed
according to some criteria (typically, its
syntactical structure). However, the theory and semantics of
programming usually focuses on $\lambda$-calculi, where abstraction
carries on variables instead of patterns, causing a
conceptual gap, since some theoretical properties of the
$\lambda$-calculus do not translate directly to pattern calculi. For
example, solvability for pattern calculi (called observability) has
been shown to differ from typability, which suffices to characterise
solvability in the $\lambda$-calculus (see~\cite{BKRDR15} for
details). Other interesting examples are standardisation for
pattern calculi~\cite{KLR11-hor}, and neededness~\cite{BonelliKLR12}.
It is then crucial to study the semantics of programming languages with
pattern matching features by means of theoretical calculi equipped
with built-in patterns --for instance those in~\cite{rhoCalIGLP-I-2001,Kahl-2004a,JK09,KLR11-hor,AlvesDFK18}-- instead of considering pattern matching as an encoded mechanism.

A natural approach to model these kind of languages is to generalise
$\lambda$-abstractions $\lambda x. t$ to functions of the form
$\lambda p. t$, where $p$ is a {\it pattern} specifying the expected
structure of their arguments.
In this work we focus on
a $\lambda$-calculus using pair constructors for both terms and
patterns. This can be seen as a simplified form of algebraic pattern
matching, but still powerful enough to
reason about the most interesting features of syntactical matching
mechanisms.

\noindent \deft{Quantitative Types:} Quantitative types are related to
the consumption of time and space by programs, and the resource-aware
semantics of {\it non-idempotent intersection} has been employed in
such considerations.  {\it Non-idempotent intersection} type systems
(also known as {\it quantitative} type systems) have been independently
introduced in the framework of the $\lambda$-calculus to identify
needed redexes~\cite{Gardner94} and to study a linearisation of the
$\l$-calculus~\cite{Kfoury2000} (first appeared in
\cite{kfoury96}). Although widely unnoticed, the {\it quantitative} aspect of
such typing systems is crucial in both investigations, where how (and
how many times) a resource is consumed through a normalising reduction
sequence plays a central role. However, only after~\cite{BoudolCL99}
this quantitative feature of non-idempotent intersection was
highlighted, and since De Carvalho's thesis in 2007
(see also~\cite{Carvalho07}) its relation with linear
logic~\cite{Girard87} and quantitative relational models has been
deeply explored.

As its idempotent counterpart, non-idempotent
intersection type systems may characterise different
  notions of normalisation (such as head, weak and strong)~\cite{Carvalho18,BernadetL13,BucciarelliKV17} but, instead of using some
semantic argument such as reducibility, simple combinatorial arguments
are enough to guarantee termination of typable terms. Quantitative
types have since been applied on the $\lambda$-calculus for the
characterisation of termination with respect to a variety of
other evaluation strategies, for instance 
call-by-value~\cite{Ehrhard12,AccattoliG18}, call-by-need~\cite{Kesner16,BBBK17,AccattoliGL19} and (linear) head
reduction~\cite{Gardner94,AccattoliGK18}. The quantitative reasoning tool has been
well-adapted also to some explicit resource
calculi~\cite{BernadetL13,KV14,KV15}, as well as to pattern 
calculi~\cite{BernadetTh,BKRDR15,BBM18}, proof-nets~\cite{CarvalhoF16},
classical logic~\cite{KV17} and call-by-push-value~\cite{EhrhardG16,abs-1904-06845}.

\noindent \deft{Upper-bounds versus Exact-bounds:} The size of typing
derivations has often been used as an upper bound
to the lenght of different evaluation
strategies~\cite{BBBK17,BucciarelliKV17,Ehrhard12,KV14}, ever since
 its definition to deal with the
well-known notions of head and leftmost
evaluation in Krivine's abstract machine
(KAM)~\cite{Carvalho07,Carvalho18,krivine93book}.
A key notion behind all these works is that
when $t$ evaluates to $t'$, then 
the size of the type derivation of $t'$ is smaller than
the one of $t$, thus size of type derivations 
provides an {\it upper bound} for the
{\it length} of the reduction to a normal form  as well as for the {\it size}
of this normal form.

A crucial point to obtain {\it exact bounds}, instead of upper bounds, is to
consider only minimal typing derivations, such as principal derivations,
which give the notion of {\it all and only information} for typings
(\cf\ \cite{Wells02} for an abstract definition).   A syntactic notion of minimal typings, bearing
similarities with principal typings, was applied to the
KAM~\cite{Carvalho07}, then to the maximal evaluation
strategy~\cite{BernadetL13}. The technique was further developed
in~\cite{AccattoliGK18} with the introduction of {\it tightness} as an
appropriate notion for minimal typings, systematically broadening the
definition of exact bounds for other evaluation strategies. Exact bounds were
also obtained for both call-by-value~\cite{AccattoliG18} and
call-by-need~\cite{AccattoliGL19} strategies, as well as for classical
logic~\cite{KV19}.

\noindent \deft{Related work and Contributions:}
Non-idempotent intersection types have been used to characterise 
  strong normalisation, in  a calculus with fix-point operators and pattern matching on constructors~\cite{BernadetTh}.
Similarly, a strong call-by-need strategy for a pattern matching
language was defined in~\cite{BBM18}, and completeness of the
strategy was shown by means of non-idempotent intersection types
by extending the technique introduced in~\cite{Kesner16,BBBK17}.
In both cases, despite the use of non-idempotent types, no quantitative
results were obtained by means of the typing system.
 
Closer to our work, \cite{BKRDR15} studied the solvability property in
a pattern calculus with pairs by means of intersection types. One of
the contributions of~\cite{BKRDR15} is a characterisation of
(non-deterministic) head-normalisation by means of typability. Our
presentation of the pattern calculus goes a step further, by providing
a suitable {\it deterministic} head-reduction strategy (called
head-evaluation) which is complete w.r.t. the notion of
head-normalisation defined in~\cite{BKRDR15}, in the sense that
head-normalisation and termination of the head-strategy are
equivalent, thus answering one of the open questions
in~\cite{BKRDR15}.

Moreover, the mentioned characterisation result in~\cite{BKRDR15} is
merely {\it qualitative}, as it does not give any {\it upper/exact
  bound} for head-evaluation, while our work
fully exploits quantitativity, first by providing a typing system
being able to compute upper bounds for
head-evaluation, then by refining the
above  type system as well as the bounds to be
\emph{exact}.

Indeed, the first contribution of this paper is to go beyong the
qualitative characterisation of head-normalisation for pattern calculi
by providing a typing system \Upper\ being able to statically compute
{\it upper bounds} for 
head-evaluation.  The main reason why the type system
  in~\cite{BKRDR15} fails to provide \emph{upper} bounds to
  head-normalisation sequences is because of the use of commuting
  conversions in the reduction relation
  associated to the pattern calculus. We solve this problem by
  integrating these commuting conversions into the basic reduction
  rules, so that the resulting system
  implements reduction {\it at a distance}, like in~\cite{AK12}.
The new specification of the reduction system at a distance is
  now well-behaved w.r.t. our first type system \Upper: if $t$ is well
  typed in \Upper, then the size of its type derivation gives an upper
  bound to the (deterministic) head-reduction sequence from $t$ to its
  (head) normal form. Our system \Upper\ can be seen as a form of quantitative (relational) {\it model} for the pair pattern calculus (Sec.~\ref{s:model}),
 following the lines of~\cite{BE01,BucciarelliEM12,PPR17}.

Concerning {\it exact} bounds, the main reason why the type system in~\cite{BKRDR15} fails to provide \emph{exact bounds} to head-normalisation
sequences is because of a double usage of non-idempotent
intersection types: they are not strictly reserved to specify
duplicability power, as is usual in non-idempotent
types~\cite{BucciarelliKV17}, but are also used to described different
paths in the syntactical structure of patterns and pairs. More
precisely,~\cite{BKRDR15} models product types by disjoint unions, so
that any pair $\pair{t}{u}$ of typed terms $t$ and $u$ has necessarily
{\it at least} two types, one of the form $\produ{\sig}$ where $\sig$
is the type of $t$, and one of the form $\prodd{\tau}$, where $\tau$
is the type of $u$.  This has an important undesirable consequence,
because multiple types carry two completely different meanings: being
a pair (but not necessarily a pair to be duplicated), or being a
duplicable term (but not necessarily a pair).
To address this issue we
use standard product types for pairs, and multi-sets that are {\it
  singletons of products} for pair-patterns.  In other words, we make
use of the exponential isomorphism $! ( A \bindnasrepma B ) \equiv ! A
\otimes ! B $ of multiplicative exponential linear
logic~\cite{Girard87}. Our (simpler) formulation
of product types turns out to be more appropriate to reason about
quantitativity.  

We can thus say that the second contribution of this paper is to
  go beyond upper bounds by providing
  a typing system \Exact\ being able to statically provide
  {\it exact bounds} for
  head-evaluation. This is
  done by using several key tools. 
  First of all, we adopt the specification of
    product types  explained in the
    previous paragraph. Moreover, we do  not use any kind of
  (idempotent) {\it structural} type, as done in~\cite{BKRDR15}.
Besides that, an important  notion
used in system
\Exact\ is the clear distinction between {\it consuming} and {\it
  persistent} constructors.  This has some
intuition coming from the theory of residuals~\cite{barendregt84nh},
where any symbol occurring in a normal form can be traced back to the
original term. A constructor is consuming (resp. persistent) if it is
consumed (resp. not consumed) during head-reduction. For instance, in
$(\l z. z)(\l z. z)$ the first abstraction is consuming while the
second one is persistent.  This dichotomy between consuming/persistent
constructors has been used in~\cite{AccattoliGK18} (resp.~\cite{KV19})
for the $\lambda$-calculus (resp. $\lambda\mu$-calculus), and adapted
here for the pattern calculus. Indeed, patterns and terms are consumed
when the pair constructor is destroyed during the execution of the
pattern matching rule. Otherwise, patterns and pairs are persistent,
and they do appear in the normal form of the original term.
Last, but not least, another major ingredient of our approach is the use of {\it
    tight} types, inspired from~\cite{AccattoliGK18}, and the corresponding notion of tight (\cf\ minimal)
  derivations. This is combined with the introduction of {\it
    counters} in the typing judgements,  which are used
  to discriminate between different sorts of
  reduction steps performed during evaluation, so that beta,
  substitution and matching steps are statically (exactly) counted,
  each one with its own counter.
The following list summarises our contributions:
\begin{itemize}
\item A deterministic head-strategy
  for the patttern calculus which is complete w.r.t. the notion of head-normalisation.
\item A  type system \Upper, which  provides
  upper bounds for the lenght of head-normalisation
  sequences plus the size of
  its corresponding normal forms.
\item Refinement of system \Upper\ to system \Exact, being able
  to provide exact bounds for both the lenght of head-normalisation
  sequences and the size of
  its corresponding normal forms.
\end{itemize}

\noindent \deft{Road-map:} Sec.~\ref{s:pattern-calculus} introduces
the pattern calculus. Sec.~\ref{s:typing-system} presents
the typing system \Upper, together with some
of its quantitive properties, and Sec.~\ref{s:model} suggests a relational model
for our pattern calculus
based on the type system. In Sec.\ref{s:tight-typing-system}, we refine \Upper\ to
extract exact bounds, which leads to the definition of our second
typing system \Exact. The correctness (resp. completeness) proof for
\Exact\ is given in Sec.~\ref{s:soundness}
(resp. Sec.~\ref{s:completeness}).  Conclusions and future work are
discussed in Sec.~\ref{s:conclusion}.


\section{The Pattern Calculus}
\label{s:pattern-calculus}

In this section we introduce the pattern calculus,
which is an   extension of the $\lambda$-calculus, where
abstraction is extended to {\it pair patterns}, and terms are extended to
{\it pairs}. We start by introducing the syntax of the calculus.

\deft{Terms and contexts} of the pattern calculus are defined by means of the following grammars:
\[ \hfill \begin{array}{llll}
   \mbox{(\deft{Patterns})} &  p, q   & ::= & x \mid  \pair{p}{q}  \\
   \mbox{(\deft{Terms})}    &  t,u,v & ::= & x \mid \l p.t \mid \pair{t}{u} \mid tu
                                               \mid t\esub{p}{u} \\
   \mbox{(\deft{List Contexts})}    &  \slist  & ::= & \Box   \mid \slist\esub{p}{u} \\
   \mbox{(\deft{Contexts})}    &  \ctxt{C}  & ::= & \Box   \mid \l p. \ctxt{C}  \mid \pair{\ctxt{C}}{t} \mid \pair{t}{\ctxt{C}} \mid  \ctxt{C}t\mid t \ctxt{C}
                                               \mid  \ctxt{C} \esub{p}{t}  \mid t\esub{p}{ \ctxt{C} }
   \end{array} \hfill \]
where $x,y,z,w \ldots $ range over a countable set of variables, and every
pattern $p$ is assumed to be {\it linear}, \ie\ every variable
appears at most once in $p$. The term $x$ is called a \deft{variable},
$\l p. t$ is an \deft{abstraction}, $\pair{t}{u}$ is a \deft{pair},
$tu$ is an \deft{application} and $t\esub{p}{u}$ is a \deft{closure},
where $\esub{p}{u}$ is an \deft{explicit matching} operator.
Special terms are $\id:= \l z. z$,
$\Delta := \l z. zz$ and
$\Omega := \Delta \Delta$. As usual we use the abbreviation $\l p_1 \ldots p_n. t_1 \ldots t_m  $  for 
$\l p_1 (\ldots (\l p_n. ((t_1 t_2) \ldots t_m))\ldots)$, $n\geq 0$,  $m\geq 1$. 

We write $\var{p}$ to denote the variables in
the 
  pattern $p$. \deft{Free} and \deft{bound variables} of terms
  and contexts are defined as
expected, in particular $\fv{\l p.t} := \fv{t} \setminus \var{p}$
and $\fv{t[p/u]} := (\fv{t} \setminus \var{p}) \cup \fv{u}$. We
write $p \# q$ if  $\var{p}$ and $\var{q}$ are disjoint.
As usual, terms are considered modulo $\alpha$-conversion, so that
for example $\l \pair{x}{y}. xz =_{\alpha} \l \pair{x'}{y'}. x'z$
and $x\esub{\pair{x}{y}}{z} =_{\alpha} x'\esub{\pair{x'}{y'}}{z}$.  Given a
list context $\slist$ and a term $t$, $\appctx{\slist}{t}$ denotes the term
obtained by replacing the unique occurrence of $\square$ in
$\slist$ by $t$, possibly allowing the capture of free variables of
$t$. We use the predicate $\ttabs{t}$ when $t$ is  of the form $\appctx{\slist}{\l p. u}$.
The \deft{reduction relation $\Rew{\pattern}$} on terms
is given by the closure of {\it all} contexts of the
following rewriting rules. 
  \[\hfill \begin{array}{lllll}
     \appctx{\slist}{\l p. t} u & \Rew{} &
                                           \appctx{\slist}{t\esub{p}{u}}
     & \bv{\slist}\cap \fv{u} = \es\\
     t\esub{\pair{p_1}{p_2}}{\appctx{\slist}{\pair{u_1}{u_2}}} & \Rew{} &
     \appctx{\slist}{t\esub{p_1}{u_1}\esub{p_2}{u_2}} &
                                                         \bv{\slist}
                                                         \cap \fv{t} = \es
                                                         \\
     t\esub{x}{u}  & \Rew{} & t\isub{x}{u}  & \\
           \end{array} \hfill  \]

  The reduction relation $\Rew{\pattern}$ defined above is
    related to that in~\cite{BKRDR15}, called
    $\Rew{\Lambda_{\pattern}}$, in the following sense:
    $\Rew{\Lambda_{\pattern}}$ contains two subsystem relations, one
    to deal with {\it clashes}, which are not handled in the
    present calculus since we consider typable terms only (\cf Lem. \ref{l:relevance+clash-free-upper}.\ref{clash-free-upper}), and another one containing the following five
    rules:
         \[\hfill \begin{array}{lllll}
     (\l p. t) u & \Rew{} &
           t\esub{p}{u}\\ t\esub{\pair{p_1}{p_2}}{\pair{u_1}{u_2}} &
           \Rew{} & t\esub{p_1}{u_1}\esub{p_2}{u_2} \\ t\esub{x}{u} &
           \Rew{} & t\isub{x}{u} & \\ t\esub{p}{v}u & \Rew{} &
           (tu)\esub{p}{v} & \fv{u} \cap \var{p} = \es
           \\ t\esub{\pair{p_1}{p_2}}{u\esub{q}{v}} & \Rew{} &
           t\esub{\pair{p_1}{p_2}}{u} \esub{q}{v} & \fv{t} \cap
           \var{q} = \es \\ \end{array} \hfill \] The two last rules
         can be seen as commuting conversions, and so they can be
         integrated in the first (two) rules by using the {\it
        substitution at a distance paradigm}~\cite{AK12}, which is usually specified by
         means of a context $\slist$, as in  the first two rules of our
         reduction relation $\Rew{\pattern}$. It is worth noticing that
         $t \Rew{\pattern} t'$ can be simulated by $t
         \Rewplus{\Lambda_{\pattern}} t'$.
         For instance,
           $(\l p. t)\esub{p_1}{u_1}\esub{p_2}{u_2} u \Rew{\pattern}
             t\esub{p}{u}\esub{p_1}{u_1}\esub{p_2}{u_2}$ can be simulated by:
               \begin{align*}
                  (\l p. t)\esub{p_1}{u_1}\esub{p_2}{u_2} u \Rew{\Lambda_{\pattern}}
             ((\l p. t)\esub{p_1}{u_1} u )\esub{p_2}{u_2} &
                                                            \Rew{\Lambda_{\pattern}}  ((\l p. t) u )\esub{p_1}{u_1}\esub{p_2}{u_2} \\
                  &\Rew{\Lambda_{\pattern}} t\esub{p}{u}\esub{p_1}{u_1}\esub{p_2}{u_2}
                 \end{align*} 
         This is equivalent to
         what is done in call-by-name $\lambda$-calculus at a distance.  Our
         formulation of the pattern calculus at a
         distance, given by the relation $\Rew{\pattern}$,
         as well as the corresponding head strategy that we present below, are essential to get quantitative results about
         head-normalisation (\cf\ Sec.~\ref{s:typing-system} and
         Sec.~\ref{s:tight-typing-system}).

         Although
         the reduction relation $\Rew{\pattern}$ is {\it
           non-deterministic}, it can easily been shown to be {\it
           confluent}, for example using the same technique as in~\cite{BKRDR15}. However, in order to study exact bounds of
         evaluation, we need to define a {\it deterministic} strategy
         for the pattern calculus, \ie\ a subrelation of
         $\Rew{\pattern}$ that is able to compute the same normal
         forms.  Fig.~\ref{f:head-strategy} gives an operational
         semantics for the pattern calculus, which turns out to be an
         extension of the well-known notion of {\it head-reduction}
         for $\l$-calculus, then also named \deft{head-reduction}, and
         denoted by the relation $\Rewbis{\head}$. In the following
         inductive definition $t \Rew{\head} u$ means that $t$
         head-reduces to $u$, and $t \notRew{\head}$ means that $t$ is
         a \deft{head normal-form}, \ie\ there is no $u$ such that
         $t \Rew{\head} u$. 
\begin{figure}[!h]
 \[\hspace{-.6cm}  \begin{array}{c}
      \infer[\betas]{\bv{\slist} \cap \fv{u} = \es}
          {\appctx{\slist}{\l p. t} u \Rew{\head} \appctx{\slist}{t\esub{p}{u}} } \sep
      \infer[\matchs]{t \notRew{\head} \sep \sep \bv{\slist} \cap \fv{t} = \es}{t\esub{\pair{p_1}{p_2}}{\appctx{\slist}{\pair{u_1}{u_2}}}  \Rew{\head} \appctx{\slist}{t\esub{p_1}{u_1}\esub{p_2}{u_2}}} \sep
      \infer[\exps]{t \notRew{\head}\ }
           {t\esub{x}{u}   \Rew{\head}  t\isub{x}{u}}  \\ \\ 
       \infer{t \Rew{\head} t'}{\l p. t \Rew{\head} \l p. t'} \sep
    \infer{t \Rew{\head} t'\ \sep\ \isnotabs{t}}{tu \Rew{\head} t'u}  \sep
    \infer{t \Rew{\head} t'}{t\esub{p}{u} \Rew{\head} t'\esub{p}{u}}  \sep 
 \infer{t \notRew{\head}\ \sep\ p\neq x\ \sep\ u \Rew{\head} u' }
           {t\esub{p}{u} \Rew{\head} t\esub{p}{u'}} \sep
     \end{array} \]
   \caption{The head-reduction strategy for the pattern calculus}
   \label{f:head-strategy}
   \end{figure}

   Rule $\betas$ fires the computation of terms by transforming an
   application of a function to an argument into a closure term.
   Decomposition of patterns and terms is performed by means
   of rule $\matchs$, when a pair pattern is matched against a pair
   term.  Substitution is performed by rule $\exps$, \ie\ an explicit
   (simple) matching of the form $\esub{x}{u}$ is executed.  This form of
   syntactic pattern matching is very simple, and does not consider
   any kind of failure result, but is expressive enough to specify the
   well-known mechanism of matching. Context closure is {\it similar} to the
   $\l$-calculus case, but not exactly the same. Indeed, reduction is
   performed on the left-hand side of applications and closures
   whenever possible.  Otherwise, arguments of explicit matching
   operators must be reduced in order to unblock these operators,
   \ie\ in order to decompose $\esub{p}{u}$ when $p$ is a pair pattern
   but $u$ is still not a pair. Notice however that when $u$ is
   already a pair, reduction inside $u$ cannot take place at all, thus
   implementing a kind of {\it lazy} strategy for pattern matching.

   The head-reduction strategy so far defined can be understood as a
   combination of {\it call-by-name} and {\it call-by-value}: the
   call-by-name flavour comes from the fact that arguments for
   variables are never evaluated before function application, while in
   order to match a term against a pattern, this one needs to be
   (partially) evaluated (to a pair) before performing a matching
   decomposition, a phenomenon which also gives a {\it call-by-value}
   parfum to the strategy.  Standardisation of calculi as the one in
   this paper has been studied in~\cite{KLR11-hor}.

   Given a (one-step) reduction relation $\Rew{\R}$,
  we use $\Rewn{\R}$, or more precisely 
  $\Rewparam{k}{\R}\ (k \geq0)$ to denote the reflexive-transitive closure
   of $\Rewbis{\R}$, \ie\ $k$ $\R$-steps. In the case of $\head$ reduction, we may  use the
 alternative notation
 $\Rewparam{(b,e,m)}{\head}$ to emphasize the number  of reduction steps
 in a given reduction sequence, \ie\ if $\rho: t \Rewparam{(b,e,m)}{\head} u$,
 then there are exactly $b$ $\betas$-steps, $e$ $\exps$-steps and
 $m$ $\matchs$-steps in the reduction sequence $\rho$. We will often use the notation $\Rew{\betas}$ to explicitly refer to a $\betas$-step (resp. $\Rew{\exps}$ and $\Rew{\matchs}$ for $\exps$ and $\matchs$ steps). 
The reduction relation $\Rew{\head}$ is in fact a function:

 \begin{proposition}
   The relation $\Rew{\head}$ is deterministic.
 \end{proposition}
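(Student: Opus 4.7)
The plan is to proceed by structural induction on $t$, showing that whenever $t \Rew{\head} t_1$ and $t \Rew{\head} t_2$ hold, then $t_1 = t_2$. The key observation is that the rules of Fig.~\ref{f:head-strategy} are tailored, by means of their side conditions, so that at most one clause can fire on any given $t$, and when it does the resulting reduct is uniquely determined.

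The base cases are immediate: for $t$ a variable or a pair $\pair{t_1}{t_2}$ no rule applies (no rule has a variable or a pair at the head), so the statement holds vacuously. For an abstraction $t = \l p. t'$ only the abstraction congruence rule applies, and uniqueness follows from the induction hypothesis on $t'$. For an application $t = t'u$, the predicate $\isabs{t'}$ partitions the two candidate rules: when $\isabs{t'}$ only the axiom $\betas$ fires, and the decomposition $t' = \appctx{\slist}{\l p. v}$ is unique by an easy structural induction on the list context surrounding the head abstraction, so the reduct is uniquely determined; when $\isnotabs{t'}$ only the left-congruence rule can fire, and the induction hypothesis on $t'$ concludes.

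For a closure $t = t'\esub{p}{u}$, the side condition $t' \notRew{\head}$, shared by the axioms $\exps$, $\matchs$ and by the right-congruence rule, separates them from the left-congruence rule, which requires $t' \Rew{\head}$. Assuming $t' \notRew{\head}$, the shape of $p$ then discriminates $\exps$ (when $p$ is a variable) from $\matchs$ and the right-congruence rule (when $p$ is a pair pattern). Among these last two, $\matchs$ fires whenever $u$ admits a decomposition $u = \appctx{\slist}{\pair{u_1}{u_2}}$, which is again unique by a structural induction on $u$; the right-congruence rule handles the remaining case via the induction hypothesis on $u$.

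The main subtlety will be justifying the implicit priority of $\matchs$ over the right-congruence rule: the text preceding the proposition stresses that when $u$ is already of the form $\appctx{\slist}{\pair{u_1}{u_2}}$ the strategy refrains from reducing further inside $u$, which I read as an additional implicit side condition on the right-congruence rule excluding this form. Once made explicit, the four subcases of the closure case become mutually exclusive and the induction carries through; the only remaining routine verifications are the two uniqueness-of-decomposition lemmas for the list contexts appearing in the axiom rules $\betas$ and $\matchs$.
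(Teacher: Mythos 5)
The paper states this proposition without proof, so there is nothing to compare against directly; your structural induction on $t$, discriminating the applicable rule by the head constructor and the side conditions, is the natural (and surely the intended) argument, and it is correct as far as it goes. The two routine uniqueness-of-decomposition facts you invoke (a term of the form $\appctx{\slist}{\l p. v}$, resp.\ $\appctx{\slist}{\pair{u_1}{u_2}}$, determines $\slist$ and the filler uniquely, since $\slist$ is forced to absorb exactly the outermost chain of closures) are indeed immediate, and the mutual exclusions for applications ($\isabs{t'}$ versus $\isnotabs{t'}$) and for the left- versus right-hand congruences on closures ($t'\Rew{\head}$ versus $t'\notRew{\head}$) are airtight as written.

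Your flagged subtlety is the genuinely important point, and you are right that it is not merely a presentational quibble: with the rules of Fig.~\ref{f:head-strategy} read literally, $\matchs$ and the right-congruence rule for closures do overlap, and determinism fails. Concretely, take $t = y\esub{\pair{p_1}{p_2}}{v}$ with $v = \pair{u_1}{u_2}\esub{\pair{q_1}{q_2}}{\pair{w_1}{w_2}}$. Then $v = \appctx{\slist}{\pair{u_1}{u_2}}$ for $\slist = \Box\esub{\pair{q_1}{q_2}}{\pair{w_1}{w_2}}$, so $\matchs$ applies to $t$; but $v \Rew{\head} \pair{u_1}{u_2}\esub{q_1}{w_1}\esub{q_2}{w_2}$ by $\matchs$ applied inside $v$, so the right-congruence rule also applies to $t$, and the two reducts differ. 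Your reading of the surrounding prose (``when $u$ is already a pair, reduction inside $u$ cannot take place at all'') as an implicit side condition on the right-congruence rule --- namely that $u$ is \emph{not} of the form $\appctx{\slist}{\pair{u_1}{u_2}}$ --- is exactly the repair needed, and once it is made explicit your four subcases for closures are pairwise exclusive and the induction closes. It would strengthen the write-up to state that side condition as part of the rule rather than as an interpretive remark, since the proposition is literally false without it.
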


 \begin{example}
     \label{e:first}
     Let us consider the combinators $\id = \l z. z$ and $\K := \l x_1. \l y_1. x_1$. Then we have $ (\l \pair{x}{y}. x(\id y))\esub{z}{\id} (\id \pair{\K}{w})
   \Rewparam{(4,6,1)}{\head} \l y_1. w$:
   \[\begin{array}{llll}
       & (\l \pair{x}{y}. x(\id y))\esub{z}{\id} (\id \pair{\K}{w}) & \Rew{\betas} &
         (x (\id y)) \esub{\pair{x}{y}}{ \id \pair{\K}{w}}\esub{z}{\id}  \\
 \Rew{\betas} & 
                (x (\id y)) \esub{\pair{x}{y}}{z\esub{z}{\pair{\K}{w}}}\esub{z}{\id}  & \Rew{\exps} & (x (\id y)) \esub{\pair{x}{y}}{\pair{\K}{w}}\esub{z}{\id}  \\
        \Rew{\matchs} & 
      (x (\id  y)) \esub{x}{\K} \esub{y}{w}\esub{z}{\id}   & \Rew{\exps} & 
                                                                           (\K (\id y))  \esub{y}{w} \esub{z}{\id}  \\
       \Rew{\betas} & 
      (\l y_1. x_1)\esub{x_1}{\id y}  \esub{y}{w} \esub{z}{\id}  & \Rew{\exps} & 
       (\l y_1. \id y)  \esub{y}{w} \esub{z}{\id}\\
       \Rew{\betas} & (\l y_1. z\esub{z}{y})\esub{y}{w}\esub{z}{\id}&
       \Rew{\exps} & (\l y_1. y)\esub{y}{w}\esub{z}{\id} \\
       \Rew{\exps} & (\l y_1. w) \esub{z}{\id} & \Rew{\exps} & \l y_1. w \\
    \end{array} \]
\end{example}

 Head normal-forms may contain {\it
    (head)} clashes, \ie\ (head) terms not syntactically well-formed. For example, a
(list context containing a) pair applied to
 another term $\appctx{\slist}{\pair{u_1}{u_2}}v$, or a matching
 between a pair pattern and a
(list context containing a) function $t\esub{\pair{p_1}{p_2}}{\appctx{\slist}{\l p. u}}$, or its expanded form $\appctx{\slist_1}{\l \pair{p_1}{p_2}. t}\appctx{\slist_2}{\l p. u}$ are considered to be  (head) clashes.
 A rewriting system raising a warning (\ie\ a failure)
 when detecting a (head) clash  has been  defined  in~\cite{BKRDR15},
 allowing to restrict the attention to a smaller set of
 terms,  called
 {\it canonical} terms, that are
 intended to be the (head) clash-free terms that are not reducible by
 the  relation $\Rew{\head}$.
 Canonical terms  can be characterised inductively:
 \[ \begin{array}{llll}
      (\mbox{\deft{canonical forms}}) & \M & ::= & \l p. \M  \mid \pair{t}{t} \mid \M\esub{\pair{p_1}{p_2}}{\N}\mid \N \\ 
      (\mbox{\deft{pure canonical forms}}) & \N & ::= & x \mid \N t \mid \N\esub{\pair{p_1}{p_2}}{\N}\\ 
    \end{array} \]
  
 In summary, canonical terms and irreducible terms are related as follows:
  
  \begin{proposition}
    \label{p:M-not-reducible}
     $t \in \M$ if and only if $t$ is (head) clash-free and $t \notRew{\head}$.
  \end{proposition}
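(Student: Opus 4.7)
The plan is to prove the two implications separately, exploiting a tight correspondence between the grammars $\M,\N$ and the operational behaviour of $\Rew{\head}$. The intuition is that $\N$ captures terms with a ``head variable'' (spine form), while $\M\setminus\N$ captures terms whose outer constructor is either an abstraction or a pair, possibly wrapped inside a list context of pair-pattern closures whose arguments are themselves in $\N$. I would first package this intuition into two auxiliary shape lemmas, and then carry out each direction by a straightforward induction.

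The two shape lemmas, both proved by easy induction on the grammar productions, are: (Shape-$\N$) if $t\in\N$ then $t$ has a variable in its head position, hence $t$ is neither of the form $\appctx{\slist}{\l p.u}$ nor of the form $\appctx{\slist}{\pair{u_1}{u_2}}$; and (Shape-$\M\setminus\N$) conversely, if $t\in\M\setminus\N$ then $t$ has one of exactly these two shapes, for some list context $\slist$ built only out of pair-pattern closures with arguments in $\N$. These two lemmas encode precisely how the grammar of canonical forms mirrors the structure of head-redexes and head clashes.

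For $(\Rightarrow)$, I would proceed by mutual structural induction on the productions of $\M$ and $\N$. In each case, both irreducibility and clash-freeness are verified by inspecting the rules of Fig.~\ref{f:head-strategy} and ruling them out using the IH together with Shape-$\N$. For $\N t$, irreducibility follows from $\N\notRew{\head}$ plus the fact that $\N$ is not an abstraction-context, which excludes both $\betas$ and the application context-closure. For $\N\esub{\pair{p_1}{p_2}}{\N'}$, irreducibility uses that both sides are irreducible (IH), that the pattern is not a variable (so $\exps$ fails), and that $\N'$ is not a pair-in-list-context (so $\matchs$ fails); clash-freeness uses that $\N'$ is not a function-in-list-context. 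The cases $\l p.\M$ and $\pair{t}{t}$ follow from the IH together with the absence of any head-strategy rule that would break under them (no rule reduces inside a pair, and the abstraction rule merely propagates downward).

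For $(\Leftarrow)$, I would do structural induction on $t$ and case-analyse on its outer constructor. The variable, abstraction, and pair cases are immediate, the middle two using that head-clash-freeness and head-irreducibility are inherited by the relevant subterm. For $t=t_1 t_2$: $t_1$ inherits both properties, and the absence of a $\betas$-step plus head-clash-freeness at the top exclude both shapes appearing in Shape-$\M\setminus\N$ for $t_1$, forcing $t_1\in\N$ and hence $t\in\N\subseteq\M$. For $t=t_1\esub{x}{t_2}$: the $\exps$ rule combined with the context-closure on the left makes $t$ always reducible, so this case is vacuous. For $t=t_1\esub{\pair{p_1}{p_2}}{t_2}$: irreducibility forces $t_1\notRew{\head}$, $t_2\notRew{\head}$, and $t_2$ not of the shape $\appctx{\slist}{\pair{u_1}{u_2}}$, while clash-freeness forces $t_2$ not of the shape $\appctx{\slist}{\l p.u}$; by IH $t_1,t_2\in\M$, and Shape-$\M\setminus\N$ then yields $t_2\in\N$, so $t\in\M$. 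I expect the main obstacle to lie exactly in this last closure case: one must carefully disentangle the interaction between the left context-closure rule, the right context-closure rule, the top-level $\matchs$ rule, and the clash notion, and verify that their joint exclusion pins down the shape of $t_2$ precisely to $\N$.
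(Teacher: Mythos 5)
Your proof is correct; the paper in fact states this proposition without giving a proof, and your argument (mutual induction on the grammars of $\M$ and $\N$ for one direction, structural induction on $t$ for the other, mediated by the two shape lemmas relating $\N$ and $\M\setminus\N$ to the forms $\appctx{\slist}{\l p.u}$ and $\appctx{\slist}{\pair{u_1}{u_2}}$) is exactly the intended one. The only points you gloss over are minor and standard: the side conditions $\bv{\slist}\cap\fv{u}=\es$ of rules $\betas$ and $\matchs$ are always satisfiable up to $\alpha$-conversion, and one must read ``(head) clash-free'' as closed under the head positions explored by $\Rew{\head}$ (so that subterms on the head spine inherit it), which is the reading the paper's informal definition supports.
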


  \deft{Size} of canonical terms is defined as follows:
  $\size{x} ::= 0$, $\size{\pair{t}{u} } ::= 1$,
  $\size{\N t} ::= \size{\N} + 1$,
  $\size{\l p. \M } ::= \size{\M} +1$, and
  $\size{\M\esub{\pair{p_1}{p_2}}{\N}} ::= \size{\M} + \size{\N} + 1$.
  As an example, the terms $\l \pair{x}{y}. \pair{x}{\id}$
  and $\l x. y (\pair{x}{z} \id)$ are
  canonical forms of size
    $2$ while $x \Omega$ and $z \esub{\pair{z}{w}}{x
    \Omega}$ are pure canonical terms of size $1$ and $2$
    respectively. The term $\pair{x}{\id}
  w$ is none of them, and the term $\id
  x$ can reduce to the canonical term $x$.

          Finally, we define a term $t$ to be \deft{head-normalisable}
          if there exists a canonical form $u \in \M$ such that
          $t \Rewn{\pattern} u$. Moreover, $t$ is said to be
          \deft{head-terminating} if there exists a canonical form
          $u \in \M$ and an integer $k\geq 0$ such that
          $t \Rewparam{k}{\head} u$. The relation between the
            non-deterministic reduction relation $\Rew{\pattern}$ and
            the deterministic strategy $\Rew{\head}$ will be
            established latter, but we can already say that, while $t$
          head-terminating immediately implies $t$ head-normalisable,
          the completeness of the head-strategy
          w.r.t. to head-normalisation is not trivial
          (Thm.~\ref{t:characterisation-upper}).

\section{The \Upper\ Typing System}
\label{s:typing-system}
In this section we introduce our first typing system \Upper for the pattern calculus. 
We  start by defining the sets of \deft{types} and \deft{multi-set
  types}, given by means of the following grammars:
\[ \hfill \begin{array}{llll}
     \mbox{(Product Types)} & \PT & :: = & \prodt{\A_1}{\A_2} \\
     \mbox{(Types)} &  \sig & ::= & \typend  \mid \PT \mid \A \rew \sig \\
     \mbox{(Multi-set Types)} & \A & :: = & \mult{\sig_k}_{\kK} \\
   \end{array} \hfill     \]
 where $K$ is a (possibly empty) finite set of indexes, and a multi-set type is an unordered list of (not necessarily different) elements.
 We write $\size{\A}$ to denote the number of elements of the
 multi-set $\A$. For example
 $\mult{\typend, \ems \rew \typend, \typend}$ is a multi-set type, where  $\ems$ denotes the {\it empty} multi-set.  We write $\sqcup$ to denote multi-set union.
 Multi-set types are used to specify how programs consume terms:
 intuitively, the
 empty multi-set is assigned to  terms that are  erased during (head)
 reduction, while duplicable terms are necessarily typed with non-empty multi-sets. As usual the arrow type is right-associative.

 A product type, representing the type of a pair, is defined as the product of two (possibly empty) multi-sets of types. This
 formulation  of product types turns out to
 be a key tool to reason about quantitativity,
 and constitutes an essential difference with the
product types proposed in~\cite{BKRDR15}. Indeed,
in  op. cit.,   multi-set types carry two completely different meanings: being
a pair (but not necessarily a pair to be duplicated), or being a
duplicable term (but not necessarily a pair).  Our
specification of products can then be
interpreted as the use of 
the exponential isomorphism $! ( A \bindnasrepma B ) \equiv ! A
\otimes ! B $ of multiplicative exponential linear
logic~\cite{Girard87}.
 
A \deft{typing context} $\Gam$ is a map from variables to
multi-set types, such that only finitely many variables are not mapped to the empty multi-set $\emptymultiset$. We write $\dom{\Gam}$ to denote the domain of $\Gam$, which is the set $\{x \mid \Gam(x) \neq \emptymultiset\}$. We may write $\Gam \# \Del$ if and only if $\dom{\Gam}$ and $\dom{\Del}$ are disjoint. Given typing contexts $\{\Gam_i\}_\iI$ we write $\inter_\iI \Gam_i$ for the context that maps $x$ to $\msunion_\iI \Gam_i(x)$. One particular case is $\Gam \inter \Del$. We sometimes write $\Gam;\Del$ instead of $\Gam \inter \Del$, when $\Gam \#\Del$, and we do not distinguish $\Gam;x:\ems$ from $\Gam$. The typing context $\Gam|_{p}$ is such that $\Gam|_{p} (x) = \Gam(x)$, if $x\in \var{p}$ and $\ems$ otherwise. The typing context $\cmin{\Gam}{\vars}$ is defined by  $(\cmin{\Gam}{\vars})(x) = \Gam(x)$ if $x\notin \vars$ and $\ems$ otherwise.  

The \deft{type assignment system} \Upper is given in
Fig.~\ref{f:typing-rules-unc} and can be
  seen as a natural extension of Gardner's system~\cite{Gardner94} to
  explicit operators, pairs and product types. It assigns types (resp. multi-set
types) to terms, using an auxiliary (sub)system that assigns multi-set
types to patterns. We use $\Phi \tri \seq{\Gam}{t:\sig}$ (resp. $\Phi
\tri \seq{\Gam}{t:\A}$) to denote \deft{term type derivations} ending
with the sequent $\seq{\Gam}{t:\sig}$ (resp. $\seq{\Gam}{t:\A}$), and
$\Pi \tri \Gam \pder p : \A$ to denote \deft{pattern type derivations}
ending with the sequent $\Gam \pder p : \A$.      The size of a
derivation $\tderiv$, denoted by $\sz{\tderiv}$, is the number of all the typing rules used in
$\tderiv$ except $\many$\footnote{An equivalent type system
    can be presented without the $\many$ rule, as the system of
    \cite{BKRDR15}. However, the current presentation allows
    one to have more elegant inductive proofs.} (this is particularly appropriate
to show the substitution lemma).

\begin{figure}[h]
\begin{framed}
\begin{center}
\[ \hfill \begin{array}{c}
     \infer[(\trvarpat)]{\phantom{AAA}}{x:\A  \pder x:\A}      \sep  \sep
     \infer[(\trpairpat)]{\Gam \pder p: \A \sep \Delta \pder q:\B \sep p \# q }
       { \Gam \inter  \Delta \pder \pair{p}{q} :
       \mult{\prodt{\A}{\B}} }\  \\\\
    \hline \\
       \infer[(\ax)]
      {  } 
     {\seq{x: \mult{\sig}}{  x: \sig} }    \sep \sep \sep
     \infer[(\many)]{(\seq{\Gamk}{t: \sigk})_{\kK}}
       {\seq{\inter_{\kK}\Gamk}{t: \mult{\sigk}_{\kK}}} \\ \\ 
       \infer[(\introarrow)]{ \seq{\Gam}{t : \sig}\sep
       \Gam|_{p} \pder  p:\A}  
      {\seq{\cmin{\Gam}{\var{p}}}{\l p.t: \A \rew  \sig}} \sep \sep       
\infer[(\app)]{ \seq{\Gam}{t: \A \rew \sig} \sep\sep 
       \seq{\Del}{u : \A }}
       {\seq{\Gam \inter \Del}{ t\,u: \sig}}      \\ \\ 
\infer[(\trpair)]{\seq{\Gam}{t:\A} \sep
       \seq{\Del}{u:\B}}
            {\seq{\Gam \inter\Del}{\pair{t}{u} : \prodt{\A}{\B}}} \sep\sep 
       \infer[(\trsub)]{\seq{\Gam}{t:\sig} \sep
       \Gam|_{p} \pder p:\A \sep
       \seq{\Del}{u:\A}}
       {\seq{(\cmin{\Gam}{\var{p}}) \inter  \Del}{t[p/u]:\sig}}
\end{array} \hfill \]    
\end{center}
\caption{Typing System \Upper}
\label{f:typing-rules-unc}
\end{framed}
\end{figure}

Note that, when assigning
types (multi-set types) to terms, we only allow the introduction of
multi-set types on the right through the $\many$ rule.

Most of the rules for terms are straightforward. Rule $\trsub$ is used
to type the explicit matching operator $t\esub{p}{u}$ and can be seen
as a combination of rules $\app$ and $\introarrow$. 
Rule $\trvarpat$ is used when the pattern is a
variable $x$. Its multi-set type is
the type declared for $x$ in the typing
context. Rule $\trpairpat$  is used when the pattern has a product
type, which means that the pattern will be matched with a pair. The
condition $p \# q$ ensures linearity of patterns. Note that any pair
term can be typed, in particular, with $\prodt{\emul}{\emul}$.

The system enjoys the following key properties, easily  proved by
induction:
\begin{toappendix}
\begin{lemma}
    \label{l:relevance+clash-free-upper}
    Let $\Phi \tri \seq{\Gam}{t:\sig}$. Then, 
    \begin{enumerate}
    \item \begin{toappendix}
        ({\bf Relevance}) \label{relevance-upper}
        $\dom{\Gam} \subseteq \fv{t}$.
        \end{toappendix}
    \item \begin{toappendix}
      ({\bf Clash-Free}) \label{clash-free-upper}
      $t$ is (head) clash-free.
      \end{toappendix}
    \item \begin{toappendix}
        \label{i:qsr} ({\bf Upper Subject Reduction})
      $t \Rew{\pattern} t'$ implies there is $\Phi' \tri \seq{\Gam}{t':\sig}$ s.t. $\sz{\Phi} \geq
      \sz{\Phi'}$, and $t \Rew{\head} t'$ implies
      there is $\Phi' \tri \seq{\Gam}{t':\sig}$ s.t.  $\sz{\Phi} >
      \sz{\Phi'}$.
      \end{toappendix}
    \item \begin{toappendix}
        \label{i:se} ({\bf Upper Subject Expansion})
      $t' \Rew{\pattern} t$ implies  there is  $\Phi' \tri \seq{\Gam}{t':\sig}$
      such that $\sz{\Phi'} \geq
      \sz{\Phi}$ and 
      $t' \Rew{\head} t$ implies  there is  $\Phi' \tri \seq{\Gam}{t':\sig}$
      such that $\sz{\Phi'} >
      \sz{\Phi}$.
      \end{toappendix}
  \end{enumerate}
\end{lemma}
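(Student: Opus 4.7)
The plan is to establish the four parts in order: (1) and (2) by induction on the typing derivation $\Phi$, while (3) and (4) proceed by induction on the reduction step supported by auxiliary (anti-)substitution lemmas.

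For \textbf{Relevance}, the base case $\ax$ gives $\dom{x:\mult{\sig}} = \set{x} = \fv{x}$; each of the other rules forms the conclusion context via $\inter$ or via $\cmin{\cdot}{\var{p}}$, exactly mirroring how $\fv{\cdot}$ is formed on the term side (union, minus the bound pattern variables). For \textbf{Clash-Free}, I would again induct on $\Phi$ and use the fact that types constrain syntactic shape: a pair is typable only at a product type (via $\trpair$), so $\appctx{\slist}{\pair{u_1}{u_2}}v$ cannot be typed because the function part of $\app$ demands an arrow type; dually, a $\trsub$ rule with pattern $\pair{p_1}{p_2}$ forces the argument to be typed with a multi-set containing a product (via $\trpairpat$), which forbids its head from being a $\lambda$-abstraction of the form $\appctx{\slist}{\l p. v}$, as the latter is typed by $\introarrow$ at an arrow type.

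For \textbf{Subject Reduction}, I would induct on $t \Rew{\pattern} t'$, with base cases being the three root rules. The $\betas$ case analyses the derivation of $\appctx{\slist}{\l p. t} u$: the top $\app$ and the $\introarrow$ located at the head of $\appctx{\slist}{\cdot}$ merge into a single $\trsub$ rule yielding a derivation of $\appctx{\slist}{t\esub{p}{u}}$ of size one smaller. The $\matchs$ case similarly reorganises a $\trsub$ at a product type combined with $\trpair$/$\trpairpat$ into two simpler $\trsub$ rules, strictly decreasing the size. The $\exps$ case rests on a \textbf{Substitution Lemma}: if $\Phi \tri \seq{\Gam; x:\A}{t:\sig}$ and $\Psi \tri \seq{\Del}{u:\A}$ with $\A = \mult{\sig_k}_{\kK}$, then there is $\Phi' \tri \seq{\Gam \inter \Del}{t\isub{x}{u}:\sig}$ with $\sz{\Phi'} = \sz{\Phi} - |K| + \sz{\Psi}$, proved by induction on $\Phi$, consuming one $\ax$ rule per occurrence of $x$. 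Contextual closure then follows by the outer induction: the weak inequality for $\Rew{\pattern}$ covers the case where the redex sits in a position typed via $\many$ with empty index set (the sub-derivation is absent, so $\Phi$ is unchanged), while the more restrictive head-context rules force the redex to a position whose sub-derivation contributes to $\sz{\Phi}$, giving the strict inequality for $\Rew{\head}$.

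For \textbf{Subject Expansion}, the dual direction requires an \textbf{Anti-Substitution Lemma}: from $\Phi' \tri \seq{\Pi}{t\isub{x}{u}:\sig}$ one extracts a multi-set $\A$, contexts $\Gam,\Del$ with $\Pi = \Gam \inter \Del$, and derivations $\Phi \tri \seq{\Gam; x:\A}{t:\sig}$ and $\Psi \tri \seq{\Del}{u:\A}$ with the converse size relation. The main obstacle is the erasure case $x \notin \fv{t}$: here $t\isub{x}{u} = t$ so $u$ does not appear in $\Phi'$ at all; we set $\A = \emptymset$ and obtain $\Psi$ by applying $\many$ with empty index set, which produces a derivation whose contribution to $\sz{\cdot}$ is zero (recall $\many$ is not counted). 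With this lemma in hand, the three base cases of $\Rew{\pattern}$ are symmetric to part (3), and contextual closure proceeds by induction, again distinguishing head from non-head positions to recover the strict inequality for $\Rew{\head}$.
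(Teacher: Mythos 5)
Your proposal is correct and follows essentially the same route as the paper: induction on the typing derivation for relevance and clash-freeness (using the product-versus-arrow discrimination of types to exclude both kinds of clashes), and induction on the reduction step for parts (3) and (4), supported by substitution and anti-substitution lemmas with exactly the size accounting $\sz{\Phi_{t\isub{x}{u}}} = \sz{\Phi_t} + \sz{\Phi_u} - \size{\A}$ that the paper uses. Your observations that the non-strict inequality for $\Rew{\pattern}$ arises only from redexes sitting under a nullary $\many$ (empty multi-set positions), and that the head-context rules confine redexes to positions whose sub-derivation is present (the pair-pattern case forcing a singleton multi-set), are precisely the points the paper's proof turns on.
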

\end{toappendix}

\begin{proof} All proofs are by induction on $\Phi$,
    item~\ref{i:qsr} (resp. item~\ref{i:se}) also uses a substitution
    (resp. anti-substitution) lemma (see~App.~\ref{a:typing-system} for
    details).
  \end{proof}

Although the system in~\cite{BKRDR15} already characterises
  head-normalisation in the pattern calculus, it does not provide
  upper bounds for the number of steps of the head strategy
  until its canonical form. This is mainly due to the fact that
  the reduction system in~\cite{BKRDR15} does not always
  decrease the measure of the typed terms, even when reduction is performed in the so-called {\it typed} occurrences. We can recover
  this situation by the following soundness and completeness result:

  \begin{theorem}[Characterisation of Head-Normalisation and Upper Bounds]
    \label{t:characterisation-upper}
    Let $t$ be a term in the pattern calculus. Then (1) $t$ is
    typable in system \Upper iff (2) $t$ is head-normalisable iff (3)  $t$ is
    head-terminating. Moreover, if $\Phi \tri \seq{\Gam}{t:\sig}$,
    then the head-strategy terminates on $t$ in at most $\sz{\Phi}$ steps.
\end{theorem}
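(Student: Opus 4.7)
The plan is to prove the cycle $(1)\Rightarrow(3)\Rightarrow(2)\Rightarrow(1)$, with the quantitative bound extracted from the first implication. The implication $(3)\Rightarrow(2)$ is immediate: since $\Rew{\head}$ is a subrelation of $\Rew{\pattern}$, any finite head-reduction to a canonical form already witnesses head-normalisability.

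For $(1)\Rightarrow(3)$ together with the bound, I would exploit the \emph{strict} part of Upper Subject Reduction (second half of Lemma~\ref{l:relevance+clash-free-upper}.\ref{i:qsr}): each $\Rew{\head}$-step strictly decreases the size of the typing derivation. Since $\sz{\Phi}$ is a non-negative integer, any maximal $\Rew{\head}$-sequence starting from $t$ must terminate after at most $\sz{\Phi}$ steps, producing a term $t'$ with $t'\notRew{\head}$. Iterated Upper Subject Reduction yields a typing derivation for $t'$, so the clash-freeness property (Lemma~\ref{l:relevance+clash-free-upper}.\ref{clash-free-upper}) applies and Proposition~\ref{p:M-not-reducible} then ensures $t'\in\M$. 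Hence $t$ is head-terminating in at most $\sz{\Phi}$ steps, and both the qualitative statement and the bound follow.

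For $(2)\Rightarrow(1)$, I would first establish that every canonical form is typable, by a joint induction on the grammars of $\M$ and $\N$. The base case for pairs uses $\trpair$ applied to two instances of $\many$ with an empty index set, giving $\vdash\pair{t_1}{t_2}:\prodt{\emul}{\emul}$; the base case for variables uses $\ax$ with an arbitrary type. For the inductive cases on pure canonical forms, I would strengthen the statement to: every $\N$ is typable with \emph{any} type $\sig$ in some context. The application case $\N' t$ follows from the inductive hypothesis on $\N'$ at type $\emul\rew\sig$ and $\many$-typing $t$ with $\emul$; the explicit-matching case types the pattern with a singleton product-type via $\trpairpat$ and uses the inductive hypothesis to give $\N''$ the same singleton multi-set. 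The remaining cases for $\M$ (abstraction, explicit matching, and the inclusion of $\N$) are analogous. Once typability of the canonical form $u$ reached from $t$ is secured, iterated applications of Upper Subject Expansion (Lemma~\ref{l:relevance+clash-free-upper}.\ref{i:se}) transfer a typing from $u$ all the way back to $t$.

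The main obstacle lies in the typability of canonical forms, because the pattern-typing rule $\trpairpat$ forces every pair pattern to be typed by a singleton multi-set of a product type, which must coincide with the multi-set type assigned to the argument of the enclosing explicit-matching operator. The induction therefore has to be strengthened so that the types produced for pure canonical forms are of the right shape, and typing contexts must be decomposed carefully across the components of nested pair patterns so as to respect these structural constraints. The rest of the argument is standard once the two subject-level lemmas and Proposition~\ref{p:M-not-reducible} are at hand.
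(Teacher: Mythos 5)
Your proposal is correct and follows essentially the same route as the paper: $(1)\Rightarrow(3)$ via the strict part of Upper Subject Reduction (which also yields the $\sz{\Phi}$ bound), $(3)\Rightarrow(2)$ by inclusion of $\Rew{\head}$ in $\Rew{\pattern}$, and $(2)\Rightarrow(1)$ from typability of canonical forms plus Upper Subject Expansion for $\Rew{\pattern}$. You merely spell out details the paper leaves implicit, namely the use of clash-freeness together with Proposition~\ref{p:M-not-reducible} to conclude that the irreducible term is canonical, and the strengthened induction needed to type pure canonical forms at arbitrary types.
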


\begin{proof} The statement  (1) $\Rightarrow$ (3) holds by Upper Subject Reduction for $\Rew{\head}$. The statement (3) $\Rightarrow$ (2) is
    straightforward since $\Rew{\head}$ is included in $\Rew{\pattern}$.
    Finally,  the statement  (2) $\Rightarrow$ (1) holds
    by the fact that canonical terms are typable
    (easy), and by using Upper Subject
  Expansion for $\Rew{\pattern}$. \end{proof}

The previous upper bound result is especially possible thanks to the
Upper Subject Reduction Property, stating in particular that reduction
$\Rew{\head}$ {\it strictly} decreases the size of typing
derivations. It is worth noticing that the reduction relation
in~\cite{BKRDR15} does not enjoy this property, particularly in the
case of the rule $t\esub{p}{v}u \Rew{} (tu)\esub{p}{v} $, which is a
permuting conversion rule, (slightly) changing the structure of the type derivation, but not its size.

\section{Towards a Relational Model for the Pattern Calculus}
\label{s:model}
Denotational and operational semantics have tended to abstract
quantitative information such as time and space as computational
resource consumption.  Since the invention of Girard's linear
logic~\cite{Girard87}, where formulas are interpreted as resources,
quantitative interpretation of programs, such as relational
models~\cite{BE01,BucciarelliEM12,Carvalho07}, have been naturally
defined and studied, by following the simple idea that multi-sets are
used to record the number of times a resource is consumed.  Thus,
relational models for the $\l$-calculus use multi-sets to keep track
of how many times a resource is used during a computation.

In this brief section we emphasize a semantical result that is
implicit in the previous section. Since relational models are often
presented by means of typing systems~\cite{PPR17,Ong17},
our system \Upper\ suggests a
quantitative model for our pair pattern calculus  in the following way.
Indeed, consider a term $t$
  such that 
  $\fv{t} \subseteq \{x_1, \ldots, x_n\}$, in which case we say that
  the list $\vec{x}= (x_1, \ldots, x_n)$ is \deft{suitable} for $t$. Then,
given $\vec{x} = (x_1, \ldots, x_n)$ suitable for $t$, define the interpretation
of a term $t$ for $\vec{x}$ as
\[ \hfill \interpret{t}_{\vec{x}} = \{
  ((\A_1, \ldots, \A_n), \sig) \mid\
  \mbox{there exists } \Phi \tri x_1:\A_1, \ldots,
    x_n:\A_n \vdash t:\sig \} \hfill \] A straightforward corollary
    of items \ref{i:qsr} and \ref{i:se} of Lem.~\ref{l:relevance+clash-free-upper} is
that $t =_{\pattern} u$ implies
$\interpret{t}_{\vec{x}} = \interpret{u}_{\vec{x}}$, where
$=_{\pattern}$ is the equational theory generated by the reduction
relation $\Rew{\pattern}$. 

\section{The \Exact\ Typing System}
\label{s:tight-typing-system}
In this section we introduce our second typing system
\Exact\ for the  pattern calculus, which is obtained
by refining system \Upper\ presented in Sec.~\ref{s:typing-system}.
\[ \hfill \begin{array}{llll}
     \mbox{(Product Types)} & \PT & :: = & \prodt{\A_1}{\A_2} \\
     \mbox{(Tight Types)} &  \tight & ::= & \typendn \mid \typendm  \\
     \mbox{(Types)} &  \sig & ::= & \tight  \mid \PT \mid \A \rew \sig \\
     \mbox{(Multi-set Types)} & \A & :: = & \mult{\sig_k}_{\kK} \\
   \end{array} \hfill     \]
Types in $\tight$, which can be seen as a refinement of the base
  type $\typend$, denote the so-called tight types. The constant
$\typendm$ denotes the type of any term reducing to a canonical form, while $\typendn$
  denotes the type of any term reducing to a pure canonical
  form.  We write $\istight{\sig}$, if $\sig$ is of the form
$\typendm$ or $\typendn$ (we use  $\typend$ to denote either form). We extend this notion to multi-sets of types and typing contexts as expected (that is, $\istight{\mult{\sig_i}_\iI}$ if $\istight{\sig_i}$ for all $\iI$, and $\istight{\Gam}$ if $\istight{\Gam(x)}$, for all $x \in \dom{\Gam}$. 
   
The crucial idea behind the grammar of types is to distinguish between
{\it consuming} constructors typed with standard types, and {\it
  persistent} constructors typed with tight types, as hinted in the
introduction. A constructor is consuming (resp. persistent) if it is
consumed (resp. not consumed) during head-reduction.  Indeed, the pair
constructor is consumed (on the pattern side as well as on the term
side) during the execution of the pattern matching rule
$\matchs$. Otherwise, patterns and pairs are persistent, and they do
appear in the normal form of the original term.  This dichotomy
between consuming and persistent constructors is reflected in the
typing system by using different typing rules to type them,
notably for the abstraction, the application, the pair
terms and the pair patterns.

The \deft{type assignment system} \Exact, given in
Fig.~\ref{f:typing-rules}, introduces
  tuples/counters in the typing judgments, where the intended meaning
  is explained below. We use
$\Phi \tri \amuju{b}{e}{m}{f}{\Gam}{t:\sig}$ (resp.
$\Phi \tri \amuju{b}{e}{m}{f}{\Gam}{t:\A}$) to denote \deft{term type
  derivations} ending with the sequent
$\amuju{b}{e}{m}{f}{\Gam}{t:\sig}$
(resp. $\amuju{b}{e}{m}{f}{\Gam}{t:\A}$), and
$\Pi \tri \Gam \pder^{(e,m,f)} p : \A$ to denote \deft{pattern type
  derivations} ending with the sequent $\Gam \pder^{(e,m,f)} p : \A$.
Often in examples, we will use the notation $\Phi^{(b,e,m,f)}$
(resp. $\Pi^{(e,m,f)}$) to refer to a term derivation (resp. pattern
derivation) ending with a sequent annotated with indexes $(b,e,m,f)$
(resp. $(e,m,f)$).

As mentioned in the introduction, exact bounds
  can only be   extractable from {\it minimal} derivations. In our
  framework the notion of minimality is implemented by
  means of tightness~\cite{AccattoliGK18}.
  We say that a derivation
$\Phi \tri \amuju{b}{e}{m}{f}{\Gam}{t:\sig}$ (resp.
$\Phi \tri \amuju{b}{e}{m}{f}{\Gam}{t:\A}$) is \deft{tight}, denoted
by $\istight{\Phi}$, if and only if $\istight{\Gam}$ and
$\istight{\sig}$ (resp.  $\istight{\A}$).  The size of a derivation
$\tderiv$ is defined as in System \Upper.
  
The tuple $(b,e,m,f)$ in type (multi-set type) derivations is used to
compute bounds for head-reductions sequences reaching a head normal
form. That is, we will show that if $\Gam \vdash^{(b,e,m,f)} t:\sig$
is tight, then $t \Rewparam{(b,e,m)}{\head} v$, where $b$ is the
number of $\betas$-steps, $e$ the number of $\exps$-steps, $m$ the
number of $\matchs$-steps and $f$ is the size of the head
normal-form $v$.  The tuple $(e,m,f)$ in pattern derivations gives the
potentiality of pattern $p$. That is, if $\Gam
\pder^{(e,m,f)} p:\A$ then the pattern $p$ represents the generation of $e$
substitution steps, $m$ matching steps and $f$ blocked patterns
during a head evaluation.

\begin{figure}[!h]
\begin{framed}
\begin{center}
\[ \begin{array}{c}
 \infer[(\trvarpat)]{\phantom{AAA}}{x:\A  \pder^{(1,0,0)} x:\A} 
\\ \\
\infer[(\trpairpat)]{\Gam \pder^{(e_p,m_p,n_p)} p: \A \sep \Delta \pder^{(e_q,m_q,n_q)} q:\B \sep p \# q }
       { \Gam \inter  \Delta \pder^{(e_p+e_q,1+m_p+m_q,n_p+n_q)} \pair{p}{q} :
       \mult{\prodt{\A}{\B}} }\  \\\\
       \infer[(\pattn)]{\Gam \subseteq \var{\pair{p}{q}} \sep
       \istight{\Gam}}{\Gam  \pder^{(0,0,1)}  \pair{p}{q}: \mult{\typendn} }  \\ \\
    \hline \\
       \infer[(\ax)]
      {  } 
      {\amuju{0}{0}{0}{0}{x: \mult{\sig}}{  x: \sig} } \\ \\ 
       \infer[(\introarrow)]{ \amuju{b_t}{e_t}{m_t}{f_t}{\Gam}{t : \sig}\sep
       \Gam|_{p} \pder^{(e_p,m_p,f_p)}  p:\A}  
      {\amuju{b_t+1}{e_t+e_p}{m_t + m_p}{f_t + f_p}{\cmin{\Gam}{\var{p}}}{\l p.t: \A \rew  \sig}}\\ \\ 
       \infer[(\ruletight)]{\amuju{b}{e}{m}{f}{\Gam}{t:\tight} \sep \istight{\Gam|_{p}}}
     {\amuju{b}{e}{m}{f+1}{\cmin{\Gam}{\var{p}}}{\l p.t:\typendm}}  \\ \\
\infer[(\many)]{(\amuju{b_k}{e_k}{m_k}{f_k}{\Gamk}{t: \sigk})_{\kK}}
       {\amuJu{+_{\kK}b_k}{+_{\kK}e_k}{+_{\kK} m_k}{+_{\kK}f_k}{\inter_{\kK}\Gamk}{t: \mult{\sigk}_{\kK}}} \\ \\
       
\infer[(\app)]{ \amuju{b_t}{e_t}{m_t}{f_t}{\Gam}{t: \A \rew \sig} \sep\sep 
       \amuJu{b_u}{e_u}{m_u}{f_u}{\Del}{u : \A }}
       {\amuju{b_t+b_u}{e_t+e_u}{m_t+m_u}{f_t+f_u}{\Gam \inter \Del}{ t\,u: \sig}}  \\ \\
\infer[(\appn)]{ \amuju{b_t}{e_t}{m_t}{f_t}{\Gam}{t: \typendn } }
       {\amuju{b_t}{e_t}{m_t}{f_t+1}{\Gam  }{ t\,u: \typendn}}  \\ \\

       \infer[(\trpair)]{\amuju{b_t}{e_t}{m_t}{f_t}{\Gam}{t:\A} \sep
       \amuju{b_u}{e_u}{m_u}{f_u}{\Del}{u:\B}}
       {\amuju{b_t+b_u}{e_t+b_u}{m_t+m_u}{f_t+f_u}{\Gam \inter\Del}{\pair{t}{u} : \prodt{\A}{\B}}} \\ \\
       \infer[(\tnormalpair)]{\phantom{AA}}
       {\amuju{0}{0}{0}{1}{}{\pair{t}{u} : \typendm}} \\\\ 
       \infer[(\trsub)]{\amuju{b_t}{e_t}{m_t}{f_t}{\Gam}{t:\sig} \sep
       \Gam|_{p} \pder^{(e_p,m_p,f_p)} p:\A \sep
       \amuJu{b_u}{e_u}{m_u}{f_u}{\Del}{u:\A}}
       {\amuju{b_t+b_u}{e_t+e_u+e_p}{m_t+m_u+ m_p}{f_t+f_u
       + f_p  }{(\cmin{\Gam}{\var{p}}) \inter  \Del}{t[p/u]:\sig}}
\end{array} \]    
\end{center}
\caption{Typing System \Exact}
\label{f:typing-rules}
\end{framed}
\end{figure}

We now give some
  intuition behind the typing rules in
  Fig.~\ref{f:typing-rules}, by addressing
 in particular the
    consumable/persistent paradigm.

\noindent $\bullet$ Rule $\ax$: Since $x$ is itself
a head normal-form, it will not generate any $\betas$, $\exps$ or $\matchs$ steps,
  and its size is $0$.
  
\noindent $\bullet$ Rule $\introarrow$: Used to type abstractions
$\lambda p.t$ to be applied (\ie\ consumed), therefore it has a
functional type $\A \rew \sig$. Final indexes of the abstraction
are obtained from the ones of the body and the pattern, and $1$ is
added to the first index since the abstraction will be consumed by a $\betas$-reduction step.
  
  \noindent $\bullet$ Rule $\ruletight$: Used to type abstractions
  $\lambda p.t$ that are not going to be applied/consumed (they are persistent).
  Only the last index (size of the normal form)
  is incremented by one since the abstraction remains in the normal
  form (the abstraction is persistent). Note that both body $t$ and
  variables in $p$ should be typed with a tight type. 
  
\noindent $\bullet$ Rule $\app$: Types applications $tu$ where $t$ will eventually
  become an abstraction, and thus the application constructor
    will be consumed. Indexes for $tu$ are exactly
  the sum of the indexes for $t$ and $u$. Note that we do not need to
  increment the counter for $\betas$ steps, since this was already
  taken into account in the $\introarrow$ rule.
  
\noindent $\bullet$ Rule $\appn$: Types applications $tu$ where $t$ is neutral,
  therefore will never become an abstraction, and the application
  constructor becomes persistent. Indexes are
  the ones for $t$, adding one to the (normal term) size to count
  for the (persistent) application.
  
  \noindent $\bullet$ Rule $\trpair$: Types pairs consumed
    during some matching step. We add the indexes for the
two components of the pair without incrementing the number of
$\matchs$ steps, since it is incremented when
 typing a consuming abstraction, with rule $\introarrow$.
  
 \noindent $\bullet$ Rule $\tnormalpair$: Used to type pairs that are
not consumed in a matching step (they are
persistent), therefore appear in the head normal-form.
Since the pair is already a head normal-form its indexes are zero except for the
size, which counts the pair itself.
  
\noindent $\bullet$ Rule $\trsub$: Note that we do not
need separate cases for consuming and persistent explicit matchings, since in both cases
typable occurrences of $u$ represent potential reduction steps for
$u$, which need to be taken into account in the final counter of the
term.

\noindent $\bullet$ Rule $\trvarpat$: Typed variables always
generate one $\exps$ and zero $\matchs$ steps, even when erased.

\noindent $\bullet$ Rule $\trpairpat$: Used when the pattern has a product type, which means that the pattern will be matched with a pair. We add the counters for the two components of the pair and increment the counter for the $\matchs$ steps.

\noindent $\bullet$ Rule $\pattn$: Used when the pattern has a tight
type, which means that it will not be matched with a pair and
therefore will be blocked (it is persistent). This kind of pairs generate
zero $\exps$ and $\matchs$ steps, and will contribute with one blocked
pattern to the size of the normal form.

The system enjoys the following key properties, easily  proved by induction:
\begin{lemma}
    \label{l:relevance+clash-free}
    Let $\Phi \tri \amuju{b}{e}{m}{f}{\Gam}{t:\sig}$. Then, 
    \begin{enumerate}
    \item $(${\bf Relevance}$)$ \label{relevance}
      $\dom{\Gam} \subseteq \fv{t}$.
    \item $(${\bf Clash-Free}$)$ \label{clash-free}
    $t$ is (head) clash-free. 
  \end{enumerate}
\end{lemma}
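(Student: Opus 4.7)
The plan is to prove both items simultaneously by induction on the structure of the derivation $\Phi$, in parallel with an auxiliary induction on pattern derivations $\Pi \tri \Gam \pder^{(e,m,f)} p : \A$ for the subterm cases where a pattern is involved (abstractions and explicit matchings). The proof mirrors that of Lemma~\ref{l:relevance+clash-free-upper} for system \Upper, the only novelty being the presence of the persistent typing rules ($\ruletight$, $\appn$, $\tnormalpair$, $\pattn$) and the counters, neither of which affects the qualitative reasoning.

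\textbf{Relevance.} I would proceed by case analysis on the last rule of $\Phi$. The base cases are $\ax$, where $\Gam = x{:}\mult{\sig}$ and $\fv{x} = \{x\}$, and $\tnormalpair$, where $\Gam$ is empty. For $\introarrow$ and $\ruletight$, the inductive hypothesis gives $\dom{\Gam} \subseteq \fv{t}$ for the premise typing $t$, and since the conclusion carries the context $\cmin{\Gam}{\var{p}}$ and $\fv{\lambda p.t} = \fv{t}\setminus\var{p}$, the inclusion follows. For $\app$, $\appn$, $\trpair$ and $\many$, the conclusion context is an intersection of premise contexts and $\fv{\cdot}$ distributes as a union, so the inclusion follows from the IH on each premise. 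For $\trsub$, I would combine the arguments of $\introarrow$ and $\app$. The one subtlety is that the patterns in the rules $\introarrow$, $\ruletight$ and $\trsub$ carry a typing context $\Gam|_p$ whose domain is contained in $\var{p}$; a straightforward auxiliary induction on pattern derivations $\Pi$ establishes $\dom{\Gam} \subseteq \var{p}$, which ensures that restricting to $\var{p}$ and then removing $\var{p}$ in the conclusion is sound.

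\textbf{Clash-freeness.} Recall from the text that a head clash is either a (list-context of a) pair in applicative position, $\appctx{\slist}{\pair{u_1}{u_2}}\,v$, or a matching of a pair pattern against a (list-context of a) function, $t\esub{\pair{p_1}{p_2}}{\appctx{\slist}{\lambda p.u}}$. The proof is again by induction on $\Phi$. All cases except $\app$, $\appn$ and $\trsub$ are immediate from the IH applied to the subterm premises. For $\app$ (resp.\ $\appn$), if the head of the application were a pair nested under a list context, unfolding that list context yields a chain of $\trsub$ rules above a $\trpair$ or $\tnormalpair$ node, in which case the final type of $t$ would be a product type or $\typendm$, contradicting the requirement that $t$ have arrow type $\A\rew\sig$ (resp.\ $\typendn$); so the head of the application cannot be such a pair and the term is clash-free at the root, while clash-freeness in $u$ follows from the IH. For $\trsub$ with $p = \pair{p_1}{p_2}$, the pattern premise $\Gam|_p \pder^{(e_p,m_p,f_p)} p : \A$ can only be derived by $\trpairpat$ or $\pattn$, forcing $\A$ to be either a singleton $\mult{\prodt{\cdot}{\cdot}}$ or $\mult{\typendn}$; by the same unfolding argument applied to the premise typing $u$ with type $\A$, this type is incompatible with $u$ being a (list-context of a) $\lambda$-abstraction, so no clash arises at the root, and the IH on $t$ and $u$ handles the subterms.

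The main obstacle is the list-context unfolding argument: I would isolate it as a small auxiliary observation stating that if $\amuJu{b}{e}{m}{f}{\Gam}{\appctx{\slist}{r}}{\A}$ is derivable, then the derivation decomposes into a $\trsub$-chain above a derivation typing $r$ with (a component of) $\A$. With this observation, incompatibility between product/tight types and arrow types (respectively between arrow types and pair/tight types) rules out all head clashes uniformly. The counters and the distinction between consuming and persistent rules play no role in this proof, as clash-freeness is a purely structural property.
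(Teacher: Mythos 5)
Your overall strategy---simultaneous induction on the derivation, relevance via the decomposition $\Gam = (\cmin{\Gam}{\var{p}});\Gam|_{p}$, and clash-freeness via type incompatibilities propagated through a $\trsub$-chain decomposition of list contexts---is exactly the paper's (compare the proof of Lemma~\ref{l:relevance+clash-free-upper}), and the relevance half is complete. But there is a gap in the clash-freeness half: your recollection of the clash forms omits the third one listed in the paper, the \emph{expanded} clash $\appctx{\slist_1}{\l \pair{p_1}{p_2}. t}\,\appctx{\slist_2}{\l q. v}$, i.e.\ an abstraction over a pair pattern applied to a (list context of an) abstraction. Consequently your $\app$ case only excludes a pair in function position and then declares the term ``clash-free at the root'', which does not follow: the head could be $\appctx{\slist}{\l \pair{p_1}{p_2}. t'}$ and the argument a (list context of an) abstraction, and nothing in your argument rules this out. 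This is precisely the configuration on which the paper's proof of the \Upper\ version spends most of its $\app$ case.

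The repair uses machinery you already set up. If the head of an $\app$-typed application is $\appctx{\slist}{\l p. t'}$ with $p$ a pair pattern, the $\introarrow$ premise types $p$ by $\trpairpat$ or $\pattn$, forcing the argument type $\A$ to be $\mult{\prodt{\A_1}{\A_2}}$ or $\mult{\typendn}$; since a (list context of an) abstraction can only receive an arrow type (by $\introarrow$) or $\typendm$ (by $\ruletight$)---never a product type nor $\typendn$---the argument cannot be of the form $\appctx{\slist'}{\l q. v}$, so the expanded clash is excluded. Note that in \Exact, unlike in \Upper, one cannot argue ``$\A \neq \mult{\prodt{\A_1}{\A_2}}$ implies $p$ is a variable'', because of $\pattn$; the case split should be on the shape of $p$ rather than on $\A$. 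The $\appn$ case needs no such addition, since there the head is typed $\typendn$ and hence cannot be an abstraction at all.
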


We now discuss two examples.
\begin{example}\label{e:second}
  Let us consider $t_0=(\l \pair{x}{y}. (\l \pair{w}{z}. wyz)x) \pair{\pair{\K}{a}}{b}$, with the following head-reduction sequence:
 \[\begin{array}{llll}
 & (\l \pair{x}{y}. (\l \pair{w}{z}. wyz)x) \pair{\pair{\K}{a}}{b} & \Rew{\betas} &((\l \pair{w}{z}. wyz)x) \esub{ \pair{x}{y}}{\pair{\pair{\K}{a}}{b}} \\ \Rew{\betas} & (wyz)\esub{\pair{w}{z}}{x}  \esub{ \pair{x}{y}}{\pair{\pair{\K}{a}}{b}} & \Rew{\matchs} & (wyz)\esub{\pair{w}{z}}{x}  \esub{x}{\pair{\K}{a}}\esub{y}{b}\\
\Rew{\exps} & (wyz)\esub{\pair{w}{z}}{\pair{\K}{a}}  \esub{y}{b} & \Rew{\matchs} & (wyz)\esub{w}{\K}\esub{z}{a}  \esub{y}{b} \\
 \Rew{\exps} & (\K yz) \esub{z}{a}  \esub{y}{b} & \Rew{\betas} & ((\l y_1. x_1)\esub{x_1}{y}z) \esub{z}{a}  \esub{y}{b} \\
 \Rew{\betas} & x_1\esub{y_1}{z} \esub{x_1}{y} \esub{z}{a}  \esub{y}{b} & \Rew{\exps} & x_1 \esub{x_1}{y} \esub{z}{a}  \esub{y}{b} \\
     \Rew{\exps} & y \esub{z}{a}  \esub{y}{b}   & \Rew{\exps} &
   y   \esub{y}{b}  \\
 \Rew{\exps} & b
 \end{array}
 \] 
 Note that, there are two matching steps in the head-reduction
 sequence, but the second step is only created after the substitution
 of $x$ by $\pair{\K}{a}$. Our method allows us to extract this information
 from the typing derivations because of the corresponding types
 for $\pair{x}{y}$ and $\pair{w}{z}$. Indeed, both patterns are typed with a product type (\cf\ the forthcoming tight typing derivations),  and therefore the corresponding pairs are consumed and not persistent.

Since $t_0= (\l \pair{x}{y}. (\l \pair{w}{z}. wyz)x) \pair{\pair{\K}{a}}{b} \Rewparam{(4,6,2)}{\head}b$,  the term $t_0$ should be tightly typable
 with counter $(4,6,2,0)$, where $0$ is  the size of $b$.
 In the construction of such  tight derivation we proceed by pieces.
 Let $T_\K = \mult{\typendn} \rew \emul \rew \typendn$. We first
 construct the following pattern derivation for $\pair{w}{z}$:
\[ \Pi_{\pair{w}{z}}\tri \ \inferrule{ w: \mult{T_\K} \pder^{(1,0,0)} w: \mult{T_\K} \sep
            \pder^{(1,0,0)} z: \emm}
      { w: \mult{T_\K} \pder^{(2,1,0)} \pair{w}{z}: \mult{\prodt{\mult{T_\K}}{\emm}}}\]
         In the following  $T_{\pair{w}{z}} = \mult{\prodt{\mult{T_\K}}{\emm}}$. We construct a similar pattern derivation for $\pair{x}{y}$:
         \[\Pi_{\pair{x}{y}}\tri\ \inferrule{ x: T_{\pair{w}{z}} \pder^{(1,0,0)} x:
             T_{\pair{w}{z}}  \sep
            y: \mult{\typendn}\pder^{(1,0,0)} y: \mult{\typendn}}
     { x: T_{\pair{w}{z}}; y: \mult{\typendn} \pder^{(2,1,0)} \pair{x}{y}: \mult{ \prodt{T_{\pair{w}{z}}}{\mult{\typendn}} }}\]

     In the rest of the example $T_{\pair{x}{y}} = \mult{\prodt{T_{\pair{w}{z}}}{\mult{\typendn}}}$. We build a type derivation for $\l \pair{x}{y}. (\l \pair{w}{z}. wyz)x$, where $\Gam_w = w: \mult{T_\K}$, $\Gam_y =y: \mult{\typendn}$, $\Gam = \Gam_w ; \Gam_y$, and $\Gam_x = x: T_{\pair{w}{z}}$. Furthermore, in this example and throughout the paper, we will use $(\ov{0})$ to denote the tuple $(0,0,0,0)$.
\[ 
\Phi_1 \tri  \inferrule{\inferrule*{\inferrule*{\inferrule*{\inferrule*{\Gam_w \vdash^{(\ov{0})} w :
            T_\K \sep \Gam_y \vdash^{(\ov{0})}
            y:\mult{\typendn} }{\Gam\vdash^{(\ov{0})} wy: \emul \rew \typendn}
           \vdash^{(\ov{0})} z:\emul}{\Gam\vdash^{(\ov{0})}
          wyz: \typendn  }  \Pi_{\pair{w}{z}}^{(2,1,0)} }{\Gam_y \vdash^{(1,2,1,0)} \l \pair{w}{z}. wyz: T_{\pair{w}{z}}  \rew \typendn} \Gam_x \vdash^{(\ov{0})} x:T_{\pair{w}{z}} }{\Gam_y; \Gam_x\vdash^{(1,2,1,0)} (\l \pair{w}{z}. wyz)x: \typendn } \Pi_{\pair{x}{y}}^{(2,1,0)}}{ \vdash^{(2,4,2,0)} \l \pair{x}{y}. (\l \pair{w}{z}. wyz)x: T_{\pair{x}{y}}\rew \typendn } \]
\[\Phi_{\K} \tri \inferrule{\inferrule*{x_1: \mult{\typendn} \vdash^{(\ov{0})} x_1: \typendn \sep  \pder^{(1,0,0)} y_1: \emm }{x_1: \mult{\typendn} \vdash^{(1,1,0,0)} \l y_1. x_1:  \emm \rew \typendn} \sep x_1: \mult{\typendn} \pder^{(1,0,0)} x_1: \mult{\typendn} }{ \vdash^{(2,2,0,0)} \K: T_\K}
\]

From $\Phi_1$ and $\Phi_{\K}$ we build the following tight derivation for $t_0= (\l \pair{x}{y}. (\l \pair{w}{z}. wyz)x) \pair{\pair{\K}{a}}{b}$:
\[ \Phi \tri \inferrule{\Phi_1^{(2,4,2,0)} 
    \inferrule*{\inferrule*{\inferrule*{\inferrule*{\inferrule*{\Phi_{\K}^{(2,2,0,0)}}{\vdash^{(2,2,0,0)} \K : \mult{T_\K}} \sep \inferrule*{}{\vdash^{(\ov{0})} a:\emm}}{\vdash^{(2,2,0,0)} \pair{\K}{a}: \prodt{\mult{T_\K}}{\emm} }}{\vdash^{(2,2,0,0)}\pair{\K}{a}: T_{\pair{w}{z}}} \sep b: \mult{\typendn} \vdash^{(\ov{0})} b: \mult{\typendn}}{b: \mult{\typendn} \vdash^{(2,2,0,0)}\pair{\pair{\K}{a}}{b}: \prodt{T_{\pair{w}{z}}}{\mult{\typendn}}}}{b: \mult{\typendn} \vdash^{(2,2,0,0)} \pair{\pair{\K}{a}}{b}: \mult{ \prodt{T_{\pair{w}{z}}}{\mult{\typendn}} } }}
   {b: \mult{\typendn} \vdash^{(4,6,2,0)}(\l \pair{x}{y}. (\l \pair{w}{z}. wyz)x) \pair{\pair{\K}{a}}{b}:\typendn}\]
Therefore, $\Phi^{(4,6,2,0)}$ gives the expected exact bounds.
It is worth noticing that the pair $\pair{\pair{\K}{a}}{b}$ is typed here
 with a singleton multi-set, while it would be typable with a multi-set
 having at least two elements in the typing system proposed in~\cite{BKRDR15},
 even if the term is not going to be duplicated. 

\end{example}

\begin{example}\label{e:third}
  We now consider the  term $t_1= (\l z. \l \pair{x}{y}. \id) z z) \pair{u}{v}$,
  having the following  head-reduction sequence to head normal-form:
\[ \begin{array}{llll}
    & (\l z. (\l \pair{x}{y}. \id) z z) \pair{u}{v} & \Rew{\betas} & 
  ((\l \pair{x}{y}. \id) z z) \esub{z}{\pair{u}{v}} \\
  \Rew{\betas} & (\id\esub{\pair{x}{y}}{z} z) \esub{z}{\pair{u}{v}} & \Rew{\betas} &
  w\esub{w}{z}\esub{\pair{x}{y}}{z} \esub{z}{\pair{u}{v}} \\
  \Rew{\exps} & z\esub{\pair{x}{y}}{z} \esub{z}{\pair{u}{v}} & \Rew{\exps} & \pair{u}{v}\esub{\pair{x}{y}}{\pair{u}{v}}\\
  \Rew{\matchs} & \pair{u}{v}\esub{x}{u}\esub{y}{v} & \Rew{\exps} &\pair{u}{v}\esub{y}{v}\\
  \Rew{\exps} & \pair{u}{v} 
       \end{array} \] 

Note that the pair $\pair{u}{v}$ is copied twice during the reduction, but only one of the copies is consumed by a matching. The copy of the pair that is not consumed will persist in the term, therefore it will be typed with $\typendm$. The other copy will be consumed in a matching step, however its components are not going to be used, therefore we will type it with $\mult{\oprod}$, where $\oprod$ denotes $\prodt{\ems}{\ems}$.

We need to derive a tight derivation with counter $(3,4,1,1)$
for $t_1$. We first consider the following derivation:
\[\Phi_1 \tri \inferrule{\inferrule*{\inferrule*{}{w: \mult{\typendm} \vdash^{(\ov{0})} w: \typendm} \sep w: \mult{\typendm} \pder^{(1,0,0)} w: \mult{\typendm}}{\vdash^{(1,1,0,0)} \id: \mult{\typendm} \rew \typendm }\sep \inferrule*{\pder^{(1,0,0)} x : \emm \sep \pder^{(1,0,0)} y : \emm}{\pder^{(2,0,0)} \pair{x}{y} : \mult{\oprod}}}{\vdash^{(2,3,1,0)} \l \pair{x}{y}. \id: \mult{\oprod} \rew \mult{\typendm} \rew \typendm}
\]
From $\Phi_1$ we obtain the following derivation:

\[
\Phi \tri\inferrule{\inferrule*{\inferrule*{ \Phi_1^{(2,3,1,0)} \sep \inferrule*{}{z: \mult{\oprod} \vdash^{(\ov{0})} z: \mult{\oprod} }}{z: \mult{\oprod} \vdash^{(2,3,1,0)} (\l \pair{x}{y}. \id) z: \mult{\typendm} \rew \typendm } \sep \inferrule*{}{z: \mult{\typendm} \vdash^{(\ov{0})} z: \mult{\typendm}}}{z: \mult{\oprod, \typendm} \vdash^{(2,3,1,0)} (\l \pair{x}{y}. \id) z z: \typendm}  z: \mult{\oprod, \typendm} \pder^{(1,0,0)} z: \mult{\oprod, \typendm}}{\vdash^{(3,4,1,0)} \l z. (\l \pair{x}{y}. \id) z z: \mult{\oprod, \typendm} \rew \typendm}
\]

Using $\Phi$ we obtain the following tight derivation, and its expected counter: 
\[ \inferrule*{\Phi^{(3,4,1,0)}  \sep \inferrule*{\inferrule*{\vdots}{\vdash^{(\ov{0})} \pair{u}{v}: \oprod} \sep
    \inferrule*{}{\vdash^{(0,0,0,1)} \pair{u}{v}: \typendm}}{
    \vdash^{(0,0,0,1)} \pair{u}{v}: \mult{ \oprod,\typendm }  }}{\vdash^{(3,4,1,1)} (\l z. (\l \pair{x}{y}. \id) z z) \pair{u}{v}: \typendm}\]

\end{example}

\section{Correctness for System \Exact}
\label{s:soundness}
This section studies the implication ``tight typability implies
head-normalisable''. The two key properties used to show this
implication are {\it minimal counters for canonical forms}
(Lem.~\ref{l:M-tight}) and the {\it exact subject reduction
  property} (Lem.~\ref{l:subject-reduction}).  Indeed,
Lem.~\ref{l:M-tight} guarantees that a tight derivation for a
canonical form $t$ holds the right counter of the form
$(0,0,0,\size{t})$.  Lem.~\ref{l:subject-reduction}
gives in fact an {\it
  (exact) weighted subject reduction} property, weighted because
head-reduction strictly decreases the counters of typed terms, and
exact because only {\it one} counter is decreased by $1$ for each
head-reduction step. Subject reduction is based on a {\it substitution}
  property (Lem.~\ref{l:substitution}). We start with
a key auxiliary lemma.

\begin{toappendix}
\begin{lemma}[Tight Spreading]
  \label{l:tight-spreading}
  Let $t \in \N$. Let $\tderiv \rhd \amuju{b}{e}{m}{f}{\Gam}{t:\sig}$
  be a typing derivation such that $\istight{\Gam}$. Then $\sig$ is tight
  and the last rule of $\tderiv$ does not belong to $\{\app, \introarrow,
  \ruletight, \trpair, \tnormalpair\}$.
\end{lemma}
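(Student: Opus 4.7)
The plan is to proceed by structural induction on $t \in \N$, whose grammar is $\N ::= x \mid \N\,t' \mid \N\esub{\pair{p_1}{p_2}}{\N}$. A preliminary observation is that, by the syntactic shape of the conclusions of the typing rules, the rules $\introarrow$, $\ruletight$, $\trpair$, and $\tnormalpair$ only derive judgments for terms headed by an abstraction or a pair; since no element of $\N$ has such a shape, these are immediately excluded as possible last rules of $\tderiv$, regardless of whether $\Gam$ is tight. The remaining candidates are $\ax$, $\app$, $\appn$, and $\trsub$, and the task reduces to ruling out $\app$ and to showing that in the other three cases $\sig$ is tight.

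For $t = x$, the only applicable rule is $\ax$, which forces $\Gam = x:\mult{\sig}$; tightness of $\Gam$ immediately yields tightness of $\sig$. For $t = \N'\,u$, if the last rule were $\app$ then the left premise would type $\N'$ with some $\A \rew \sig$ in a sub-context of $\Gam$ that inherits tightness (tightness propagates to each summand of $\inter$). Applying the induction hypothesis to $\N'$ would then force this arrow type to be tight, which contradicts the grammar of tight types. Hence the last rule must be $\appn$, which directly gives $\sig = \typendn$, a tight type.

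The main obstacle is the case $t = \N_1\esub{\pair{p_1}{p_2}}{\N_2}$ with last rule $\trsub$: a priori, the restriction $\Gam_1|_{\pair{p_1}{p_2}}$ of the context of $\N_1$ to the pattern variables need not be tight, so the induction hypothesis cannot be applied to $\N_1$ directly. To overcome this, I would first lift the induction hypothesis to multi-set judgments via the $\many$ rule: the context of $\N_2$ inherits tightness from $\Gam$, and so does each summand feeding into $\many$, so the IH applied to every component gives that each corresponding simple type is tight, and hence the whole multi-set $\A$ typing $\N_2$ is tight. Since $\pair{p_1}{p_2}$ is typed with exactly this $\A$, the rule $\trpairpat$ is excluded (it would force $\A$ to be a singleton on a product type, which is not tight), leaving $\pattn$ as the only possibility, and $\pattn$ precisely demands that $\Gam_1|_{\pair{p_1}{p_2}}$ be tight. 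Combined with the inherited tightness of $\cmin{\Gam_1}{\var{\pair{p_1}{p_2}}}$, the entire $\Gam_1$ is then tight, and the induction hypothesis applied to $\N_1$ finally yields $\sig$ tight; since $\trsub$ is itself not among the forbidden rules, this concludes the case.
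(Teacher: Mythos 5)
Your proposal is correct and follows essentially the same route as the paper's proof: structural induction on $\N$, excluding the abstraction/pair rules by the shape of $t$, ruling out $\app$ by applying the induction hypothesis to the left subterm (an arrow type cannot be tight), and in the $\trsub$ case first deriving tightness of the multi-set typing the argument via the induction hypothesis, which forces the pattern rule to be $\pattn$ and hence makes the full context of the body tight. The only cosmetic difference is that you spell out explicitly the lifting of the induction hypothesis through the $\many$ rule, which the paper leaves implicit.
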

\end{toappendix}

  \begin{proof}
  By induction on $t\in \N$, taking into account the fact that $t$ is
  not an abstraction nor a pair (\cf\ App.~\ref{a:correctness}).
\end{proof}

\begin{toappendix}
\begin{lemma}[Canonical Forms and Minimal Counters]
  \label{l:M-tight}
  Let $\tderiv \rhd \amuju{b}{e}{m}{f}{\Gam}{t: \sig}$ be
  a tight derivation. Then
  $t \in \M$ if and only if $b = e = m = 0$.
  \end{lemma}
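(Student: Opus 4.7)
The proof proceeds by establishing both implications separately.

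For the forward direction ($t \in \M$ implies $b=e=m=0$), I would proceed by mutual induction on the grammars of $\M$ and $\N$. The main idea is that tightness of the derivation forbids certain typing rules. For $t = \l p.\M'$, rule $\introarrow$ yields a non-tight arrow type, so only $\ruletight$ may be used, which preserves $b, e, m$, and the \ih\ on $\M'$ (whose type and context are both tight) delivers the result. The pair case $t = \pair{t_1}{t_2}$ is analogous: the derivation is forced to use $\tnormalpair$ (since $\trpair$ gives a non-tight product type), producing counter $(0,0,0,1)$. For an application $t = \N' u \in \N$, rule $\app$ would require $\N'$ to bear an arrow type, but Tight Spreading (\reflemma{tight-spreading}) applied to $\N' \in \N$ under a tight subcontext forces a tight type; hence only $\appn$ is available, and the \ih\ on $\N'$ applies. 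For an explicit matching $t = \M'\esub{\pair{p_1}{p_2}}{\N'}$, rule $\trsub$ must be used. The pattern cannot be typed by $\trvarpat$ (it is not a variable) nor by $\trpairpat$ (which would force $\N'$ to have the non-tight multi-set type $\mult{\prodt{\A_1}{\A_2}}$, contradicting Tight Spreading applied to each premise of the $\many$ rule typing $\N'$); so $\pattn$ is forced, contributing $(0,0,1)$ and typing $\N'$ with $\mult{\typendn}$. The \ih s on $\M'$ and on each sub-derivation of $\N'$ (taken through $\many$) then yield the zero counters.

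For the backward direction ($b=e=m=0$ implies $t \in \M$), I would first prove by induction on the typing derivation $\tderiv$ an auxiliary stronger claim: for every $\tderiv \tri \amuju{0}{0}{0}{f}{\Gam}{t:\sig}$ with $\istight{\Gam}$, the type $\sig$ is not an arrow type, $t \in \M$, and moreover $t \in \N$ whenever $\sig = \typendn$. This strengthening is crucial to rule out rule $\app$ as a last rule: its left premise would derive an arrow type $\A \rew \sig$ for $t_1$ under a tight subcontext with zero counters, contradicting the \ih. The remaining rules are handled uniformly: $\introarrow$ is excluded by $b \geq 1$; $\ax$ yields a tight type because $\Gam$ is tight; $\ruletight$, $\appn$, $\trpair$, $\tnormalpair$ and $\trsub$ propagate tightness to sub-derivations --- using that $\msunion$ of contexts is tight iff each summand is, and that under zero counters the pattern rule in $\trsub$ is forced to be $\pattn$, which in turn guarantees $\istight{\Gam|_p}$ --- and the \ih\ reassembles the term following the grammars of $\M$ and $\N$. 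The lemma then follows by specialising $\sig$ to a tight type.

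The main obstacle is precisely the $\app$ case in the backward direction, which is what motivates the auxiliary strengthening: without the extra clause forbidding arrow conclusions under tight contexts with zero counters, the inductive hypothesis offers no information on the arrow-typed left premise, and nothing prevents its presence. A secondary delicate point is the manipulation of multi-set types through $\many$: in both directions the lemma must be applied elementwise to each premise of $\many$, relying on the observation that a multi-set type is tight exactly when all its components are, so that the \ih\ stated on single types lifts correctly to the multi-set level needed in rules $\trsub$ and $\app$.
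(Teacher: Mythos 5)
Your proof is correct and follows essentially the same route as the paper: the forward direction is the same induction on the grammars of $\M$ and $\N$, using Tight Spreading to force the rules $\ruletight$, $\tnormalpair$, $\appn$ and $\pattn$, and the backward direction is the same induction on the typing derivation. The paper only sketches that backward direction (``by induction on $\tderiv$, using Tight Spreading''), and your auxiliary strengthening --- no arrow conclusion under a tight context with zero counters, together with the $\typendn$/$\N$ refinement --- is exactly the invariant needed to dispose of the $\app$ case, so it is a sound way of making that sketch precise.
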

\end{toappendix}
\begin{proof}
  The left-to-right implication  is by induction on the definition of
  the set $\M$,  using Lem.~\ref{l:tight-spreading} for the cases of
  application and explicit matching.
  The right-to-left implication  is by induction on $\tderiv$
  and uses Lem.~\ref{l:tight-spreading} (\cf\ App.~\ref{a:correctness}). 
\end{proof}

\begin{toappendix}
\begin{lemma}[Substitution for System \Exact]
  \label{l:substitution}
  If $\Phi_t \tri  \amuju{b_t}{e_t}{m_t}{f_t}{\Gam; x:\A}{t:\sig}$, and
  $\Phi_u \tri  \amuju{b_u}{e_u}{m_u}{f_u}{\Del}{u:\A}$, then there
  exists $\Phi_{t\isub{x}{u}} \tri  \amuju{b_t+b_u}{e_t+e_u}{m_t+m_u}{f_t+f_u}{\Gam \inter \Del}{t\isub{x}{u}:\sig}$. 
\end{lemma}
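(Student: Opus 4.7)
The plan is to proceed by structural induction on the derivation $\Phi_t$, following the shape of $t$, and handling at each node the fact that the multi-set hypothesis $x:\A$ may be ``shared'' between several sub-derivations. First I would record two preliminary observations that will be reused throughout. $(i)$ By relevance (Lem.~\ref{l:relevance+clash-free}.\ref{relevance}), if $x \notin \fv{t}$ then $\A = \ems$; and inspection of the rules of system \Exact\ shows that the only way to derive $\Phi_u \tri \amuju{b_u}{e_u}{m_u}{f_u}{\Del}{u:\ems}$ is through rule $(\many)$ with the empty index set $K$, forcing $\Del$ to be empty and $b_u=e_u=m_u=f_u=0$. $(ii)$ A \emph{splitting property} for the $(\many)$ rule: if $\A=\A_1\sqcup\A_2$ and $\Phi_u \tri \amuJu{b_u}{e_u}{m_u}{f_u}{\Del}{u:\A}$, then $\Phi_u$ can be decomposed (by the bottom $(\many)$ rule plus repartition of the index set $K$) into $\Phi_u^1 \tri \amuJu{b_u^1}{e_u^1}{m_u^1}{f_u^1}{\Del_1}{u:\A_1}$ and $\Phi_u^2 \tri \amuJu{b_u^2}{e_u^2}{m_u^2}{f_u^2}{\Del_2}{u:\A_2}$ with $\Del=\Del_1\inter\Del_2$ and the counters summing coordinate-wise to those of $\Phi_u$.

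Next I would go through the cases of the last rule of $\Phi_t$. For the axiom $t=x$ we have $\A=\mult{\sig}$ and all counters of $\Phi_t$ are zero, so $\Phi_{t\isub{x}{u}}$ is obtained from $\Phi_u$ (after a trivial $(\many)$ inversion on the singleton) with the required counters $(0{+}b_u,0{+}e_u,0{+}m_u,0{+}f_u)$. If $t=y\ne x$, then $\A=\ems$ by $(i)$, hence $\Phi_u$ is trivial with zero counters and $t\isub{x}{u}=y$ gives the result. If $t$ is a pair constructed by $(\tnormalpair)$, the same argument applies since no premise involves $x$. For the remaining structural cases ($\introarrow$, $\ruletight$, $\app$, $\appn$, $\trpair$, $\trsub$, $\trpairpat$ through the premises), I would apply the IH to each premise of $\Phi_t$ in which $x$ occurs, after using the splitting property $(ii)$ to partition $\Phi_u$ into as many pieces as there are premises that type $x$. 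In the abstraction cases the bound pattern variables are assumed fresh for $u$ by $\alpha$-conversion, so the side-conditions of the rules are preserved and no variable capture occurs. In each case the reconstructed derivation reuses the same last rule of $\Phi_t$, and the counters on the conclusion are obtained by summing the counters of the premises, which by IH are the sums of the original premise counters with the appropriate pieces of $\Phi_u$; a direct arithmetic check gives the required totals $(b_t{+}b_u, e_t{+}e_u, m_t{+}m_u, f_t{+}f_u)$.

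For rules whose conclusions add fixed increments coming from the structure of $t$ (the $+1$ on the $\betas$-counter in $\introarrow$, the $+1$ on $f$ in $\ruletight$, $\tnormalpair$ and $\appn$, and the $+1$ matching contribution hidden inside the pattern derivation used by $\trpairpat$ and propagated by $\introarrow$/$\trsub$), the increments come from $\Phi_t$ alone and survive unchanged in $\Phi_{t\isub{x}{u}}$, since the substitution never changes the leading constructor of $t$. This is why the resulting counters are exactly the coordinate-wise sums, and not merely inequalities (which is what we would get in system \Upper).

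I expect the main obstacle to be the bookkeeping for the splitting property in the $\app$, $\trpair$ and $\trsub$ cases, where $\A$ may have to be split in a non-trivial way between two premises (for $\trsub$, between the premise for the body and the premise for the argument, with the pattern premise being unaffected by $x$ since $x\notin\var{p}$ after $\alpha$-conversion). Making this splitting precise requires identifying the partition of the index set $K$ of the outermost $(\many)$ of $\Phi_u$ that matches the partition of $\A$ dictated by $\Phi_t$, and verifying that the typing contexts decompose as $\Del=\Del_1\inter\Del_2$ with the right counters; once this is set up cleanly, the remainder of each inductive step is a routine recomposition of the typing rule used at the root of $\Phi_t$.
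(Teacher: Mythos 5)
Your plan is correct and follows essentially the same route as the paper's proof: induction on $\Phi_t$, with the multiset hypothesis $x:\A$ split among the premises (your ``splitting property'' for $(\many)$ is exactly the mechanism the paper uses in the $\app$, $\trpair$ and $\trsub$ cases), relevance handling the $\A=\ems$ situations, and the fixed counter increments carried over unchanged. The only presentational difference is that the paper makes the induction go through by generalising the statement to conclusions of multiset type (so the IH applies directly to the $\Vdash$-premises of $\app$ and $\trsub$), whereas you package the same content as a preliminary splitting observation; either formulation works.
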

\end{toappendix}

\begin{proof}
  By induction on $\Phi_t$ (\cf\ App.~\ref{a:correctness}). 
\end{proof}

  \begin{toappendix}
\begin{lemma}[Exact Subject Reduction]
  \label{l:subject-reduction}
  If $\Phi \tri  \amuju{b}{e}{m}{f}{\Gam}{t:\sig}$, and
  $t \Rew{\head} t'$, then $\Phi' \tri  \amuju{b'}{e'}{m'}{f}{\Gam}{t':\sig}$, where
  \begin{itemize}
  \item If $\R = \betas$, then $b' = b-1$, $e'=e$, $m'=m$.
  \item If $\R = \exps$, then $b' = b$, $e'=e-1$, $m'=m$.
  \item If $\R = \matchs$, then $b' = b$, $e'=e$, $m'=m-1$.
  \end{itemize}  
\end{lemma}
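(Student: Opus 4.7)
The plan is to proceed by induction on the head-reduction step $t \Rew{\head} t'$, splitting into the three root-reduction base cases ($\betas$, $\exps$, $\matchs$) and the four congruence cases given in Fig.~\ref{f:head-strategy}.

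First I would dispatch the three congruence cases ($\l p.t \Rew{\head} \l p.t'$; $tu \Rew{\head} t'u$ with $\isnotabs{t}$; $t\esub{p}{u} \Rew{\head} t'\esub{p}{u}$; and $t\esub{p}{u} \Rew{\head} t\esub{p}{u'}$) routinely: one inspects the last typing rule of $\Phi$, applies the \ih to the relevant premise, and rebuilds the derivation with the same rule. The only care needed is (i) for $tu$, noting that $\isnotabs{t}$ plus Lem.~\ref{l:tight-spreading}-style reasoning disallows certain rule mismatches, and (ii) for the matching case $t\esub{p}{u} \Rew{\head} t\esub{p}{u'}$, using Lem.~\ref{l:M-tight} on the subderivation for $t$ to ensure the counter changes are imputed to $u$, not to $t$. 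In every congruence case the modified counter (one of $b,e,m$) strictly decreases by $1$ on exactly one child subderivation, while $f$ is preserved, so the rebuilt derivation has the required counters.

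Next I would handle the $\exps$ root case: if $t\esub{x}{u} \Rew{\head} t\isub{x}{u}$, then $\Phi$ must end with $\trsub$, where the pattern $x$ is typed via $\trvarpat$ with counter $(1,0,0)$. The hypotheses are $\amuju{b_t}{e_t}{m_t}{f_t}{\Gam;x{:}\A}{t:\sig}$ and $\amuJu{b_u}{e_u}{m_u}{f_u}{\Del}{u:\A}$ and the conclusion has $e$-counter $e_t+e_u+1$. Applying the Substitution Lemma (Lem.~\ref{l:substitution}) yields a derivation for $t\isub{x}{u}$ with $e$-counter $e_t+e_u$, exactly one less, while $b,m,f$ are preserved.

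For the $\betas$ root case $\appctx{\slist}{\l p.t}\,u \Rew{\head} \appctx{\slist}{t\esub{p}{u}}$, I would prove a small auxiliary lemma by induction on $\slist$: any derivation of $\appctx{\slist}{\l p.t}\,u$ can be reorganised into a derivation of $\appctx{\slist}{t\esub{p}{u}}$ with the $b$-counter decreased by $1$ and $e,m,f$ unchanged. The key observation is that in the original derivation, the outermost application must be typed by $\app$ (not $\appn$, since $\appctx{\slist}{\l p.t}$ cannot receive type $\typendn$: each $\trsub$ that threads through $\slist$ transports the type of its left subterm, and the innermost abstraction $\l p.t$ is typed by $\introarrow$ because its type $\A \rew \sig$ is neither $\typendn$ nor $\typendm$). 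The $\introarrow$ step contributes a $+1$ to $b$ which is precisely the unit that disappears when we replace the $\app/\introarrow$ pair by a fresh $\trsub$ in the reduct; all pattern counters from $p$ land identically in the new $\trsub$. The list context is handled by iterating this local rearrangement through each $\trsub$ in $\slist$.

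The $\matchs$ root case is analogous but more delicate, and I expect it to be the main obstacle. For $t\esub{\pair{p_1}{p_2}}{\appctx{\slist}{\pair{u_1}{u_2}}} \Rew{\head} \appctx{\slist}{t\esub{p_1}{u_1}\esub{p_2}{u_2}}$, the outer $\trsub$ assigns to the pattern $\pair{p_1}{p_2}$ some multi-set $\A$. One must argue that $\A$ cannot be $\mult{\typendn}$ (that is, the pattern is not typed by $\pattn$): indeed, $\typendn$ is only produced by $\ax$, $\appn$ and transported through $\trsub$, so by induction on $\slist$ the inner $\pair{u_1}{u_2}$ would have to be typed $\typendn$, which is impossible since pairs can only be typed by $\trpair$ or $\tnormalpair$. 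Hence the pattern is typed by $\trpairpat$, giving $\A = \mult{\prodt{\A_1}{\A_2}}$, and correspondingly $\pair{u_1}{u_2}$ is typed by $\trpair$ (not $\tnormalpair$, whose conclusion is $\typendm$). The $\trpairpat$ step contributed $+1$ to $m$ (from rule $\trpairpat$: $1+m_p+m_q$); after reduction, the two new $\trsub$ for $u_1$ and $u_2$ use patterns $p_1$ and $p_2$ separately, whose pattern counters sum to $m_{p_1}+m_{p_2}$ rather than $1+m_{p_1}+m_{p_2}$, so exactly one unit of $m$ disappears. The list-context propagation proceeds by induction on $\slist$ exactly as for $\betas$, using the associativity between $\trsub$ and the threaded matching. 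All other counters ($b, e, f$) match on the nose.
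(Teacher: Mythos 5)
Your proposal follows essentially the same route as the paper's proof: induction on $\Rew{\head}$, the Substitution Lemma for the $\exps$ root case, an inner induction on the list context $\slist$ for the $\betas$ and $\matchs$ root cases, and the counter bookkeeping coming from the $+1$ contributed by $(\introarrow)$, $(\trvarpat)$ and $(\trpairpat)$ respectively; your explicit arguments that the outer application cannot be typed by $(\appn)$ and that the pair pattern cannot be typed by $(\pattn)$ are sound and merely make explicit what the paper leaves implicit. One small misattribution: in the congruence case $t\esub{p}{u}\Rew{\head}t\esub{p}{u'}$ the relevant fact is not Lem.~\ref{l:M-tight} (which requires a tight derivation, not assumed here) but simply that $p\neq x$ forces the pattern rule to be $(\pattn)$ or $(\trpairpat)$, hence the multi-set typing $u$ is a singleton and the \ih\ applies to the unique premise of the $(\many)$ rule above $u$.
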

\end{toappendix}

\begin{proof} By induction on $\Rew{\head}$, using
  the substitution property (Lem.~\ref{l:substitution})  (\cf\ App.~\ref{a:correctness}). 
\end{proof}

  Lem.~\ref{l:subject-reduction} provides a simple argument to obtain the
  implication ``tigthly typable implies head-normalisable'': if $t$ is
  tightly typable, and reduction decreases the counters, then
  head-reduction necessarily terminates. But the soundness implication
  is in fact more precise than that. Indeed: 

   \begin{theorem}[Correctness]
     \label{t:correctness}
     Let $\Phi \tri  \amuju{b}{e}{m}{f}{\Gam}{t:\sig}$ be a tight derivation.
     Then there exists $u \in \M$ and a reduction sequence $\rho$ such that
      $\rho: t \Rewparam{(b,e,m)}{\head} u$
     and $\size{u} = f$. 
   \end{theorem}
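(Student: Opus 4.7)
The plan is to prove the theorem by induction on the sum $b + e + m$, using the two key lemmas established just before the statement: Canonical Forms and Minimal Counters (Lem.~\ref{l:M-tight}) and Exact Subject Reduction (Lem.~\ref{l:subject-reduction}).

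For the base case $b = e = m = 0$, the forward direction of Lem.~\ref{l:M-tight} immediately gives $t \in \M$, so we take $u = t$ and the empty reduction sequence $\rho$. It remains to argue that $\size{t} = f$. This is a separate statement that I would establish as an auxiliary lemma: for any tight derivation $\Phi \tri \amuju{0}{0}{0}{f}{\Gam}{t:\sig}$ with $t \in \M$, we have $\size{t} = f$. The proof goes by induction on the structure of $t \in \M$ (using the mutual grammar for $\M$ and $\N$). In the base case $t = x$, the only applicable rule is $\ax$, which forces $f = 0 = \size{x}$. In each inductive case (abstraction $\l p.\M$ via $\ruletight$; persistent pair via $\tnormalpair$; neutral application $\N t$ via $\appn$, which forces $\sig = \typendn$ by tight spreading-style reasoning; explicit matching via $\trsub$ combined with the pattern rule $\pattn$), the $f$-counter is incremented in exactly the same way as the size function $\size{\cdot}$ on canonical forms. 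Tight spreading (Lem.~\ref{l:tight-spreading}) is crucial here to rule out the ``consuming'' rules $\app, \introarrow, \trpair$ when the conclusion has a tight type in a tight context.

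For the inductive case $b + e + m > 0$, the contrapositive of Lem.~\ref{l:M-tight} yields $t \notin \M$. By Clash-Freeness (Lem.~\ref{l:relevance+clash-free}.\ref{clash-free}) and Prop.~\ref{p:M-not-reducible}, $t$ must be head-reducible, so there exists $t'$ with $t \Rew{\head} t'$, via some reduction rule $\R \in \{\betas, \exps, \matchs\}$. Applying Lem.~\ref{l:subject-reduction}, we obtain a derivation $\Phi' \tri \amuju{b'}{e'}{m'}{f}{\Gam}{t':\sig}$ where exactly one of $b,e,m$ is decreased by $1$ (according to $\R$) and the $f$-counter and context are unchanged. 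Hence $\Phi'$ is still tight (same $\Gam$ and $\sig$) and $b' + e' + m' < b + e + m$, so the induction hypothesis provides $u \in \M$ and $\rho' : t' \Rewparam{(b',e',m')}{\head} u$ with $\size{u} = f$. Prepending the step $t \Rew{\head}_{\R} t'$ to $\rho'$ yields the required $\rho : t \Rewparam{(b,e,m)}{\head} u$.

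The main obstacle is really the auxiliary size-equality lemma for the base case, since the rest follows mechanically from the two key lemmas. The delicate point is that one must invoke tight spreading to guarantee that in a tight derivation of a neutral term $t \in \N$, only the persistent rules ($\ax, \appn, \trsub$ with $\pattn$) can appear at the root, so that the $f$-counter genuinely matches the structural size $\size{t}$; otherwise a derivation using a consuming rule on an already-normal subterm would over-count.
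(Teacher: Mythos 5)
Your proof is correct and follows essentially the same route as the paper: induction on $b+e+m$, with Lem.~\ref{l:M-tight} handling the base case, clash-freeness plus Prop.~\ref{p:M-not-reducible} giving head-reducibility, and Lem.~\ref{l:subject-reduction} driving the inductive step. The only difference is that you spell out the auxiliary size-equality $\size{t}=f$ for canonical forms (via tight spreading), which the paper dismisses as ``easy to show'' but which is in fact already established within the left-to-right direction of the proof of Lem.~\ref{l:M-tight}.
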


        \begin{proof}
  By induction on $b + e + m$.

  If $b + e + m=0$ (\ie\ $b=e=m=0$), then Lem.~\ref{l:M-tight} gives
  $t \in \M$, so that $t \notRew{\head}$ holds by
  Prop.~\ref{p:M-not-reducible}.  We let $u:= t$ and thus
  $t \Rewparam{(0,0,0)}{\head} t$. It is easy to show
    that tight derivations
    $\Phi \tri  \amuju{0}{0}{0}{f}{\Gam}{t:\sig}$
    for terms in $\M$ verify
    $\size{t} = f$.

  If $b + e + m>0$, we know by Lem.~\ref{l:M-tight}
  that $t \notin \M$, and we know by Lem.~\ref{l:relevance+clash-free}:\ref{clash-free}
  that $t$ is (head) clash-free. Then, $t$ turns to be head-reducible
  by Prop.~\ref{p:M-not-reducible}, \ie\ 
  there exists $t'$ such that $t \Rew{\head} t'$.
By Lem.~\ref{l:subject-reduction} there is a derivation
$\tderiv' \tri \amuju{b'}{e'}{m'}{f}{\Gam}{t':\sig}$ such that
$b'+e'+m' +1 =  b + e + m$. The \ih\ applied to $\tderiv'$
then gives $t' \Rewparam{(b',e',m')}{\head} u$ and
$\size{u} = f$. We conclude with the sequence
$t \Rewparam{}{\head} t' \Rewparam{(b',e',m')}{\head} u$,
with the counters as expected.

\end{proof}

\section{Completeness for System \Exact}
\label{s:completeness}
In this section we study the reverse implication ``head-normalisable
implies tight typability''. In this case the key properties are the
existence of {\it tight derivations for canonical forms}
(Lem.~\ref{l:M-typable-tight}) and the {\it subject expansion property}
(Lem.~\ref{l:subject-expansion}).  As in the previous section these
properties are {\it (exact) weighted} in the sense that
Lem.~\ref{l:M-typable-tight} guarantees that a canonical form $t$ has a
tight derivation with the right counter, and Lem.~\ref{l:subject-expansion} shows that each
step of head-expansion strictly increases exactly one of the counters
of tightly typed terms.  Subject expansion relies on an {\it
  anti-substitution} property (Lem.~\ref{l:anti-substitution}).

\begin{toappendix}
\begin{lemma}[Canonical Forms and Tight Derivations]
     \label{l:M-typable-tight}
Let $t \in \M$. There exists a tight derivation $\tderiv \rhd \amuju{0}{0}{0}{\size{t}}{\Gam}{t: \tight}$.
\end{lemma}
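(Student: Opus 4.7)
The plan is to prove a strengthened, mutually inductive statement: simultaneously for $t \in \M$ and for $t \in \N$, exhibit a tight derivation with counter exactly $(0,0,0,\size{t})$, and additionally prove that pure canonical forms $t \in \N$ can be tightly typed with the specific type $\typendn$. Formally, I would prove: (i) if $t \in \N$, there is a derivation $\amuju{0}{0}{0}{\size{t}}{\Gam}{t:\typendn}$ with $\istight{\Gam}$; (ii) if $t \in \M$, there is a derivation $\amuju{0}{0}{0}{\size{t}}{\Gam}{t:\typend}$ with $\istight{\Gam}$, where $\typend \in \{\typendn,\typendm\}$. The lemma is an immediate corollary of (ii).

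I would first settle (i) by induction on $t \in \N$. For $t = x$, the axiom rule $\ax$ with $\sig = \typendn$ yields $\amuju{0}{0}{0}{0}{x:\mult{\typendn}}{x:\typendn}$, and $\size{x}=0$. For $t = \N u$, the IH provides $\amuju{0}{0}{0}{\size{\N}}{\Gam}{\N:\typendn}$, and one application of $\appn$ gives $\amuju{0}{0}{0}{\size{\N}+1}{\Gam}{\N u:\typendn}$, matching $\size{\N u} = \size{\N}+1$. For $t = \N\esub{\pair{p_1}{p_2}}{\N'}$, the IH yields tight derivations for $\N$ (of type $\typendn$, context $\Gam_1$) and $\N'$ (of type $\typendn$, context $\Gam_2$). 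I would promote the derivation for $\N'$ by a single-premise $\many$-step to type $\mult{\typendn}$, and crucially use the \emph{tight} pattern rule $\pattn$ to derive $\Gam_1|_{\pair{p_1}{p_2}} \pder^{(0,0,1)} \pair{p_1}{p_2}:\mult{\typendn}$; applicability of $\pattn$ follows because $\Gam_1$ is tight, so its restriction to pattern variables is tight, and its domain is trivially included in $\var{\pair{p_1}{p_2}}$. A final $\trsub$-step produces the required derivation, with the $(0,0,1)$ contributed by $\pattn$ accounting for exactly the $+1$ in $\size{\N\esub{\pair{p_1}{p_2}}{\N'}} = \size{\N}+\size{\N'}+1$.

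Next I would prove (ii) by induction on $t \in \M$. The case $t \in \N$ follows from (i). For $t = \pair{u}{v}$, the axiom-like rule $\tnormalpair$ directly yields $\amuju{0}{0}{0}{1}{}{\pair{u}{v}:\typendm}$, and $\size{\pair{u}{v}}=1$. For $t = \l p.\M$, the IH on $\M$ gives $\amuju{0}{0}{0}{\size{\M}}{\Gam}{\M:\tight}$ with $\Gam$ tight; since $\Gam|_p$ is then tight, rule $\ruletight$ applies and produces $\amuju{0}{0}{0}{\size{\M}+1}{\cmin{\Gam}{\var{p}}}{\l p.\M:\typendm}$, with $\cmin{\Gam}{\var{p}}$ still tight, matching $\size{\l p.\M}=\size{\M}+1$. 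For $t = \M\esub{\pair{p_1}{p_2}}{\N}$, the argument is exactly parallel to the explicit-matching case of (i): I combine the IH for $\M$ (tight, counter $(0,0,0,\size{\M})$) with the IH for $\N$ (which gives type $\typendn$, to be lifted to $\mult{\typendn}$ via $\many$) and a $\pattn$-derivation $(0,0,1)$ for the blocked pair pattern, concluded by $\trsub$; the final counter $(0,0,0,\size{\M}+\size{\N}+1)$ matches $\size{t}$, and the resulting context $(\cmin{\Gam_1}{\var{\pair{p_1}{p_2}}}) \inter \Gam_2$ remains tight.

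The only mildly delicate point is the systematic use of $\pattn$ (rather than $\trpairpat$) for every pair-pattern occurring in a canonical form: this is consistent because in a canonical form of the shape $t\esub{\pair{p_1}{p_2}}{u}$ the argument $u$ is a pure canonical form $\N$, which is \emph{not} a pair and hence cannot match against a product-typed pattern. Using $\pattn$ aligns the counters with the persistence of these blocked constructors, ensuring that the counter $(0,0,0,\size{t})$ is attained exactly, with no beta/substitution/matching debts incurred.
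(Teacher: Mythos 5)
Your proposal is correct and follows essentially the same route as the paper's proof: the same strengthening into two mutually inductive statements (pure canonical forms typed with $\typendn$, canonical forms typed with a tight type), the same case analysis, and the same choice of rules ($\ax$, $\appn$, $\pattn$ lifted via $\many$ into $\trsub$, $\tnormalpair$, and $\ruletight$) with identical counter bookkeeping.
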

\end{toappendix}

\begin{proof} We generalise the property to the two following
  statements: 
  \begin{itemize}
  \item If $t \in \N$, then there exists a tight derivation $\tderiv
    \rhd \amuju{0}{0}{0}{\size{t}}{\Gam}{t: \typendn}$. 
  \item If $t \in \M$, then there exists a tight derivation $\tderiv \rhd \amuju{0}{0}{0}{\size{t}}{\Gam}{t: \tight}$.
  \end{itemize}
The proof then proceeds by induction on $\N, \M$, using relevance
(Lem.~\ref{l:relevance+clash-free}:\ref{relevance}) (\cf\ App.~\ref{a:completeness}).

\end{proof}

  \begin{toappendix}
 \begin{lemma}[Anti-Substitution for System \Exact]
  \label{l:anti-substitution}
  Let $\Phi \tri  \amuju{b}{e}{m}{f}{\Gam}{t\isub{x}{u}:\sig}$.
  Then, there exist derivations $\Phi_t$, $\Phi_u$,
  integers $b_t, b_u, e_t, e_u, m_t, m_u, f_t, f_u$,
  contexts $\Gam_t, \Gam_u$, and multi-type $\A$ such that
  $\Phi_t \tri  \amuju{b_t}{e_t}{m_t}{f_t}{\Gam_t; x:\A}{t:\sig}$, 
  $\Phi_u \tri  \amuju{b_u}{e_u}{m_u}{f_u}{\Gam_u}{u:\A}$,
  $b=b_t+ b_u$,
  $e = e_t+ e_u$,
  $m= m_t+ m_u$,
  $f=f_t+ f_u$, and 
   $\Gam = \Gam_t \inter \Gam_u$.
     \end{lemma}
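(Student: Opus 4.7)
The plan is to proceed by structural induction on $t$, carried out jointly with the analogous statement for multi-type derivations of the form $\amuJu{b}{e}{m}{f}{\Gam}{t\isub{x}{u}:\A}$. The strategy mirrors the substitution lemma (Lem.~\ref{l:substitution}) in reverse: each typing rule appearing at the root of $\Phi$ will be peeled off, its premises handed to the induction hypothesis, and the $\Phi_t$ and $\Phi_u$ pieces reassembled independently.

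Base cases. If $t = x$, then $t\isub{x}{u} = u$, so $\Phi$ itself derives $u:\sig$; we take $\Phi_u$ to be $\Phi$ wrapped by the $\many$ rule on a single premise, yielding $u:\mult{\sig}$ with counter $(b,e,m,f)$, and let $\Phi_t$ be the axiom $x:\mult{\sig} \vdash x:\sig$ with counter $(0,0,0,0)$, so that $\A = \mult{\sig}$. If $t = y$ with $y \neq x$, then $\Phi$ is an axiom $y:\mult{\sig} \vdash y:\sig$; we set $\Phi_t = \Phi$ (formally extended with $x:\ems$, which is absorbed by the convention $\Gam;x:\ems = \Gam$), and take $\Phi_u$ to be the empty $\many$ rule indexed by $K = \es$, typing $u:\ems$ with counter $(0,0,0,0)$ and $\A = \ems$.

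Inductive cases. Since substitution is homomorphic on constructors and $t \neq x$, the shape of $t\isub{x}{u}$ coincides with that of $t$, so the last rule of $\Phi$ is determined up to the persistent/consuming choice (e.g.\ $\introarrow$ vs.\ $\ruletight$ for abstractions, $\app$ vs.\ $\appn$ for applications, $\trpair$ vs.\ $\tnormalpair$ for pairs; the rule $\trsub$ is the only option for explicit matchings). In each such case, the induction hypothesis is applied to the premises typing the subterms of $t\isub{x}{u}$, yielding partial decompositions $(\Phi_{t,i}, \Phi_{u,i})$ with multi-types $\B_i$ for $x$. The $\Phi_{t,i}$ are reassembled by the same rule to form $\Phi_t$, with $x$ bound to $\sqcup_i \B_i$, while the $\Phi_{u,i}$ are merged into a single $\Phi_u$ by pooling the premises of their underlying $\many$ rules into a fresh $\many$ application. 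The additivity of counters in every typing rule, together with the associativity and commutativity of $\inter$, ensure that $b = b_t + b_u$, $e = e_t + e_u$, $m = m_t + m_u$, $f = f_t + f_u$, and $\Gam = \Gam_t \inter \Gam_u$ hold on the nose.

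The main obstacle is the multi-type variant of the statement, which is needed in the inductive step whenever an application or an explicit matching is encountered, since the second premise of $\app$ (and of $\trsub$) has the form $\amuJu{}{}{}{}{\Del}{v:\A}$. We handle this by proving the two statements simultaneously: the multi-type variant is discharged by inverting the outer $\many$ rule, applying the single-type induction hypothesis to each premise $\amuju{b_k}{e_k}{m_k}{f_k}{\Gam_k}{t\isub{x}{u}:\sig_k}$ for $k\in K$, and then reassembling with $\many$ both on the $t$-side (producing a derivation of $t:\A$ with $x:\sqcup_k \B_k$) and on the $u$-side (producing $u:\sqcup_k \B_k$ by pooling all the inner single-type premises that originally typed $u$). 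This pooling step, while conceptually simple, is where the linear and additive bookkeeping of contexts and counters is most delicate, and is the only point of the proof that requires more than a direct rule-by-rule inversion.
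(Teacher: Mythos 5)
Your proposal is correct and follows essentially the same route as the paper's proof: a mutual induction over the single-type and multi-type versions of the statement, inverting the last rule of $\Phi$, handling $t=x$ and $t=y\neq x$ exactly as you describe (with $\A=\mult{\sig}$ and $\A=\ems$ respectively), and in the $\app$/$\trsub$/$\trpair$ cases splitting the multi-type $\A$ as the union of the multi-types produced for each subterm and pooling the corresponding $\many$ premises on the $u$-side. The only cosmetic difference is that the paper phrases the induction as being on the derivation $\Phi$ rather than on $t$, which dispenses with the lexicographic ordering your joint statement implicitly relies on, but the content of the argument is the same.
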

\end{toappendix}

\begin{proof}
  By induction on $\Phi$ (\cf\ App.~\ref{a:completeness}).
\end{proof}

  \begin{toappendix}
\begin{lemma}[Exact Subject Expansion]
\label{l:subject-expansion}
If $\Phi' \tri  \amuju{b'}{e'}{m'}{f'}{\Gam}{t':\sig}$, and
    $t \Rew{\head} t'$, then $\Phi \tri  \amuju{b}{e}{m}{f}{\Gam}{t:\sig}$, where
    \begin{itemize}
    \item If $\R = \betas$, then $b = b'+1$, $e'=e$, $m'=m$.
    \item If $\R = \exps$, then $b' = b$, $e=e'+1$, $m'=m$.
    \item If $\R = \matchs$, then $b' = b$, $e'=e$, $m=m'+1$.
   \end{itemize}  
\end{lemma}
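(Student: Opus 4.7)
The plan is to prove this by induction on the head-reduction relation $t \Rew{\head} t'$, following the inductive rules defining $\Rew{\head}$ in Fig.~\ref{f:head-strategy}. The proof essentially mirrors the structure of the Exact Subject Reduction proof (Lem.~\ref{l:subject-reduction}), but going ``backwards'': given a derivation $\Phi'$ for the reduct $t'$, I reconstruct a derivation $\Phi$ for $t$ whose counters are appropriately incremented. The Anti-Substitution Lemma (Lem.~\ref{l:anti-substitution}) plays the central role for the substitution case, just as the Substitution Lemma did for subject reduction.

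For the three base axioms, I proceed as follows. Case $\exps$: given $\Phi' \tri \amuju{b}{e'}{m}{f}{\Gam}{t\isub{x}{u}:\sig}$, I apply Lem.~\ref{l:anti-substitution} to obtain derivations $\Phi_t \tri \amuju{b_t}{e_t}{m_t}{f_t}{\Gam_t; x{:}\A}{t:\sig}$ and $\Phi_u \tri \amuju{b_u}{e_u}{m_u}{f_u}{\Gam_u}{u:\A}$ splitting the counters and context. Using the axiom $x{:}\A \pder^{(1,0,0)} x:\A$ for the pattern and rule $\trsub$, I build a derivation of $t\esub{x}{u}:\sig$ whose $e$-counter is $e_t + e_u + 1 = e'+1$, while the $b$- and $m$-counters are preserved, as required. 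Case $\betas$: given a derivation of $\appctx{\slist}{t\esub{p}{u}}:\sig$, the outer $\trsub$ rules corresponding to $\slist$ can be commuted to expose the innermost $\trsub$ rule typing $t\esub{p}{u}$; its premises provide a derivation of $t:\sig$ and $u:\A$ together with a pattern derivation of $p:\A$. Combining these via $\introarrow$ yields a derivation of $\l p.t:\A \rew \sig$, then $\app$ yields $(\l p.t)u:\sig$, and re-threading the explicit substitutions of $\slist$ recovers $\appctx{\slist}{\l p. t}u:\sig$. The $\introarrow$ rule introduces exactly the extra $+1$ to $b$. Case $\matchs$: analogously, the derivation of $\appctx{\slist}{t\esub{p_1}{u_1}\esub{p_2}{u_2}}$ exposes two successive $\trsub$ rules typing $p_1:\A$ and $p_2:\B$ and corresponding derivations for $u_1:\A$ and $u_2:\B$. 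I assemble the pattern derivations via $\trpairpat$ into a derivation of $\pair{p_1}{p_2} : \mult{\prodt{\A}{\B}}$ (which adds $+1$ to $m_p$) and the argument derivations via $\trpair$ into a derivation of $\pair{u_1}{u_2}:\prodt{\A}{\B}$, then rebuild the whole term by re-inserting $\slist$ and applying $\many$ and $\trsub$. The $\trpairpat$ step gives exactly the $+1$ increment on $m$.

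For the inductive (contextual closure) cases, the induction hypothesis is applied to the immediate subderivation of the subterm being reduced, and the typing of the enclosing constructor (abstraction, application left, matching body, or matching argument) is rebuilt identically. The only subtle case is the rule where reduction happens inside the argument of an explicit matching, because the argument is typed by a $\many$ rule gathering several derivations at types $\sigma_k$; here the IH must be applied to each such premise derivation, and the resulting counter increments are summed. Since that rule requires $u \Rew{\head} u'$ at each typed occurrence, and since the argument is necessarily typed (the multiset is nonempty, otherwise tightness/head-form conditions would block the outer rule---this is handled exactly as in subject reduction), the sum is well defined.

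The main obstacle I expect is the careful handling of the list contexts $\slist$ in the $\betas$ and $\matchs$ base cases. Since $\slist$ consists of nested explicit substitutions whose typing rules ($\trsub$) interleave in an arbitrary way with the derivation, I will likely need an auxiliary lemma---provable by a straightforward induction on $\slist$---stating that any derivation of $\appctx{\slist}{r}:\sig$ can be decomposed into a derivation of $r:\sig$ under an extended typing context together with the pattern derivations and argument derivations attached to each substitution in $\slist$, with counters additively distributed. Once this decomposition lemma is in place, the two base cases reduce to purely local manipulations of typing rules with transparent counter arithmetic, and the whole proof follows the usual pattern for quantitative subject expansion.
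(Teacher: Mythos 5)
Your overall strategy --- induction on the derivation of $t \Rew{\head} t'$, the Anti-Substitution Lemma for the $\exps$ axiom, local reassembly via $\introarrow$/$\app$ for $\betas$ and via $\trpairpat$/$\trpair$ for $\matchs$, and an inner induction (or equivalent decomposition lemma) on the list context $\slist$ --- is exactly the paper's. However, there is a genuine gap in the one contextual case you yourself flag as subtle, namely $v\esub{p}{u} \Rew{\head} v\esub{p}{u'}$ with $v \notRew{\head}$, $p \neq x$ and $u \Rew{\head} u'$. You propose to apply the induction hypothesis to each premise of the $(\many)$ rule typing the argument and to \emph{sum} the resulting increments. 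But each application of the induction hypothesis adds $+1$ to the relevant counter, so if the multiset $\A$ typing $u$ had $k$ elements you would obtain an increment of $+k$, contradicting the exact $+1$ required by the statement; and if $\A$ were empty you would obtain no increment at all. Your appeal to ``tightness/head-form conditions'' cannot close this, since the lemma is stated for arbitrary derivations, not tight ones.

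The missing observation is that the side condition $p \neq x$ of this contextual rule forces $p$ to be a pair pattern, hence its pattern derivation ends with $(\pattn)$ or $(\trpairpat)$, and \emph{both} of these rules assign a singleton multiset ($\mult{\typendn}$ or $\mult{\prodt{\A_1}{\A_2}}$ respectively). Consequently the $(\many)$ rule typing $u'$ has exactly one premise, the induction hypothesis is applied exactly once, and a single $+1$ increment propagates as required. With this point added (it is the same argument used in the corresponding case of Exact Subject Reduction), the rest of your proof goes through.
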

\end{toappendix}

\begin{proof} By induction on $\Rewbis{\head}$, using
  the anti-substitution property (Lem.~\ref{l:anti-substitution})
  (\cf\ App.~\ref{a:completeness}).
\end{proof}

The previous lemma provides a simple argument to obtain the
implication ``head-normalisable implies tigthly typable'', which
can in fact be stated in a more precise way: 

\begin{theorem}[Completeness]
  \label{t:completeness}
  Let $t$ be a head-normalising term such that
  $t \Rewparam{(b,e,m)}{\head} u$, $u \in \M$. 
  Then there exists a tight derivation
  $\Phi \tri  \amuju{b}{e}{m}{\size{u}}{\Gam}{t:\tight}$.
\end{theorem}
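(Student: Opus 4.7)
The plan is to proceed by induction on the length $k = b + e + m$ of the head-reduction sequence from $t$ to its canonical form $u \in \M$, using the two main tools developed in this section: Lemma~\ref{l:M-typable-tight} (Canonical Forms and Tight Derivations) for the base case, and Lemma~\ref{l:subject-expansion} (Exact Subject Expansion) for the inductive step. Since the required tight derivation's context and type are not fixed in advance but only constrained to be tight, we have enough flexibility to let the inductive construction determine them.

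For the base case $k = 0$, we have $t = u \in \M$ and the required counters are $(0,0,0,\size{u})$. I would directly apply Lemma~\ref{l:M-typable-tight} to obtain a tight derivation $\Phi \tri \amuju{0}{0}{0}{\size{u}}{\Gam}{t:\tight}$, which is exactly what is needed.

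For the inductive step $k > 0$, since $t \neq u$ and $t$ is head-normalising, the reduction sequence decomposes as $t \Rew{\head} t' \Rewparam{(b',e',m')}{\head} u$ for some $t'$, where $(b',e',m')$ differs from $(b,e,m)$ by subtracting $1$ from exactly one of the three components, according to whether the first step $t \Rew{\head} t'$ is a $\betas$-step, an $\exps$-step, or a $\matchs$-step. The shorter reduction sequence $t' \Rewparam{(b',e',m')}{\head} u$ satisfies $b' + e' + m' = k - 1$, so by the induction hypothesis there is a tight derivation $\Phi' \tri \amuju{b'}{e'}{m'}{\size{u}}{\Gam}{t':\tight}$. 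I would then apply Lemma~\ref{l:subject-expansion} to this derivation and the step $t \Rew{\head} t'$, obtaining a derivation $\Phi \tri \amuju{b}{e}{m}{\size{u}}{\Gam}{t:\tight}$ whose counters are incremented in precisely the component matching the type of the reduction step performed: $\betas$-expansion increases $b$ by one, $\exps$-expansion increases $e$ by one, and $\matchs$-expansion increases $m$ by one, while the fourth counter (the size of the normal form) is preserved.

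The key observation ensuring the conclusion is that tightness is preserved through subject expansion: since Lemma~\ref{l:subject-expansion} yields a derivation with the same context $\Gam$ and the same type $\tight$ as $\Phi'$, the derivation $\Phi$ is tight whenever $\Phi'$ is. Hence $\Phi$ witnesses the claim. The only mildly delicate point in this plan is to align the three cases of Lemma~\ref{l:subject-expansion} with the kind of head-step performed, but this alignment is exactly what the ``exactness'' of that lemma guarantees; there is no real obstacle, because the heavy lifting has already been done by Lemmas~\ref{l:M-typable-tight}, \ref{l:anti-substitution}, and~\ref{l:subject-expansion}.
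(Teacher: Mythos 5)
Your proposal is correct and follows exactly the paper's own argument: induction on $b+e+m$, with Lemma~\ref{l:M-typable-tight} handling the base case and Lemma~\ref{l:subject-expansion} driving the inductive step, noting that tightness (context and type) is preserved under expansion. No issues.
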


\begin{proof}
  By induction on  $b+e+m$.
  \begin{itemize}
  \item If $(b+e+m) = 0$ then $t=u \in \M$, therefore $\amuju{0}{0}{0}{\size{t}}{\Gam}{t:\tight}$, by Lem.~\ref{l:M-typable-tight}.
  
  \item If $(b+e+m) > 0$, then $t  \Rew{\head} t' \Rewparam{(b',e',m')}{\head} u$, where $b'+e'+m'+1 = b+e+m$. By the \ih $\amuju{b'}{e'}{m'}{\size{u}}{\Gam}{t':\tight}$. Then from Lem.~\ref{l:subject-expansion}, it follows that $\amuju{b}{e}{m}{\size{u}}{\Gam}{t:\tight}$. 
  \end{itemize}
\end{proof}

In summary, soundness and completeness do not only
establish an equivalence between
tight typability and head-normalization, but they provide a
much refined equivalence property  stated as follows:

\begin{corollary}
  Given a term $t$, the following statements are equivalent
  \begin{itemize}
  \item There is a tight derivation $\Phi \tri  \amuju{b}{e}{m}{s}{\Gam}{t:\tight}$. 
  \item There exists a canonical form $u \in \M$  such that
     $t \Rewparam{(b,e,m)}{\head} u$ and $\size{u}=s$.
  \end{itemize}
    
\end{corollary}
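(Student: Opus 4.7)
The corollary is essentially a direct packaging of the two main theorems already proved, namely Theorem~\ref{t:correctness} (Correctness) and Theorem~\ref{t:completeness} (Completeness). My plan is simply to observe that each direction is one of these theorems instantiated appropriately, so no new technical machinery is needed.

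For the forward direction, I would assume a tight derivation $\Phi \tri \amuju{b}{e}{m}{s}{\Gam}{t:\tight}$ and apply Theorem~\ref{t:correctness} directly. This yields a canonical form $u \in \M$ and a reduction sequence $t \Rewparam{(b,e,m)}{\head} u$ with $\size{u} = s$ (the last counter $f$ of the derivation is exactly $s$ by hypothesis, and Correctness identifies it with $\size{u}$). Hence the existential in the second statement is witnessed with the same counters $(b,e,m)$ and the same size $s$.

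For the reverse direction, I would assume the existence of $u \in \M$ with $t \Rewparam{(b,e,m)}{\head} u$ and $\size{u} = s$, and apply Theorem~\ref{t:completeness}. This yields a tight derivation $\Phi \tri \amuju{b}{e}{m}{\size{u}}{\Gam}{t:\tight}$; since $\size{u} = s$, this is exactly a tight derivation with the counter tuple $(b,e,m,s)$ as required.

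There is no genuine obstacle here — the only thing to be careful about is to verify that the numerical data line up: the three dynamic counters $(b,e,m)$ returned by Correctness/Completeness match the first three components of the tuple in the derivation, and the static counter $f$ equals $\size{u}$, i.e.\ the quantity named $s$ in the corollary. Both theorems have already been stated with this precise correspondence, so the corollary is immediate and can be stated in one or two lines per direction.
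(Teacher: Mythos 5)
Your proposal is correct and is exactly how the paper obtains this corollary: it is stated as an immediate packaging of Theorem~\ref{t:correctness} (for the forward direction) and Theorem~\ref{t:completeness} (for the converse), with the counters $(b,e,m)$ and the size $f=\size{u}=s$ matching by the precise statements of those theorems. No further argument is needed.
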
  

\section{Conclusion}
\label{s:conclusion}

This paper provides a quantitative insight of pattern matching by using
{\it statical} tools to study {\it dynamical} properties.  Indeed, our
(static) typing system \Upper (resp. \Exact) provides
upper (resp. exact) bounds  about time and space
properties related to (dynamic) computation. More precisely, the
tuples of integers in the conclusion of a {\it tight} \Exact-derivation for a
term $t$ provides the exact {\it length} of the head-normalisation sequence
of $t$ and the {\it size} of its normal form. Moreover, the length of
the normalisation sequence is {\it discriminated} according to
different kind of steps performed to evaluate $t$.

Future work includes generalisations to more powerful notions of
(dynamic) patterns, and to other reduction strategies
for pattern calculi, as well to programs with recursive
schemes. Inhabitation for our typing system is conjectured to be
decidable, as the one in~\cite{BKRDR15}, but this still needs to be 
formally proved, in which case the result
"solvability = typing+ inhabitation" in opt. cit. would be
restated
  in a simpler framework.
The quest of a clear notion of model for
pattern calculi also remains open.

Last, but not least, time cost analysis of a language with
constructors and pattern matching is studied in~\cite{AccattoliB17},
where it is shown that evaluation matching rules other than
$\beta$-reduction may be negligible, depending on the reduction
strategy and the specific notion  of value.  The quantitative
technical tools that we provide in this paper should be able to prove
the same result by means of a tight type system.

\renewcommand{\em}{\it}
\bibliographystyle{plainurl}

\newpage

\appendix
\section{The \Upper\ Typing System}
\label{a:typing-system}

{\bf Lemma \ref{l:relevance+clash-free-upper}}
    Let $\Phi \tri \seq{\Gam}{t:\sig}$. Then,\smallskip
\begin{enumerate}
\item \gettoappendix {relevance-upper}
\item \gettoappendix {clash-free-upper}
\item \gettoappendix {i:qsr}
\item \gettoappendix {i:se}
\end{enumerate}

\begin{proof}
  Let $\Phi \tri \seq{\Gam}{t:\sig}$.
  \begin{enumerate}
    \item By straightforward induction on $\Phi$. Note that $\Gam = (\cmin{\Gam}{\var{p}});\Gam|_{p}$ in both $(\introarrow)$ and $(\trsub)$ rules.
    \item By induction on $\Phi$ and the syntax-directed aspect of  system \Upper.

The base case for rule $(\ax)$ is trivial and the cases for
rules $(\many), (\introarrow), (\trpair)$ are straightfoward from the
\ih\ We present the case for rule $(\app)$, the case for rule
$(\trsub)$ being similar.

Let $\Phi$ be of the form
\[\infer{\Phi_u \tri \seq{\Gam_u}{u: \A \rew \sig} \sep\sep 
       \Phi_v \tri \seq{\Gam_v}{v : \A }}
     {\seq{\Gam_u \inter \Gam_v}{ u\,v: \sig}}\]
   By \ih\ both $u$ and $v$ are (head) clash-free. Note that $u$ cannot be
   of the form $\appctx{\slist}{\pair{u_1}{u_2}}$ thus is either
   $\appctx{\slist}{x}$ or $\appctx{\slist}{\l p. u'}$.

If $u$ is $\appctx{\slist}{x}$ then $t = u\,v$ is (head) clash-free.

If $u$ is $\appctx{\slist}{\l p. u'}$ then
in order to ensure $t = \appctx{\slist}{\l p. u'}v$ is (head) clash-free we need to
guarantee that either $p$ is not a pair or $v$ is not of the form $\appctx{\slist'}{\l q. v'}$. We analyse the two possibilities for $\A$:
\begin{itemize}
\item If $\A = \mult{\prodt{\A_1}{\A_2}}$ then $v$
cannot be of the form $\appctx{\slist'}{\l
    q. v'}$, which can only be typed with functional types. Therefore, $t$ is (head)  clash-free.
  
\item If $\A \neq \mult{\prodt{\A_1}{\A_2}}$ then $p$ is a variable thus $t$ is (head) clash-free.
\end{itemize}

    \item By induction on $\Rew{\pattern}$ (resp. $\Rew{\head}$) and
(Substitution) Lemma~\ref{l:substitution-upper} below. The
  first three cases represent the base cases for both
  reductions, where
the size relation is strict.
\begin{itemize}
    \item $t= \appctx{\slist}{\l p. v}u \Rew{\pattern/\head} \appctx{\slist}{v \esub{p}{u}} =t'$.
      The proof is by induction on the list $\slist$. We only show
      the case of the empty list as the other one is straightforward.
      The typing derivation $\Phi$ is necessarily of the form
      \[ \begin{prooftree}
          \begin{prooftree}
            \seq{\Gam_v}{v:\sigma} \sep
            \Gam_v|_{p} \pder p: \A
          \justifies{\seq{\cmin{\Gam_v}{\var{p}}}{\l p.v:\A \rew \sigma} }
        \end{prooftree} \sep
        \seq{\Gam_u}{u: \A}     
          \justifies{\seq{\cmin{\Gam_v}{\var{p} \inter  \Gam_u}}{(\l p. v)  u :\sigma}}
         \end{prooftree} \] 
       We then construct the following derivation $\Phi'$:
\[ \begin{prooftree}
          \seq{\Gam_v}{v:\sigma} \sep
          \Gam_v|_{p} \pder p: \A \sep
          \seq{\Gam_u}{u: \A}      
          \justifies{\seq{\cmin{\Gam_v}{\var{p} \inter  \Gam_u}}{v\esub{p}{u} :\sigma}}
        \end{prooftree} \]
     Moreover, $\sz{\Phi} = \sz{\Phi'} + 1$.

   \item $t= v\esub{x}{u}  \Rew{\pattern}
    v\isub{x}{u}=t'$. Then $\Phi$ has
     premisses
      $\Phi_v \tri \seq{\Gam_v; x:\A}{v:\sigma}$, $\Pi_x \tri x:\A \pder x:\A$ and $\Phi_u \tri
      \seq{\Gam_u}{u: \A}$, where $\Gam = \Gam_v \inter  \Gam_u$ and
      $\sz{\Phi} = \sz{\Phi_v} + \sz{\Phi_u} + \sz{\Pi_x} + 1$. Lemma~\ref{l:substitution-upper} then gives a derivation
      $\Phi'$ ending with $\seq{\Gam_v \inter \Gam_u}{v\isub{x}{u}:\sigma}$,
      where $\sz{\Phi'} = \sz{\Phi_v} + \sz{\Phi_u} - \size{\A} <
      \sz{\Phi_v} + \sz{\Phi_u} + \sz{\Pi_x} < \sz{\Phi}$, since $\size{\A} \geq
        0$ and $\sz{\Pi_x} = 1$.

      When  $t= v\esub{x}{u}   \Rew{\head}  v\isub{x}{u}=t'$, where $v
      \notRew{\head}$, the same results hold.
    
    \item
      $t= v\esub{\pair{p_1}{p_2}}{\appctx{\slist}{\pair{u_1}{u_2}}}
      \Rew{\pattern} \appctx{\slist}{
        v\esub{p_1}{u_1}\esub{p_2}{u_2}}=t'$. Let abbreviate
        $p = \pair{p_1}{p_2}$ and $u = \pair{u_1}{u_2}$. The typing
      derivation $\Phi$ is necessarily of the form
      \[ \begin{prooftree}
        \Phi_v \tri\seq{\Gam_v}{v:\sigma} \sep
        \Pi_{p} \tri \Gam_v|_{p} \pder \pair{p_1}{p_2}:\A \sep
        \Phi_u \tri \seq{\Gam_u}{\appctx{\slist}{\pair{u_1}{u_2}}:\A} 
         \justifies{\seq{\cmin{\Gam_v}{\var{\pair{p_1}{p_2}}} \inter  \Gam_u}{v\esub{\pair{p_1}{p_2}}{\appctx{\slist}{\pair{u_1}{u_2}}}}:\sigma }
       \end{prooftree} \]
     Moreover, $\A = \mult{\prodt{\A_1}{\A_2}}$ and $\sz{\Phi}
     = \sz{\Phi_v} + \sz{\Pi_{p}} + \sz{\Phi_u} + 1$.
     
    Then $\Pi_p$ is of the form: 
     \[ \begin{prooftree}
          \Pi_{p_1} \tri \Gam_v|_{p_1} \pder p_1:\A_1 \sep
          \Pi_{p_2} \tri \Gam_v|_{p_2} \pder p_2:\A_2 \sep p_1 \# p_2
          \justifies{\Gam_v|_{p} \pder \pair{p_1}{p_2}:\mult{\prodt{\A_1}{\A_2}}}
        \end{prooftree} \]
      and $\sz{\Pi_p} = \sz{\Pi_{p_1}} + \sz{\Pi_{p_2}} + 1$
      
      The proof is
       then by induction on the list $\slist$. 

    \begin{itemize}
      \item For
       $\slist = \Box$ we have $\Phi_u$ of the form:
 \[ \begin{prooftree}\begin{prooftree}
           \Phi_{u_1}
          \tri \seq{\Gam_{u_1}}{u_1: \A_1}   \sep 
          \Phi_{u_2}
          \tri \seq{\Gam_{u_2}}{u_2: \A_2}
          \justifies{\seq{\Gam_u}{\pair{u_1}{u_2}:\prodt{\A_1}{\A_2}}}
            \end{prooftree}
          \justifies{ \seq{\Gam_u}{\pair{u_1}{u_2}:\A}}
        \end{prooftree} \]
      where $\Gam_u = \Gam_{u_1} \inter \Gam_{u_2}$ and
      $\sz{\Phi_u} = \sz{\Phi_{u_1}} + \sz{\Phi_{u_2}} + 1$.
  We first construct the following derivation:
     \[  \begin{prooftree}
           \Phi_v \tri \seq{\Gam_v}{v:\sigma} \sep
           \Pi_{p_1} \tri \Gam_v|_{p_1} \pder p_1:\A_1 \sep
           \Phi_{u_1} \tri \seq{\Gam_{u_1}}{u_1: \A_1} 
          \justifies{\seq{\cmin{\Gam_v}{\var{p_1}} \inter \Gam_{u_1}}{v\esub{p_1}{u_1}: \sigma}}
        \end{prooftree} \]

      By using the relevance property and $\alpha$-conversion to assume freshness of bound variables,
      we can construct a derivation $\Phi'$ with conclusion
      $\seq{
        \cmin{\cmin{\Gam_v}{\var{p_1}}}{\var{p_2}} \inter \Gam_{u_1} \inter \Gam_{u_2}}{v\esub{p_1}{u_1}\esub{p_2}{u_2}:
        \sigma}$. Note that $\cmin{\Gam_v}{\var{\pair{p_1}{p_2}}}
      = \cmin{\cmin{\Gam_v}{\var{p_1}}}{\var{p_2}}$ since $p_1
        \# p_2$. Moreover, $\sz{\Phi'} = \sz{\Phi_v} +
        \sz{\Pi_{p}} + \sz{\Phi_u} < \sz{\Phi}$.

      \item Let $\slist = \slist'\esub{q}{s}$. Then $\Phi_u$ is necessarily
        of the following form:
        \[ \begin{prooftree}
            \begin{prooftree}
              \Phi_{\slist'} \tri \seq{\Del_u}{\appctx{\slist'}{\pair{u_1}{u_2}}:\prodt{\A_1}{\A_2} } \sep
              \Pi_q \tri \Seq{\Del_u|_{q}}{q:\B} \sep
             \Phi_s \tri  \seq{\Del_s}{s: \B}
              \justifies{ \seq{\Gam_u}{\appctx{\slist'}{\pair{u_1}{u_2}}\esub{q}{s}:\prodt{\A_1}{\A_2}}}
             \end{prooftree}
            \justifies{\seq{\Gam_u}{\appctx{\slist'}{\pair{u_1}{u_2}}\esub{q}{s}:\A}}
            \end{prooftree} \]
          where $\Gam_u = \cmin{\Del_u}{\var{q}} \inter
          \Del_s$. 

          We will apply the \ih\ on the reduction step
          $v\esub{p}{\appctx{\slist'}{u}}
      \Rew{\pattern} \appctx{\slist'}{
        v\esub{p_1}{u_1}\esub{p_2}{u_2}}$, in particular we type
      the left-hand side term with the following derivation $\Psi_1$:
      \[ \begin{prooftree}
          \Phi_v \sep
          \Pi_p \sep
          \begin{prooftree}
            \Phi_{\slist'}
            \tri\seq{\Del_u}{\appctx{\slist'}{\pair{u_1}{u_2}}:\prodt{\A_1}{\A_2}}
            \justifies{\seq{\Del_u}{\appctx{\slist'}{\pair{u_1}{u_2}}:\A}} 
          \end{prooftree}
          \justifies{\seq{\cmin{\Gam_v}{\var{p} \inter \Del_u}}{v\esub{\pair{p_1}{p_2}}{\appctx{\slist'}{\pair{u_1}{u_2}}}}: \sig}
          \end{prooftree}\] 
        The \ih\ gives a derivation $\Psi_2 \tri
        \seq{\cmin{\Gam_v}{\var{p} \inter \Del_u}}{\appctx{\slist'}{v\esub{p_1}{u_1}\esub{p_2}{u_2}}}: \sig$ verifying $\sz{\Psi_2} < \sz{\Psi_1}$. Let $\Lambda = \cmin{\Gam_v}{\var{p} \inter \Del_u}$.
        We conclude with the following derivation
        $\Phi'$:
         \[ \begin{prooftree}
          \Psi_2 \sep
          \Pi_q \tri \Seq{\Del_u|_{q}}{q:\B} \sep
          \Phi_s \tri     \seq{\Del_s}{s: \B}
          \justifies{\seq{\cmin{\Lambda}{\var{q}} \inter \Del_s}{\appctx{\slist'}{v\esub{p_1}{u_1}\esub{p_2}{u_2}}\esub{q}{s}: \sig}}
          \end{prooftree}\] 
        Indeed, we first remark that $\Lambda|_{q} = \Del_u|_{q}$ holds by
        relevance and $\alpha$-conversion. Secondly,
        $\cmin{\Gam_v}{\var{p}} \inter \Gam_u =\cmin{\Gam_v}{\var{p}} \inter
            (\cmin{\Del_u}{\var{q}}) \inter \Del_s =
                      \cmin{\Lambda}{\var{q}} \inter \Del_s$
       also holds  by
       relevance and $\alpha$-conversion. Last, we have
       $\sz{\Phi'} = \sz{\Psi_2}+ \sz{\Pi_q} + \sz{\Phi_s} + 1
       < \sz{\Psi_1}+ \sz{\Pi_q} + \sz{\Phi_s} + 1 =
       \sz{\Phi_v} + \sz{\Pi_p} + \sz{\Phi_{\slist'}} + 1 + \sz{\Pi_q} + \sz{\Phi_s} + 1 = 
       \sz{\Phi_v} + \sz{\Pi_p} + \sz{\Phi_u} + 1 = \sz{\Phi}$.\\
     \end{itemize}
   When $t= v\esub{\pair{p_1}{p_2}}{\appctx{\slist}{\pair{u_1}{u_2}}}
      \Rew{\head} \appctx{\slist}{
        v\esub{p_1}{u_1}\esub{p_2}{u_2}}=t'$, where $v
    \notRew{\head}$, the same results hold.

    \item $t= \l p. u \Rew{\pattern/\head} \l p. u' =t'$, where
      $u \Rew{\pattern/\head} u'$. This case is straightforward by the \ih\
    \item $t= vu \Rew{\pattern} v'u =t'$, where
      $v \Rew{\pattern} v'$
      and  $t= vu \Rew{\head} v'u =t'$, where
      $v \Rew{\head} v'$ and $\isnotabs{v}$. This case is
      straightforward by the \ih\
      \item $t= vu \Rew{\pattern} vu' =t'$, where
      $u \Rew{\pattern} u'$. This case is
      straightforward by the \ih. In particular, when $v$ has type
      $\emm \rew \sig$, $u$ (and then $u'$) would be typed by a
      $(\many)$ rule with no premiss, holding $\sz{\Phi} = \sz{\Phi'}$.
    \item $t= v\esub{p}{u} \Rew{\pattern/\head} v'\esub{p}{u} =t'$, where
      $v \Rew{\pattern/\head} v'$. This case is straightforward by the \ih\
      
    \item $t= v\esub{p}{u} \Rew{\pattern} v\esub{p}{u'} =t'$, where
      $u \Rew{\pattern} u'$. The proof also holds here by the \ih.
      In particular, when $p=x$ and $x \notin \fv{v}$, then
      by relevance we have $x$ of type $\emul$ as well as $u$ of type $\emul$.
      This means that $u$,$u'$ are typed by a $(\many)$ rule with no premiss,
      and in that case we get $\sz{\Phi} = \sz{\Phi'}$.

      Now, let us consider $t= v\esub{p}{u} \Rew{\head} v\esub{p}{u'} =t'$, where
      $v \notRew{\head}$ and $p\neq x$ and $u \Rew{\head} u'$.
      By construction there are typing subderivations 
      $\Phi_v \tri \seq{\Gam_v}{v:\sig}$,
      $ \Pi_p \tri \Gam_v|_{p} \pder p:\A$ and
      $\Phi_u \tri \seq{\Gam_u}{u:\A}$
      such that $\Gam = \cmin{\Gam_v}{\var{p}} \inter \Gam_u$.
      Since $p$ is not a variable then $\Pi_p$ ends with
      rule $\trpairpat$. In which case $\A$ contains
      exactly one type, let us say $\A = \mult{\sig_u}$. Then $\Phi_u$ has the following form
      \[ \infer{\seq{\Gam_u}{u:\sig_u}}
               {\Phi_u \tri \seq{\Gam_u}{u:\mult{\sig_u}}}\]
             The \ih\ applied to the premise of $\Phi_u$ gives
             a derivation 
             $\seq{\Gam_u}{u':\sig_u}$
             and having the expected size relation. 
             To conclude we build a type derivation  $\Phi'$ for
             $v\esub{p}{u'}$ having the expected size relation. 
\end{itemize}

    \item By induction on $\Rew{\pattern}$ (resp. $\Rew{\head}$) and (Anti-Substitution) Lemma~\ref{l:anti-substitution-upper} below. 
\begin{itemize}
\item $t'=\appctx{\slist}{\l p. v}u \Rew{\pattern/\head} \appctx{\slist}{v \esub{p}{u}}=t$. The proof is by induction on the list $\slist$. We consider the case $\slist = \Box$, since the other case follows straightforward by \ih. 
      The typing derivation $\Phi$ is necessarily of the form:
\[ \begin{prooftree}
          \seq{\Gam_v}{v:\sigma} \sep
          \Gam_v|_{p} \pder p: \A \sep
          \seq{\Gam_u}{u: \A}      
          \justifies{\seq{\cmin{\Gam_v}{\var{p} \inter  \Gam_u}}{v\esub{p}{u} :\sigma}}
        \end{prooftree} \]
       We then construct the following derivation $\Phi'$:
      \[ \begin{prooftree}
          \begin{prooftree}
            \seq{\Gam_v}{v:\sigma} \sep
            \Gam_v|_{p} \pder p: \A
          \justifies{\seq{\cmin{\Gam_v}{\var{p}}}{\l p.v:\A \rew \sigma} }
        \end{prooftree} \sep
        \seq{\Gam_u}{u: \A}     
          \justifies{\seq{\cmin{\Gam_v}{\var{p} \inter  \Gam_u}}{(\l p. v)  u :\sigma}}
         \end{prooftree} \] 
      Moreover, $\sz{\Phi'} = \sz{\Phi} + 1$.
   \item $t'=v\esub{x}{u}  \Rew{\pattern} v\isub{x}{u}=t$. Then by Lemma~\ref{l:anti-substitution-upper}, there exist derivations $\Phi_v, \Phi_u$, contexts $\Gam_v,\Gam_u$ and a multi-type $\A$, such that $\Phi_v \tri \seq{\Gam_v; x:\A}{v:\sigma}$, $\Phi_u \tri \seq{\Gam_u}{u: \A}$, $\Gam = \Gam_v \inter  \Gam_u$, and $\sz{\Phi} = \sz{\Phi_v} + \sz{\Phi_u}  - \size{\A}$. Furthermore, one has $\Pi_x \tri x:\A \pder x: \A$
 Then one can construct the following derivation $\Phi'$.
 \[ \begin{prooftree}
            \seq{\Gam_v;x:\A}{v:\sigma} \sep
           x:\A \pder x: \A  \sep
        \seq{\Gam_u}{u: \A}     
          \justifies{\seq{\Gam_v \inter \Gam_u}{  v\esub{x}{u}  :\sigma}}
 \end{prooftree} \]
 Furthermore, $\sz{\Phi'} = \sz{\Phi_v} + \sz{\Pi_x} + \sz{\Phi_u} > \sz{\Phi_v} + \sz{\Phi_u} - \size{A}$, since $\size{\A} \geq 0$ and $\sz{\Pi_x} = 1$.
The same result holds for $t=v\esub{x}{u}   \Rew{\head}
v\isub{x}{u}=t'$, where $v\notRew{\head}$.

\item  $t'=v\esub{\pair{p_1}{p_2}}{\appctx{\slist}{\pair{u_1}{u_2}}}
  \Rew{\pattern} \appctx{\slist}{v\esub{p_1}{u_1}\esub{p_2}{u_2}}
  =t$. Let abbreviate $p = \pair{p_1}{p_2}$ and $u = \pair{u_1}{u_2}$. 
  The proof is by induction on the list $\slist$.
  \begin{itemize}
    \item $\slist = \Box$, then the typing derivation $\Phi$ is necessarily of the form:
{\small 
\[ \begin{prooftree}
          \begin{prooftree}
            \seq{\Gam_v}{v:\sigma} \ \
            \Gam_v|_{p_1} \pder p_1: \A_1 \ \ \seq{\Gam_1}{u_1:\A_1}  
          \justifies{\seq{(\cmin{\Gam_v}{\var{p_1}})\inter \Gam_1}{v\esub{p_1}{u_1} :\sigma} }
        \end{prooftree} ((\cmin{\Gam_v}{\var{p_1}})\inter \Gam_1)|_{p_2} \pder p_2:\A_2 \ \  \seq{\Gam_2}{u_2: \A_2}     
          \justifies{\seq{((\cmin{(\cmin{\Gam_v}{\var{p_1}}) \inter  \Gam_1)}{\var{p_2}) \inter \Gam_2}}{v\esub{p_1}{u_1}\esub{p_2}{u_2} :\sigma}}
         \end{prooftree} \]} 
 $\Gam = (\cmin{((\cmin{\Gam_v}{\var{p_1}}) \inter  \Gam_1)}{\var{p_2}}) \inter
 \Gam_2$. Moreover, $(\cmin{((\cmin{\Gam_v}{\var{p_1}}) \inter  \Gam_1)}{\var{p_2}}) = (\cmin{(\cmin{\Gam_v}{\var{p_1}})}{\var{p_2}} \inter  \cmin{\Gam_1}{\var{p_2}} )$, where $\cmin{(\cmin{\Gam_v}{\var{p_1}})}{\var{p_2}} = \cmin{\Gam_v}{\var{p}}$ and $\cmin{\Gam_1}{\var{p_2}} =_{Lem. \ref{l:relevance+clash-free}:\ref{relevance}} \Gam_1$. Similarly, $ ((\cmin{\Gam_v}{\var{p_1}}) \inter  \Gam_1)|_{p_2} =_{Lem. \ref{l:relevance+clash-free}:\ref{relevance}}
 (\cmin{\Gam_v}{\var{p_1}})|_{p_2}$ and, by linearity of patterns, $
 (\cmin{\Gam_v}{\var{p_1}})|_{p_2} =\Gam_v|_{p_2}$. Hence, we
   conclude with the following derivation $\Phi'$:
 
 \[
 \begin{prooftree}
   \seq{\Gam_v}{v:\sigma} \sep
   \begin{prooftree} \Gam_v|_{p_1} \pder p_1:\A_1 \sep \Gam_v|_{p_2} \pder p_2:\A_2
     \justifies {\Gam_v|_{p} \pder p :
       \mult{\prodt{\A_1}{\A_2}} }
   \end{prooftree}
       \sep
   \begin{prooftree}
     \begin{prooftree}
     \seq{\Gam_1}{u_1:\A_1} \sep \seq{\Gam_2}{u_2:\A_2}
     \justifies{\seq{\Gam_1 \inter \Gam_2}{u :\prodt{\A_1}{\A_2}}}  
     \end{prooftree}
      \justifies{\seq{\Gam_1 \inter \Gam_2}{u : \mult{\prodt{\A_1}{\A_2}}}}  
   \end{prooftree}
   \justifies{\seq{(\cmin{\Gam_v}{\var{p}}) \inter (\Gam_1 \inter \Gam_2)}{v\esub{p}{u}:\sig}}
 \end{prooftree}
 \]
 Furthermore, $\sz{\Phi} = \sz{\Phi_v} +  \sz{\Pi_{p_1}} + \sz{\Phi_{u_1}} + 1 + \sz{\Pi_{p_2}} + \sz{\Phi_{u_2}} +1 $ and   $\sz{\Phi'} = \sz{\Phi_v} + \underbrace{\sz{\Pi_{p_1}} + \sz{\Pi_{p_2}} + 1}_{\sz{\Pi_{p}}} + \underbrace{\sz{\Phi_{u_1}} + \sz{\Phi_{u_2}} + 1}_{\sz{\Phi_{u}}} +1$.
 
\item If $\slist = \slist'\esub{q}{s}$, then $t'=v\esub{\pair{p_1}{p_2}}{\appctx{\slist'\esub{q}{s}}{\pair{u_1}{u_2}}} \Rew{\pattern} \appctx{\slist'\esub{q}{s}}{v\esub{p_1}{u_1}\esub{p_2}{u_2}}= \appctx{\slist'}{v\esub{p_1}{u_1}\esub{p_2}{u_2}}\esub{q}{s} = t$, and $\Phi$ is of the form: 
  \[
  \begin{prooftree}
    \Phi_{\slist'} \tri \seq{\Gam_{\slist'}}{\appctx{\slist'}{v\esub{p_1}{u_1}\esub{p_2}{u_2}}: \sig} \sep \Pi_q \tri \Gam_{\slist'}|_{q} \pder q :\A \sep \Phi_s \tri \seq{\Gam_s}{s:\A}  
    \justifies{\seq{\cmin{\Gam_{\slist'}}{\var{q}} \inter \Gam_s}{\appctx{\slist'}{v\esub{p_1}{u_1}\esub{p_2}{u_2}}\esub{q}{s} : \sig} }
  \end{prooftree}
  \]
  From $v\esub{\pair{p_1}{p_2}}{\appctx{\slist'}{\pair{u_1}{u_2}}}
  \Rew{\pattern} \appctx{\slist'}{v\esub{p_1}{u_1}\esub{p_2}{u_2}}$
  and $\Phi_{\slist'}$ by the \ih one gets $\Phi'_{\slist'} \tri
  \seq{\Gam_{\slist'}}{v\esub{\pair{p_1}{p_2}}{\appctx{\slist'}{\pair{u_1}{u_2}}}:
    \sig}$ with $\sz{\Phi'_{\slist'}} > \sz{\Phi_{\slist'}}$. Furthermore $\Phi'_{\slist'}$ is necessarily of the form:
  \[
  \begin{prooftree}
    \Phi_v \tri \seq{\Gam_v}{v:\sig} \sep \Pi_{p} \tri \Gam_v|_{p} \pder p : \mult{\prodt{\A_1}{\A_2}}\sep
    \begin{prooftree}
      \Phi_u \tri \seq{\Gam_u}{\appctx{\slist'}{u}: \prodt{\A_1}{\A_2}}
      \justifies{ \seq{\Gam_u}{\appctx{\slist'}{u}: \mult{\prodt{\A_1}{\A_2}}}}
    \end{prooftree}
    \justifies{\seq{\cmin{\Gam_{v}}{\var{p}} \inter \Gam_u}{v\esub{p}{\appctx{\slist'}{u}}: \sig}}
  \end{prooftree}
  \]
  Then one can construct the following derivation $\Phi'_u$:

  \[
  \begin{prooftree}
    \begin{prooftree}
      \seq{\Gam_u}{\appctx{\slist'}{u}: \prodt{\A_1}{\A_2}} \sep
      \Gam_u|_{q} \pder q: \A \sep \Phi_s
      \justifies{
        \seq{\cmin{\Gam_u}{\var{q}} \inter \Gam_s}{\appctx{\slist'\esub{q}{s}}{u} : \prodt{\A_1}{\A_2}}}
    \end{prooftree}
      \justifies{\seq{\cmin{\Gam_u}{\var{q}} \inter \Gam_s}{\appctx{\slist'\esub{q}{s}}{u} : \mult{\prodt{\A_1}{\A_2}}}}
    \end{prooftree}
  \]
  From which we build $\Phi'$:
  \[
  \begin{prooftree}
    \seq{\Gam_v}{v:\sig} \sep \Gam_v|_{p} \pder p : \mult{\prodt{\A_1}{\A_2}}\sep
    \seq{\cmin{\Gam_u}{\var{q}} \inter \Gam_s }{\appctx{\slist'\esub{q}{s}}{u} : \mult{\prodt{\A_1}{\A_2}}} 
    \justifies{\seq{\cmin{\Gam_v}{\var{p}} \inter \cmin{\Gam_u}{\var{q}} \inter \Gam_s }{v\esub{p}{\appctx{\slist'\esub{q}{s}}{u}}:\sig}}
  \end{prooftree}
  \]
  With $\cmin{\Gam_v}{\var{p}} \inter \cmin{\Gam_u}{\var{q}} \inter \Gam_s = \cmin{(\cmin{\Gam_v}{\var{p}} \inter \Gam_u)}{\var{q}} \inter \Gam_s = \Gam$.
  Furthermore $\sz{\Phi} = \sz{\Phi_{\slist'}} + \sz{\Pi_q} + \sz{\Phi_s} +1 <  \sz{\Phi'_{\slist'}} + \sz{\Pi_q} + \sz{\Phi_s} +1 = \sz{\Phi_v} + \sz{\Pi_{p}}  + \underbrace{\sz{\Phi_u}+ \sz{\Pi_q} + \sz{\Phi_s} + 1}_{\sz{\Phi'_u}}  +1 = \sz{\Phi'}$.
  \end{itemize}
The same result holds for $t'=v\esub{\pair{p_1}{p_2}}{\appctx{\slist}{\pair{u_1}{u_2}}} \Rew{\pattern} \appctx{\slist}{v\esub{p_1}{u_1}\esub{p_2}{u_2}}=t$, where $v\notRew{\head}$.
\item $t'= \l p. u' \Rew{\pattern/\head} \l p. u =t$, where
      $u' \Rew{\pattern/\head} u$. This case is straightforward by the \ih\
    \item $t'= v'u \Rew{\pattern} vu =t$, where
      $v' \Rew{\pattern} v$ and $t'= v'u \Rew{\head} vu =t$, where
      $v' \Rew{\head} v$ and $\isnotabs{v'}$. This case is
      straightforward by the \ih\
    \item $t'= vu' \Rew{\pattern} vu =t$, where
      $u' \Rew{\pattern} u$.  This case is
      straightforward by the \ih\  In particular, when $v$ has type
      $\emm \rew \sig$, $u$ (and then $u'$) would be typed by a
      $(\many)$ rule with no premiss, holding $\sz{\Phi} = \sz{\Phi'}$. 
    \item $t'= v'\esub{p}{u} \Rew{\pattern/\head} v\esub{p}{u} =t$, where
      $v' \Rew{\pattern/\head} v$. This case is straightforward by the \ih\
    \item $t'= v\esub{p}{u'} \Rew{\pattern} v\esub{p}{u} =t$, where
      $u' \Rew{\pattern} u$. The proof also holds here by the \ih.
      In particular, when $p=x$ and $x \notin \fv{v}$, then
      by relevance we have $x$ of type $\emul$ as well as $u$ of type $\emul$.
      This means that $u,u'$ are typed by a $(\many)$ rule with no premiss,
      and in that case we get $\sz{\Phi} = \sz{\Phi'}$.
      
      Now, let us consider $t'= v\esub{p}{u'} \Rew{\head}
      v\esub{p}{u} =t$, where $v \notRew{\head}$ and
       $p\neq x$ and $u' \Rew{\head} u$. By construction there are
       subderivations $\Phi_v \tri \seq{\Gam_v}{v:\sig}$,
       $\Pi_p \tri \Gam_v|_{p} \pder p:\A$ and
       $\Phi_{u} \tri \seq{\Gam_{u}}{u:\A}$ for some multi-set
       $\A$ and $\Gam = (\cmin{\Gam_v}{\var{p}}) \inter \Gam_{u}$.
       Since $p$ is not a variable then $ \Pi_p$ ends with
       rule $(\trpairpat)$, in which case $\A$ contains
      only one type, let us say $\A = \mult{\sig_{u}}$. Then $\Phi_{u}$
      has the following form:
      \[ \Phi_{u} \tri \inferrule{\seq{\Gam_{u}}{u:\sig_{u}}}
         {\seq{\Gam_{u}}{u':\mult{\sig_{u}}}}\]
         The \ih\ applied to the premise of $\Phi_{u}$ gives a derivation 
         $ \seq{\Gam_{u}}{u':\sig_{u}}$.
         Therefore, we construct the following derivation $\Phi'$:
         \[
         \begin{prooftree}
           \seq{\Gam_v}{v:\sig} \sep \Gam_v|_{p} \pder p : \mult{\sig_{u}} \sep \seq{\Gam_{u}}{u':\sig_{u}}
           \justifies{ \seq{\cmin{\Gam_v}{\var{p}} \inter \Gam_{u}}{ v\esub{p}{u'}}: \sig}
         \end{prooftree}
         \]
         Furthermore, $\sz{\Phi} = \sz{\Phi_v} + \sz{\Pi_p} + \sz{\Phi_{u}} +1$  and $\sz{\Phi'} = \sz{\Phi_v} + \sz{\Pi_p} + \sz{\Phi_{u'}} +1$. Since by the \ih\ $\sz{\Phi_{u'}} > \sz{\Phi_{u}}$, then  $\sz{\Phi'} >  \sz{\Phi}$.
             
\end{itemize}

  \end{enumerate}
\end{proof}

\begin{lemma}[Substitution for System \Upper]
  \label{l:substitution-upper}
  If $\Phi_t \tri  \seq{\Gam;x:\A}{t:\sig}$, and
  $\Phi_u \tri  \seq{\Del}{u:\A}$, then there
  exists $\Phi_{t\isub{x}{u}} \tri  \seq{\Gam \inter \Del}{t\isub{x}{u}:\sig}$
  such that $\sz{\Phi_{t\isub{x}{u}}} = \sz{\Phi_t} + \sz{\Phi_u } - \size{\A}$.
\end{lemma}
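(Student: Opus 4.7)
The proof proceeds by structural induction on $\Phi_t$. We strengthen the statement so that it also covers the case where the conclusion is $\seq{\Gam;x:\A}{t:\B}$ with $\B$ a multi-set type; this is needed because the inductive hypothesis must be applied to premises of rules such as $(\app)$ and $(\trpair)$, which type subterms with multi-set types. The central idea is to split $\A$ according to how the typed occurrences of $x$ are distributed across the subterms of $t$, and to correspondingly decompose $\Phi_u$ through its outermost $(\many)$ rule.

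For the base case, $\Phi_t$ is the axiom $x:\mult{\sig}\vdash x:\sig$, so $\Gam$ is empty and $\A = \mult{\sig}$. Since $\Phi_u$ types $u:\mult{\sig}$ and thus ends in $(\many)$ with a unique premise $\Phi_u' \tri \seq{\Del}{u:\sig}$, we simply take $\Phi_{x\isub{x}{u}} := \Phi_u'$. The size equation follows from $\sz{\Phi_t}=1$, $\size{\A}=1$, and $\sz{\Phi_u'}=\sz{\Phi_u}$ (the $(\many)$ rule does not count towards the size). The subcase $t=y\neq x$ uses relevance to force $\A = \ems$ and $\sz{\Phi_u}=0$, making the equation trivial.

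For the $(\many)$ case of the strengthened statement, $\Phi_t$ has premises $(\Phi_t^k \tri \seq{\Gamk;x:\A_k}{t:\sig_k})_{\kK}$ with $\A = \sqcup_{\kK}\A_k$. We partition the premises of $\Phi_u$, which by unbundling its $(\many)$ rule are $(\Phi_u^j \tri \seq{\Delj}{u:\tauj})_{j\in J}$ with $\A = \mult{\tauj}_{j\in J}$, into subfamilies $J = \sqcup_\kK J_k$ matching $\A_k = \mult{\tauj}_{j\in J_k}$; grouping each subfamily under $(\many)$ produces $\Phi_u^k \tri \seq{\inter_{j\in J_k}\Delj}{u:\A_k}$. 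Applying the IH to each pair $(\Phi_t^k,\Phi_u^k)$ and reassembling with $(\many)$ yields the result. The remaining inductive cases $(\introarrow)$, $(\app)$, $(\trpair)$, $(\trsub)$ follow the same schema: split $\A$ between the subterm premises, correspondingly split $\Phi_u$, apply the IH, and reassemble with the same rule. By $\alpha$-conversion we may assume $x\notin\var{p}$ for any bound pattern $p$, so pattern subderivations are preserved unchanged.

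The main obstacle is uniformly verifying the size equation $\sz{\Phi_{t\isub{x}{u}}} = \sz{\Phi_t} + \sz{\Phi_u} - \size{\A}$. Intuitively, each of the $\size{\A}$ typed occurrences of $x$ in $\Phi_t$ corresponds to an axiom of size $1$, and these are replaced by the derivations $\Phi_u^j$ whose sizes sum to $\sz{\Phi_u}$; the net change is therefore exactly $\sz{\Phi_u}-\size{\A}$. This balance is preserved through the induction because the splittings of $\A$ between subterm premises are disjoint, a consequence of the additive treatment of typing contexts in the rules.
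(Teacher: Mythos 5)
Your proposal is correct and follows essentially the same route as the paper's proof: the same strengthening to multi-set types, the same induction on $\Phi_t$ with splitting of $\A$ and the corresponding decomposition of $\Phi_u$ through $(\many)$, and the same size accounting (each of the $\size{\A}$ axioms for $x$, of size $1$, is traded for a premise of $\Phi_u$). The base cases and the treatment of bound patterns via $\alpha$-conversion and relevance also match the paper.
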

\begin{proof} We generalise the statement as follows:
  Let $\Phi_u \tri  \seq{\Del}{u:\A}$.
  \begin{itemize}
  \item If $\Phi_t \tri  \seq{\Gam; x:\A}{t:\sig}$, then there
    exists $\Phi_{t\isub{x}{u}} \tri  \seq{\Gam \inter \Del}{t\isub{x}{u}:\sig}$.
  \item If
    $\Phi_t \tri \seq{\Gam; x:\A}{t:\B}$, then
    there exists
    $\Phi_{t\isub{x}{u}} \tri
    \seq{\Gam \inter
      \Del}{t\isub{x}{u}:\B}$.
  \end{itemize}
 In both cases $\sz{\Phi_{t\isub{x}{u}}} = \sz{\Phi_t} + \sz{\Phi_u } - \size{\A}$.
  
The proof then follows by induction on $\Phi_t$.
\begin{itemize}
  \item If $\Phi_t$ is $(\ax)$, then we consider two cases:
    \begin{itemize}
    \item $t=x$: then $\Phi_x \tri  \seq{x:\mult{\sig}}{x:\sig}$ and $\Phi_u \tri  \seq{\Del}{u:\mult{\sig}}$, which is a consequence of $\seq{\Del}{u:\sig}$. Then $x\isub{x}{u} = u$, and we trivially obtain  $\Phi_{t\isub{x}{u}} \tri  \seq{\Del}{u:\sig}$.
     We have $\sz{\Phi_{t\isub{x}{u}}} = 1 + \sz{\Phi_u } - 1$ as expected.
    \item $t=y$: then $\Phi_y \tri  \seq{y:\mult{\sig}; x:\emul}{y:\sig}$ and   $\Phi_u \tri \seq{\es}{u:\emul}$ by the $(\many)$ rule. Then $y\isub{x}{u} = y$, and we trivially obtain  $\Phi_{t\isub{x}{u}} \tri  \seq{y:\mult{\sig}}{y:\sig}$. We have $\sz{\Phi_{t\isub{x}{u}}} = 1 + 0 - 0$ as expected.
    \end{itemize}
  \item If $\Phi_t$ ends with $(\many)$, then it has premisses of the form
    $(\Phi^i_t \tri \seq{\Gam^i; x:\A^i}{t:\sig^i})_{\iI}$,
    where $\Gam = \inter_{\iI} \Gam^i$, $\A = \inter_{\iI} \A^i$ and $\B = \mult{\sig^i}_{\iI}$.
    The derivation $\Phi_u$ can also be decomposed into subderivations
    $(\Phi^i_u \tri  \seq{\Del^i}{u:\A^i})_{\iI}$ where $\Del = \inter_{\iI} \Del^i$. The \ih\
    gives $(\Phi^i_{t\isub{x}{u}} \tri \seq{\Gam^i \inter \Del^i}{t\isub{x}{u}:\sig^i})_{\iI}$. Then we apply rule $(\many)$ to get
      $\Phi_{t\isub{x}{u}}\tri \seq{\Gam \inter \Del}{t\isub{x}{u}:\B} $. 
   The statement about $\sz{\_}$ works as expected by the \ih. 
  \item If $\Phi_t$ ends with $(\introarrow)$, so that $t=\lambda p.t'$ then, without loss of generality, one can always assume that $(\fv{u} \cup \{x\}) \cap \var{p} = \es$. The result will follow easily by induction and relevance of the typing system. The statement about $\sz{\_}$ works as expected by the \ih. 
  \item If $\Phi_t$ ends with $(\app)$, so that $t=t'u'$,  then $\Phi_{t'u'}$ is of the form
    \[\inferrule{\seq{\Phi_{t'} \tri \Gam_{t'}; x:\A_{t'} }{t': \B \rew \sig} \\ \Phi_{u'} \tri \seq{\Gam_{u'}; x:\A_{u'} }{u': \B}}{  \seq{\Gam_{t'} \inter \Gam_{u'}; x:\A_{t'} \inter \A_{u'}}{t'u':\sig}}\] Also, $\Phi_u \tri  \amuju{b_u}{e_u}{m_u}{f_u}{\Del}{u:\A}$ is a consequence of $(\amuju{b_u^k}{e_u^k}{m_u^k}{f_u^k}{\Delk}{u: \sigk})_{\kK}$, with $\A=\mult{\sigk}_\kK$ and $\Del = \inter_{\kK}\Delk$. Note that $\A = \A_{t'} \inter \A_{u'} = \mult{\sig_i}_{i\in K_{t'}} \inter \mult{\sig_i}_{i\in K_{u'}}$, with $K = K_{t'} \uplus K_{u'}$, from which one can obtain both $\seq{\Del_{t'}}{u:\A_{t'}}$ and $\seq{\Del_{u'}}{u:\A_{u'}}$, through the $(\many)$ rule.
By the \ih we then have $\seq{\Gam_{t'}\inter \Del_{t'}}{t'\isub{x}{u}: \B \rew \sig}$ and $\seq{\Gam_{u'}\inter \Del_{u'}}{u'\isub{x}{u}: \B}$.
Finally, $\seq{\Gam_{t'}\inter \Gam_{u'}\inter \Del_{t'} \inter \Del_{u'}}{(t'\isub{x}{u})(u'\isub{x}{u}): \sig}$ by the $\app$ rule.  The statement about $\sz{\_}$ works as expected by the \ih.
\item If $\Phi_t$ ends with $(\trpair)$ or $(\tnormalpair)$, so that $t=\pair{t'}{u'}$, then the result is obtained by induction following the same reasoning used in rule $\app$.
   The statement about $\sz{\_}$ works as expected by the \ih.
\item If $\Phi_t$ ends with  $(\trsub)$, so that $t=t'[p/u']$, then the result follows a similar reasoning to the case for application, since $t'[p/u']\isub{x}{u} = (t'\isub{x}{u})[p/u'\isub{x}{u}]$ and we can assume that $(\fv{u} \cup \{x\}) \cap \var{p} = \es$.  The statement about $\sz{\_}$ works as expected by the \ih.
  
\end{itemize}

\end{proof}

\begin{lemma}[Anti-Substitution for System \Upper]
  \label{l:anti-substitution-upper}
  Let $\Phi \tri  \seq{\Gam}{t\isub{x}{u}:\sig}$.
  Then, there exist derivations $\Phi_t$, $\Phi_u$,
  contexts $\Gam_t, \Gam_u$, and multi-type $\A$ such that
  $\Phi_t \tri  \seq{\Gam_t; x:\A}{t:\sig}$, 
  $\Phi_u \tri  \seq{\Gam_u}{u:\A}$ and 
   $\Gam = \Gam_t \inter \Gam_u$.
   Moreover,  $\sz{\Phi} = \sz{\Phi_t} + \sz{\Phi_u } - \size{\A}$.
 \end{lemma}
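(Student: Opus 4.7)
The plan is to proceed by induction on the structure of $t$, which corresponds to induction on the derivation $\Phi$ traversed through the substitution $t\isub{x}{u}$. As in the Substitution Lemma, I generalize the statement to cover both judgments $\seq{\Gam}{t\isub{x}{u}:\sig}$ with a simple type $\sig$ and $\seq{\Gam}{t\isub{x}{u}:\B}$ with a multi-type $\B$, because the $(\many)$ rule and the $(\app)/(\trsub)/(\trpair)$ rules alternate between these two forms. The key invariant is that along the way we must be able to reorganize the $(\many)$ structure around $u$ so as to assemble a single $\many$-combined derivation of $u$ with the union multi-type $\A$, while reconstructing $t$ on the other side.

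First I would handle the two base cases. If $t=x$, then $t\isub{x}{u}=u$ and $\Phi$ is a derivation of $\seq{\Gam}{u:\sig}$: take $\A=\mult{\sig}$, $\Phi_t$ the axiom $\seq{x:\mult{\sig}}{x:\sig}$ (size $1$), and $\Phi_u$ the one-premise $(\many)$ above $\Phi$ giving $\seq{\Gam}{u:\mult{\sig}}$ (same size as $\Phi$ since $(\many)$ is not counted). Then $\sz{\Phi_t}+\sz{\Phi_u}-\size{\A}=1+\sz{\Phi}-1=\sz{\Phi}$. If $t=y\neq x$, then $t\isub{x}{u}=y$ and $\Phi$ is the axiom for $y$; take $\A=\emul$, $\Phi_t$ the weakened axiom $\seq{y:\mult{\sig};x:\emul}{y:\sig}$, and $\Phi_u$ the empty $(\many)$ giving $\seq{\es}{u:\emul}$ of size $0$, matching the size equation $1+0-0=1$.

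For the inductive cases $t=\lambda p.t'$, $t=t_1 t_2$, $t=\pair{t_1}{t_2}$, $t=t'\esub{p}{v}$, I use $\alpha$-conversion to assume $\var{p}\cap(\fv{u}\cup\{x\})=\es$, so that $t\isub{x}{u}$ preserves the head constructor. For binary constructors I apply the IH to each subderivation $\Phi_i \tri \seq{\Gam^i}{t_i\isub{x}{u}:\tau_i}$, obtaining decompositions $\Phi_{t_i} \tri \seq{\Gam_{t_i};x:\A_i}{t_i:\tau_i}$ and $\Phi_{u,i} \tri \seq{\Gam_{u,i}}{u:\A_i}$ with $\Gam^i=\Gam_{t_i}\inter\Gam_{u,i}$. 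I then rebuild $\Phi_t$ by re-applying the same rule ($(\app)$, $(\trpair)$ or $(\trsub)$) to the $\Phi_{t_i}$'s, yielding $x:\A_1\inter\A_2$ on the context, and I fuse $\Phi_{u,1}$ and $\Phi_{u,2}$ into a single $(\many)$-derivation of $u:\A_1\inter\A_2$ by merging the index sets of their topmost $(\many)$ rules. The case $t=\lambda p.t'$ reduces to a single IH call on the body; the multi-type case (when $\Phi$ ends in $(\many)$) splits the indexed family of premises and applies the IH to each, then reassembles everything with a single larger $(\many)$.

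The main obstacle, and the reason the generalization to multi-types is necessary, is the reassembly of the derivations of $u$ in the $(\app)$ and $(\trsub)$ cases: the two subderivations produced by the IH yield two separate $(\many)$-derivations of $u$, which must be recombined into one $(\many)$-derivation of $u$ with the union multi-type $\A=\A_1\inter\A_2$. This is precisely where the fact that $(\many)$ is not counted in $\sz{\cdot}$ becomes crucial to preserve the size identity: $\sz{\Phi_u}=\sz{\Phi_{u,1}}+\sz{\Phi_{u,2}}$ and $\size{\A}=\size{\A_1}+\size{\A_2}$, and combined with the two IH equations this yields exactly $\sz{\Phi}=\sz{\Phi_t}+\sz{\Phi_u}-\size{\A}$. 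All other cases contribute only a fixed additive constant from their single terminal rule, which cancels on both sides of the size identity.
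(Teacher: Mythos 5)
Your proposal is correct and follows essentially the same route as the paper's proof: the same generalisation to both type and multi-type judgments, the same treatment of the two base cases with $\A=\mult{\sig}$ (resp.\ $\A=\emul$), and the same reassembly of the two $u$-derivations produced by the induction hypothesis into a single $(\many)$-derivation at the union multi-type in the $(\app)$, $(\trpair)$ and $(\trsub)$ cases, with the size identity preserved precisely because $(\many)$ is not counted. The only detail worth making explicit in a full write-up is that in the $(\introarrow)$ and $(\trsub)$ cases the restriction operations $\cmin{\Gam}{\var{p}}$ and $\Gam|_{p}$ commute with the split $\Gam=\Gam_t\inter\Gam_u$, which follows from relevance together with your $\alpha$-conversion assumption $\var{p}\cap(\fv{u}\cup\{x\})=\es$.
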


\begin{proof} As in the case of the substitution lemma, the proof follows by generalising the property for the two cases where the type derivation $\Phi$ assigns a type or a multi-set type:
\begin{itemize}
\item Let $\Phi \tri  \seq{\Gam}{t\isub{x}{u}:\sig}$. Then, there exist derivations $\Phi_t$, $\Phi_u$,
  contexts $\Gam_t, \Gam_u$, and multi-type $\A$ such that
  $\Phi_t \tri  \seq{\Gam_t; x:\A}{t:\sig}$, 
  $\Phi_u \tri  \seq{\Gam_u}{u:\A}$ and 
  $\Gam = \Gam_t \inter \Gam_u$.
\item  Let $\Phi \tri  \seq{\Gam}{t\isub{x}{u}:\B}$. Then, there exist derivations $\Phi_t$, $\Phi_u$,
  contexts $\Gam_t, \Gam_u$, and multi-type $\A$ such that
  $\Phi_t \tri  \seq{\Gam_t; x:\A}{t:\B}$, 
  $\Phi_u \tri  \seq{\Gam_u}{u:\A}$ and 
   $\Gam = \Gam_t \inter \Gam_u$.
\end{itemize}
In both cases $\sz{\Phi} = \sz{\Phi_t} + \sz{\Phi_u } - \size{\A}$ holds.

We will reason by induction on $\Phi$ and cases analysis on $t$. For all the rules (except $\many$), we will have the trivial case $t\isub{x}{u}$, where $t=x$, in which case $t\isub{x}{u} = u$, for which we have a derivation $\Phi \tri \seq{\Gam}{u:\sig}$. Therefore
$\Phi_t \tri \seq{x:\mult{\sig}}{x:\sig}$ and $\Phi_u \tri \seq{\Gam}{u:\mult{\sig}}$ is obtained from $\Phi$ using the $(\many)$ rule. We conclude since $\sz{\Phi} = 1  + \sz{\Phi_u } - 1$.
We now reason on the different cases assuming that $t\neq x$.
 \begin{itemize}
 \item If $\Phi$ is $(\ax)$ then $\Phi \tri \amuju{0}{0}{0}{0}{y:\mult{\sig}}{y:\sig}$ and, since $t\neq x$, $t = y \neq x$.
   Then we take $\A=\ems$, $\Phi_t \tri \seq{y:\mult{\sig};x:\ems}{y:\sig}$, and $\Phi_u \tri \seq{\es}{u:\ems}$ from rule $(\many)$.
 We conclude since $\sz{\Phi} = 1  + 0 - 0$.
 \item  If $\Phi$ ends with $(\many)$, then $\Phi \tri \seq{\inter_{\kK}\Gamk}{t\isub{x}{u}: \mult{\sigk}_{\kK}}$ follows from $\Phi^k \tri \seq{\Gamk}{t\isub{x}{u}: \sigk}$, for each $\kK$.
By the \ih, there exist $\Phi_t^k$, $\Phi_u^k$, contexts $\Gam_t^k$,  $\Gam_u^k$ and multi-type $\A_k$, such that  $\Phi_t^k \tri \seq{\Gam_t^k;x:\A_k}{t:\sigk}$, $\Phi_u^k \tri \seq{\Gam_u^k}{u:\A_k}$, $\Gam_k = \Gam_t^k \inter \Gam_u^k$. Taking $\A = \inter_{\kK}\A_k $ and using rule $\many$ we get $\seq{\inter_{\kK}\Gam_t^k;x:\A}{t: \mult{\sigk}_{\kK}}$. From the premisses of $\Phi_u^k$ for $\kK$, applying the $\many$ rule, we get  $\seq{\inter_{\kK}\Gam_u^k}{u: \A}$.
Note that $\Gam = \inter_{\kK}\Gamk = (\inter_{\kK}\Gam_t^k)\inter (\inter_{\kK}\Gam_u^k)$.
 The statement about $\sz{\_}$ works as expected by the \ih. 
\item If $\Phi$ ends with $(\introarrow)$, then $t=\l p.t'$, therefore $\Phi \tri \seq{\cmin{\Gam}{\var{p}}}{\l p.(t'\isub{x}{u}): \B \rew  \sig}$ follows from $\Phi' \tri \seq{\Gam}{t'\isub{x}{u} : \sig}$ and $\Pi_p \tri \Gam|_{p} \pder p:\B$. Note that, one can always assume that $\var{p} \cap \fv{u} = \es$ and $x\notin \var{p}$. By the \ih, $\Phi_{t'} \tri \seq{\Gam_{t'};x:\A}{t': \sig}$,  $\Phi_{u} \tri \seq{\Gam_u}{u: \A}$, with $\Gam = \Gam_{t'} \inter \Gam_u$. Then using $\introarrow$ we get $\Phi_t \tri \seq{\cmin{\Gam_{t'}}{\var{p}}; x:\A}{\l p.t': \B \rew  \sig}$. Note that $\cmin{(\Gam_{t'}; x:\A)}{\var{p}} = \cmin{\Gam_{t'}}{\var{p}}; x:\A$ and $\cmin{\Gam}{\var{p}} = (\cmin{\Gam_{t'}}{\var{p}}) \inter \Gam_u$.
 The statement about $\sz{\_}$ works as expected by the \ih.
\item If $\Phi$ ends with $(\app)$ then $t=t'u'$ and $\Phi$ is of the form
  \[\inferrule{\Phi' \tri \seq{\Gam}{t'\isub{x}{u}: \B \rew \sig} \\ \Phi'' \tri \seq{\Del}{u'\isub{x}{u} : \B}}{\seq{\Gam \inter \Del}{ t'\isub{x}{u}\,u'\isub{x}{u}: \sig}}\]
 By the \ih on the right premise there exist $\Phi_{t'},\Phi_{t'}^{u}$, contexts $\Gam_{t'}$, $\Gam_{t'}^u$ and multi-type $\A_{t'}$, such that $\Phi_{t'} \tri \seq{\Gam_{t'};x:\A_{t'} }{t': \B \rew \sig}$ and $\Phi_{t'}^{u} \tri \seq{\Gam_{t'}^u}{u: \A_{t'}}$ where $\Gam = \Gam_{t'} \inter \Gam_{t'}^u$. Also by the \ih on the left premise there exist $\Phi_{u'},\Phi_{u'}^{u}$, contexts $\Gam_{u'}$, $\Gam_{u'}^u$ and multi-type $\A_{u'}$, such that $\Phi_{u'} \tri \seq{\Del_{u'};x:\A_{u'} }{u': \B }$ and
 $\Phi_{u'}^{u} \tri \seq{\Del_{u'}^u}{u: \A_{u'}}$ where $\Del = \Del_{u'} \inter \Del_{u'}^u$.
 Taking $\A = \A_{t'} \inter\A_{u'}$, by the $\app$ rule, one gets $\Phi_{t'u'} \tri \seq{\Gam_{t'}\inter \Del_{u'};x:\A }{t': \B \rew \sig}$ and applying the $\many$ rule to the premisses of $\Phi_{t'}^{u}$ and $\Phi_{u'}^{u}$ one gets $\Phi_{u} \tri \seq{\Gam_{t'}^u \inter \Del_{u'}^u}{u: \A}$. Note that $\Gam \inter \Del = (\Gam_{t'} \inter \Gam_{t'}^u) \inter (\Del_{u'} \inter \Del_{u'}^u) = (\Gam_{t'} \inter \Del_{u'}) \inter (\Gam_{t'}^u \inter \Del_{u'}^u)$.
 The statement about $\sz{\_}$ works as expected by the \ih.
\item If $\Phi$ ends with $(\trpair)$, so that $t=\pair{t'}{u'}$, then the result follows by induction using the same reasoning as in rule $\app$. The statement about $\sz{\_}$ works as expected by the \ih.
\item If $\Phi_t$ ends with  $(\trsub)$, so that $t=t'[p/u']$, then the result follows a similar reasoning to the case for application, since $t'[p/u']\isub{x}{u} = (t'\isub{x}{u})[p/u'\isub{x}{u}]$ and we can assume that $(\fv{u} \cup \{x\}) \cap \var{p} = \es$ by $\alpha$-conversion.
 The statement about $\sz{\_}$ works as expected by the \ih. 
 \end{itemize}

\end{proof}

\section{Correctness for System \Exact}\label{a:correctness}

\gettoappendix{l:tight-spreading}
\begin{proof}
First note that, since $t \in \N$, then $t$ is not an abstraction nor a pair, therefore one cannot apply any of the rules $\{\introarrow$, $\ruletight, \trpair, \tnormalpair, \trempty\}$. We now examine the remaining rules.
  \begin{itemize}
  \item $t = x$. Then $\tderiv$ is an axiom $\amuju{0}{0}{0}{0}{x: \mult{\tight}}{x:\tight}$ so the property trivially holds.
  \item $t = uv$, with $u\in \N$. Then $\Phi$ has a (left)
      subderivation
    $\tderiv_{u} \rhd \amuju{b}{e}{m}{f}{\Gam_{u}}{u: \sig_u}$, and
    since $\Gam_{u}\subseteq \Gam$, then $\Gam_{u}$ is
    necessarily tight. Therefore, by the \ih, $\sig_u = \typendn$,
    from which follows that $\sig = \typendn$ by applying rule
    $(\appn)$. Note that one cannot apply rule $(\app)$ to type $uv$,
    since $t$ would have to be an arrow type, which contradicts the
    \ih\
  \item $t = u[p/v]$, with $u\in \N,v \in \N$. Then
    $\Phi$
    follows from
    $\Phi_{u} \tri \amuju{b_{u}}{e_{u}}{m_{u}}{f_{u}}{\Gam_{u}}{u:\sig}$,
    $\Phi_p \tri \Gam_{u}|_{p} \pder^{(e_p,m_p,f_p)} p:\A$ and
    $\Phi_v \tri \amuJu{b_v}{e_v}{m_v}{f_v}{\Del}{v:\A}$,
    where $\Gam =( \cmin{\Gam_{u}}{\var{p}}) \inter\Del$. Since
    $(\cmin{\Gam_{u}}{\var{p}})\inter \Del$ is tight, then $\Delta$ is
    tight. By the \ih\ on $v$ one gets
    $\amuJu{b_v}{e_v}{m_v}{f_v}{\Del}{v:\typendn}$
    so that $\amuJu{b_v}{e_v}{m_v}{f_v}{\Del}{v:\mult{\typendn}}$ follows from $\many$ and
    $\Gam_{u}|_{p} \pder^{(0,0,1)} p:\mult{\typendn}$ necessarily follows from
    rule $(\pattn)$. This implies $\Gam_{u}|_{p}$ is tight, therefore $\Gam_{u}$
    is tight. Since $u \in \N$ the \ih gives $\sig \in \tight$ as expected.
\end{itemize}

\end{proof}

\gettoappendix{l:M-tight}
\begin{proof}
By induction on $\tderiv \rhd \amuju{b}{e}{m}{f}{\Gam}{t: \sig}$,
where $\tderiv$ is tight (right-to-left implication),  and
by induction on $t
\in \M$ (left-to-right implication). The latter is
  presented below.
\begin{itemize}
  \item $t = \l p. u$, with $u \in \M$. Then $\Phi$ cannot end with
    rule $(\introarrow)$ because $\sig$ is tight. The last rule of
    $\Phi$ is necessarily $(\ruletight)$. The \ih\ then applies
    and gives $b=e=m=0$ and $f-1 = \size{u}$. We conclude
    since $f = \size{u} +1 = \size{t}$.
  \item $t = \pair{t_1}{t_2}$. Then $\Phi$
    necessarily ends with rule $(\tnormalpair)$
    and the counters are as required. 
  \item $t = u\esub{p}{v}$, with $u \in \M, v \in \N$.  Then $\Phi$
    ends with rule $(\trsub)$, so that $u$ (resp. $v$) is typable with
    some context $\Gam_{u}$ (resp. $\Gam_v$), where
    $\Gam = (\cmin{\Gam_{u}}{\var{p}}) \inter \Gam_v$.
    Let us consider the type $\A$ of $u$ in the premise of rule $(\trsub)$.
    Since $\Gam_v$ is tight and
    $v \in \N$, then Lem.~\ref{l:tight-spreading} guarantees that
    every type of $v$ in $\A$ is tight, and every counter typing $v$
    is of the form $(0,0,0,\size{v})$. This same multi-type $\A$ types
    the pattern $p$, so that there are in principle two cases:
    \begin{itemize}
    \item Either $p$ is a variable typable with rule $(\trvarpat)$,
      but then $t \notin \M$ since $t$ is still reducible. Contradiction. 
    \item Or $p$ is typable with rule $(\pattn)$,
      so that its  counter is $(0,0,1)$, its type is
    $\mult{\typendn}$ and its context is $\Gam_{u}|_{p}$  necessarily tight by
    definition of rule $(\pattn)$.
  \end{itemize}
  Since $\cmin{\Gam_{u}}{ \var{p}}$ is tight by hypothesis, then  the whole context
    $\Gam_{u}$ is tight. We can then apply the \ih\ to $u$ and obtain
    counters for $u$ of the form $b_{u}=e_{u}=m_{u} = 0$ and
    $f_{u} = \size{u}$. On the other side, since the type of $p$ is
    $\mult{\typendn}$ (rule $\pattn$), there is only one premise to type $v$, which
    is necessarily of the form $\amuju{0}{0}{0}{\size{v}}{\Del}{v:\typendn}$.
    We then conclude that the counters typing $u\esub{p}{v}$
    are $b=e=m=0$ and $f = f_{u} + f_v + 1 =
    \size{u} + \size{v} + 1 = \size{t}$, as required. 
    
  \item $t \in \N$. We have three different cases. 
    \begin{itemize}
    \item $t = x$. This case is straightforward. 
    \item $t = u v$, with $u \in \N$.
      Since $\Phi$ is tight, then $\Gam$ is tight and
      we can apply Lem.~\ref{l:tight-spreading}.
      Then $\Phi$ necessarily ends with rule $(\appn)$.
      The \ih\ then applies to the premise typing $u$,
      thus giving  counters $b=e=m=0$ and $f-1= \size{u}$.
      We conclude since $f = \size{u}+1 = \size{t}$.      
    \item $t = u\esub{p}{v}$, with $u \in \N, v \in \N$.
      This case is similar to the third case. 
    \end{itemize}
      
  \end{itemize}

\end{proof}

\gettoappendix{l:substitution}
\begin{proof} We generalise the statement as follows:
  Let $\Phi_u \tri  \amuju{b_u}{e_u}{m_u}{f_u}{\Del}{u:\A}$.
  \begin{itemize}
  \item If $\Phi_t \tri  \amuju{b_t}{e_t}{m_t}{f_t}{\Gam; x:\A}{t:\sig}$, then there
    exists \\ $\Phi_{t\isub{x}{u}} \tri  \amuju{b_t+b_u}{e_t+e_u}{m_t+m_u}{f_t+f_u}{\Gam \inter \Del}{t\isub{x}{u}:\sig}$.
  \item If
    $\Phi_t \tri \amuju{b_t}{e_t}{m_t}{f_t}{\Gam; x:\A}{t:\B}$, then
    there exists \\
    $\Phi_{t\isub{x}{u}} \tri
    \amuju{b_t+b_u}{e_t+e_u}{m_t+m_u}{f_t+f_u}{\Gam \inter
      \Del}{t\isub{x}{u}:\B}$.
  \end{itemize}
The proof then follows by induction on $\Phi_t$.
\begin{itemize}
  \item If $\Phi_t$ is $(\ax)$, then we consider two cases:
    \begin{itemize}
    \item $t=x$: then $\Phi_x \tri  \amuju{0}{0}{0}{0}{x:\mult{\sig'}}{x:\sig' }$ and $\Phi_u \tri  \amuju{b_u}{e_u}{m_u}{f_u}{\Del}{u:\mult{\sig'}}$, which is a consequence of $\amuju{b_u}{e_u}{m_u}{f_u}{\Del}{u:\sig'}$. Then $x\isub{x}{u} = u$, and we trivially obtain  $\Phi_{t\isub{x}{u}} \tri  \amuju{0+b_u}{0+e_u}{0+m_u}{0+f_u}{\Del}{u:\sig'}$.
    \item $t=y$: then $\Phi_y \tri  \amuju{0}{0}{0}{0}{y:\mult{\sig}; x:\mult{\ }}{y:\sig}$ and   $\Phi_u \tri \amuju{0}{0}{0}{0}{\es}{u:\mult{\ }}$. Then $y\isub{x}{u} = y$, and we trivially obtain  $\Phi_{t\isub{x}{u}} \tri  \amuju{0+0}{0+0}{0+0}{0+0}{y:\mult{\sig}}{y:\sig}$. 
    \end{itemize}
  \item If $\Phi_t$ ends with $(\many)$, then it has premisses of the form
    $(\Phi^i_t \tri \amuju{b^i_t}{e^i_t}{m^i_t}{f^i_t}{\Gam^i; x:\A^i}{t:\sig^i})_{\iI}$,
    where $\Gam = \inter_{\iI} \Gam^i$, $\A = \inter_{\iI} \A^i$,
    $b_t = +_{\iI} b^i_t$, $e_t = +_{\iI} e^i_t$, $m_t = +_{\iI} m^i_t$, $f_t = +_{\iI} f^i_t$ and $\B = \mult{\sig^i}_{\iI}$.
    The derivation $\Phi_u$ can also be decomposed into subderivations
    $(\Phi^i_u \tri  \amuju{b^i_u}{e^i_u}{m^i_u}{f^i_u}{\Del^i}{u:\A^i})_{\iI}$,
    where $b_u = +_{\iI} b^i_u$, $e_u = +_{\iI} e^i_u$, $m_u = +_{\iI} m^i_u$, $f_u = +_{\iI} f^i_u$, $\Del = \inter_{\iI} \Del^i$. The \ih\
    gives $(\Phi^i_{t\isub{x}{u}} \tri \amuju{b^i_t+b^i_u}{e^i_t+e^i_u}{m^i_t+m^i_u}{f^i_t+f^i_u}{\Gam^i \inter \Del^i}{t\isub{x}{u}:\sig^i})_{\iI}$. Then we apply rule $(\many)$ to get
      $\Phi_{t\isub{x}{u}}\tri \amuju{b_t+b_u}{e_t+e_u}{m_t+m_u}{f_t+f_u}{\Gam \inter \Del}{t\isub{x}{u}:\B} $. 
    
  \item If $\Phi_t$ ends with $(\introarrow)$ or $(\ruletight)$, so that $t=\lambda p.t'$: then, without loss of generality, one can always assume that $x \notin \var{p}$. The result will follow easily by induction.
  \item If $\Phi_t$ ends with $(\app)$, so that $t=t'u'$,  then $$\Phi_{t'u'} \tri  \amuju{b_{t'}+b_{u'}}{e_{t'}+e_{u'}}{m_{t'}+m_{u'}}{f_{t'}+f_{u'}}{\Gam_{t'} \inter \Gam_{u'}; x:\A_{t'} \inter \A_{u'}}{t'u':\sig},$$ which follows from $\amuju{b_{t'}}{e_{t'}}{m_{t'}}{f_{t'}}{\Gam_{t'}; x:\A_{t'} }{t': \B \rew \sig}$ and $\amuju{b_{u'}}{e_{u'}}{m_{u'}}{f_{u'}}{\Gam_{u'}; x:\A_{u'} }{u': \B}$. Also, $\Phi_u \tri  \amuju{b_u}{e_u}{m_u}{f_u}{\Del}{u:\A}$ is a consequence of $(\amuju{b_u^k}{e_u^k}{m_u^k}{f_u^k}{\Delk}{u: \sigk})_{\kK}$, with $\A=\mult{\sigk}_\kK$, $\Del = \inter_{\kK}\Delk$ and $b_u = +_{\kK}b_u^k$, $e_u = +_{\kK}e_u^k$, $m_u = +_{\kK}m_u^k$ and $f_u = +_{\kK}f_u^k$. Note that $\A = \A_{t'} \inter \A_{u'} = \mult{\sig_i}_{i\in K_{t'}} \inter \mult{\sig_i}_{i\in K_{u'}}$, with $K = K_{t'} \uplus K_{u'}$, from which one can obtain (using the $\many$ rule):

    \begin{itemize}
    \item $\amuju{B_{t'}}{E_{t'}}{M_{t'}}{F_{t'}}{\Del_{t'}}{u:\A_{t'}}$
    \item  $\amuju{B_{u'}}{E_{u'}}{M_{u'}}{F_{u'}}{\Del_{u'}}{u:\A_{u'}}$
    \end{itemize}
    where    $b_u = B_{t'} + B_{u'} = (+_{i\in K_{t'}}b_u^i)+(+_{i\in K_{u'}}b_u^i)$, $e_u = E_{t'} + E_{u'} = (+_{i\in K_{t'}}e_u^i)+(+_{i\in K_{u'}}e_u^i)$, $m_u = M_{t'} + M_{u'}= (+_{i\in K_{t'}}m_u^i)+(+_{i\in K_{u'}}m_u^i)$, $f_u = F_{t'} + F_{u'} =(+_{i\in K_{t'}}f_u^i)+(+_{i\in K_{u'}}f_u^i)$.
    By the \ih we have:
    $$\amuju{b_{t'}+B_{t'}}{e_{t'}+E_{t'}}{m_{t'}+M_{t'}}{f_{t'}+F_{t'}}{\Gam_{t'}\inter \Del_{t'}}{t'\isub{x}{u}: \B \rew \sig}$$
  $$\amuju{b_{u'}+B_{u'}}{e_{u'}+E_{u'}}{m_{u'}+M_{u'}}{f_{u'}+F_{u'}}{\Gam_{u'}\inter \Del_{u'}}{u'\isub{x}{u}: \B} $$
    Finally, applying the $\app$ rule we obtain:
    $$\amuju{b}{e}{m}{f}{\Gam_{t'}\inter \Gam_{u'}\inter \Del_{t'} \inter \Del_{u'}}{(t'\isub{x}{u})(u'\isub{x}{u}): \sig}$$
    with $b = b_{t'}+b_{u'}+b_u$, $e=e_{t'}+e_{u'}+e_u$, $m=m_{t'}+m_{u'}+m_u$ and $f=f_{t'}+f_{u'}+f_u$, as expected.
    \item If $\Phi_t$ ends with $(\appn)$, the result follows from a straightforward induction. 
  \item If $\Phi_t$ ends with  $(\trsub)$, so that $t=t'[p/u']$, then the result follows a similar reasoning to the case for application, since $t'[p/u']\isub{x}{u} = (t'\isub{x}{u})[p/u'\isub{x}{u}]$. 
  \item If $\Phi_t$ ends with $(\trpair)$ or $(\tnormalpair)$, so that $t=\pair{t'}{u'}$, then we have two cases. The case for $\tnormalpair$, follows from $\Phi_t$ being of the form $\amuju{0}{0}{0}{1}{x:\mult{\ }}{\pair{t'}{u'}:\typendm}$, which implies $\Phi_u \tri \amuju{0}{0}{0}{0}{\empty}{u:\mult{\ }}$. Therefore $\amuju{0+0}{0+0}{0+0}{1+0}{\emptyset}{\pair{t'\isub{x}{u}}{u'\isub{x}{u}}:\typendm}$ holds. The case for $\trpair$ follows by induction following the same reasoning used in rule $\app$.    
\end{itemize}

\end{proof}

\gettoappendix{l:subject-reduction} 

\begin{proof} By induction on $\Rew{\head}$.
\begin{itemize}
    \item $t= \appctx{\slist}{\l p. v} u \Rew{\head} \appctx{\slist}{v \esub{p}{u}} =t'$.
      The proof is by induction on the list context $\slist$. We only show
      the case of the empty list as the other one is straightforward.
      The typing derivation $\Phi$ is necessarily of the form
      \[ \begin{prooftree}
          \begin{prooftree}
            \amuju{b_v}{e_v}{m_v}{f_v}{\Gam_v}{v:\sigma} \sep
            \Gam_v|_{p} \pder^{(e_p,m_p,f_p)} p: \A
          \justifies{\amuju{b_v+1}{e_v+e_p}{m_v+m_p}{f_v+f_p}{\cmin{\Gam_v}{\var{p}}}{\l p.v:\A \rew \sigma} }
        \end{prooftree} \sep
        \amuju{b_u}{e_u}{m_u}{f_u}{\Gam_u}{u: \A}     
          \justifies{\amuju{b_v+1+b_u}{e_v+e_p+e_u}{m_v+m_p+m_u}{f_v+f_p+f_u}{\cmin{\Gam_v}{\var{p} \inter  \Gam_u}}{(\l p. v)  u :\sigma}}
         \end{prooftree} \] 
       We then construct the following derivation $\Phi'$:
\[ \begin{prooftree}
          \amuju{b_v}{e_v}{m_v}{f_v}{\Gam_v}{v:\sigma} \sep
          \Gam_v|_{p} \pder^{(e_p,m_p,f_p)} p: \A \sep
          \amuju{b_u}{e_u}{m_u}{f_u}{\Gam_u}{u: \A}      
          \justifies{\amuju{b_v+b_u}{e_v+e_p+e_u}{m_v+m_u+m_p}{f_v+f_p+f_u}{\cmin{\Gam_v}{\var{p} \inter  \Gam_u}}{v\esub{p}{u} :\sigma}}
        \end{prooftree} \]
     The counters are as expected because the first one has decremented by $1$.
      
    \item $t= v\esub{x}{u}   \Rew{\head}  v\isub{x}{u}=t'$, where
      $v \notRew{\head}$. Then $\Phi$ has two premisses
      $\amuju{b_v}{e_v}{m_v}{f_v}{\Gam_v; x:\A}{v:\sigma}$
      and $\amuju{b_u}{e_u}{m_u}{f_u}{\Gam_u}{u: \A}$, where $\Gam = \Gam_v \inter  \Gam_u$,
      $b=b_v+b_u$, $e=e_v+e_u+1$, $m= m_v+m_u+0$, and $f=f_v+f_u+0$.
      
      Lem.~\ref{l:substitution} then gives a derivation
      ending with $\amuju{b_v+b_u}{e_v+e_u}{m_v+m_u}{f_v+f_u}{\Gam_v \inter \Gam_u}{v\isub{x}{u}:\sigma}$. The context, type, and counters are as expected.
    
      \item $t= v\esub{\pair{p_1}{p_2}}{\appctx{\slist}{\pair{u_1}{u_2}}}
      \Rew{\head} \appctx{\slist}{
        v\esub{p_1}{u_1}\esub{p_2}{u_2}}=t'$, where $v
      \notRew{\head}$.

      Let $p = \pair{p_1}{p_2}$ and $u = \pair{u_1}{u_2}$.
      The typing derivation $\Phi$ is necessarily of the form
     
     \[ \begin{prooftree}
        \amuju{b_v}{e_v}{m_v}{f_v}{\Gam_v}{v:\sigma} \sep
        \Gam_v|_{p} \pder^{(e_p,m_p,f_p)} p:\A \sep
        \amuju{b_u}{e_u}{m_u}{f_u}{\Gam_u}{\appctx{\slist}{u}:\A} 
         \justifies{\amuju{b_v+b_u}{e_v+e_u+e_p}{m_v+m_u+m_p}{f_v+f_u+f_p}{\cmin{\Gam_v}{\var{p}} \inter \Gam_u}{v\esub{\pair{p_1}{p_2}}{\appctx{\slist}{u}}}:\sigma }
        \end{prooftree} \]
Then $\A = \mult{\prodt{\A_1}{\A_2}}$, for some multi-types
  $\A_1$ and $\A_2$, and so the pattern $\pair{p_1}{p_2}$ is typable as follows:
      \[ \begin{prooftree}
          \Gam_v|_{p_1} \pder^{(e_1,m_1,f_1)} p_1:\A_1 \sep
          \Gam_v|_{p_2} \pder^{(e_2,m_2,f_2)} p_2:\A_2 
          \justifies{\Gam_v|_{p} \pder^{(e_1+e_2,1+m_1+m_2,f_1+f_2)} \pair{p_1}{p_2}:\mult{\prodt{\A_1}{\A_2}}}
        \end{prooftree} \]
      where $e_p= e_1+e_2$, $m_p = 1+m_1+m_2$ and $f_p = f_1+f_2$.
      The proof then follows by induction on list $\slist$:

      \begin{itemize}
        \item For $\slist = \Box$ we have
        term $u$  typable as follows:

      \[ \begin{prooftree}
                    \infer{\amuju{b'_1}{e'_1}{m'_1}{f'_1}{\Gam_1}{u_1: \A_1 } \sep \amuju{b'_2}{e'_2}{m'_2}{f'_2}{\Gam_2}{u_2:\A_2 }}{\amuju{b_u}{e_u}{m_u}{f_u}{\Gam_u}{\pair{u_1}{u_2}:\prodt{\A_1}{\A_2}}}
           \justifies{\amuju{b_u}{e_u}{m_u}{f_u}{\Gam_u}{\pair{u_1}{u_2}:\A}}
       \end{prooftree} \]
   
      where $\Gam_u = \Gam_1 \inter
      \Gam_2$ and $(b_u,e_u,m_u,f_u) =
      (b'_1+b'_2,e'_1+e'_2,m'_1+m'_2,f'_1+f'_2)$.
     
      We first construct the following derivation:
     \[  \begin{prooftree}
            \amuju{b_v}{e_v}{m_v}{f_v}{\Gam_v}{v:\sigma} \sep
           \Gam_v|_{p_1} \pder^{(e_1,m_1,f_1)} p_1:\A_1 \sep
           \amuju{b'_1}{e'_1}{m'_1}{f'_1}{\Gam_1}{u_1: \A_1 } 
          \justifies{\amuju{b_v+b'_1}{e_v+e'_1+e_1}{m_v+m'_1+m_1}{f_v+f'_1+f_1}{\cmin{\Gam_v}{\var{p_1}}
                \inter \Gam_1}{v\esub{p_1}{u_1}: \sigma}}
        \end{prooftree} \]

      By using relevance and $\alpha$-conversion to assume freshness of bound variables,
      we can construct a derivation with conclusion
      $$\amuju{b_v+b_u}{e_v+ e_u + e_1+e_2}{m_v+m_u
        +m_1+m_2}{f_v+f_u +f_1+ f_2}{
        \cmin{\cmin{\Gam_v}{\var{p_1}}}{\var{p_2}} \inter
        \Gam_u}{v\esub{p_1}{u_1}\esub{p_2}{u_2}: \sigma}$$
   In order to conclude we remark the following facts:
      \begin{itemize}
      \item $\cmin{\Gam_v}{\var{\pair{p_1}{p_2}}} = \cmin{\cmin{\Gam_v}{\var{p_1}}}{\var{p_2}}$
      \item $b_v+b_u = b_v+b'_1 + b'_2$
      \item $e_v+e_u+e_p = e_v+e'_1 + e'_2+e_1+e_2$
      \item $m_v+m_u+m_p = m_v+m'_1 + m'_2 + 1 + m_1 + m_2$
      \item $f_v+f_u+f_p= f_v + f'_1 + f'_2 + f_1 + f_2$
      \end{itemize}

      Then the context, type and counters are as expected.

    \item Let $\slist = \slist'\esub{q}{s}$. Then $\Phi_u$ is necessarily
        of the following form:
        \[ \begin{prooftree}
            \begin{prooftree}
              \amuju{b'_u}{e'_u}{m'_u}{f'_u}{\Del_u}{\appctx{\slist'}{u}:\prodt{\A_1}{\A_2} } \sep
              \Del_u|_{q} \pder^{(e_q,m_q,f_q)} q:\B \sep
             \amuju{b_s}{e_s}{m_s}{f_s}{\Del_s}{s: \B}
              \justifies{ \amuju{b'_u+b_s}{e'_u+e_s+e_q}{m'_u+m_s+m_q}{f'_u+f_s+f_q}{\Gam_u}{\appctx{\slist'}{u}\esub{q}{s}:\prodt{\A_1}{\A_2}}}
             \end{prooftree}
            \justifies{\amuju{b'_u+b_s}{e'_u+e_s+e_q}{m'_u+m_s+m_q}{f'_u+f_s+f_q}{\Gam_u}{\appctx{\slist'}{u}\esub{q}{s}:\A}}
            \end{prooftree} \]
          where $\Gam_u = \cmin{\Del_u}{\var{q}} \inter
          \Del_s$, $b_u = b'_u+b_s$, $e_u = e'_u+e_s+e_q$, $m_u =
          m'_u+m_s+m_q$ and $f_u = f'_u+f_s+f_q$.

          We will apply the \ih\ on the reduction step
          $v\esub{p}{\appctx{\slist'}{u}}
      \Rew{\pattern} \appctx{\slist'}{
        v\esub{p_1}{u_1}\esub{p_2}{u_2}}$, in particular we type
      the left-hand side term with the following derivation $\Psi_1$:
      \[ \begin{prooftree}
          \amuju{b_v}{e_v}{m_v}{f_v}{\Gam_v}{v:\sigma} \sep
          \Gam_v|_{p} \pder^{(e_p,m_p,f_p)} p:\A  \sep
          \begin{prooftree}
            \amuju{b'_u}{e'_u}{m'_u}{f'_u}{\Del_u}{\appctx{\slist'}{u}:\prodt{\A_1}{\A_2}}
            \justifies{\amuju{b'_u}{e'_u}{m'_u}{f'_u}{\Del_u}{\appctx{\slist'}{u}:\A}} 
          \end{prooftree}
          \justifies{\amuju{b_v + b'_u}{e_v + e'_u + e_p}{m_v + m'_u +
              m_p}{f_v + f'_u + f_p}{\cmin{\Gam_v}{\var{p} \inter \Del_u}}{v\esub{p}{\appctx{\slist'}{u}}}: \sig}
          \end{prooftree}\] 
        where $b_1 = b_v + b'_u$, $e_1 = e_v + e'_u + e_p$, $m_1 = m_v + m'_u +
              m_p$ and $f_1 = f_v + f'_u + f_p$. The \ih\ gives a derivation $\Psi_2 \tri
        \amuju{b_2}{e_2}{m_2}{f_2}{\cmin{\Gam_v}{\var{p} \inter
            \Del_u}}{\appctx{\slist'}{v\esub{p_1}{u_1}\esub{p_2}{u_2}}}:
        \sig$ where $b_2 = b_1$, $e_2 = e_1$, $m_2 = m_1 - 1$ and $f_2
        = f_1$. Let $\Lambda = \cmin{\Gam_v}{\var{p} \inter \Del_u}$.
        We conclude with the following derivation
        $\Phi'$:
         \[ \begin{prooftree}
          \Psi_2 \sep
          \Del_u|_{q} \pder^{(e_q,m_q,f_q)} q:\B \sep
             \amuju{b_s}{e_s}{m_s}{f_s}{\Del_s}{s: \B}
          \justifies{\amuju{b_2 + b_s}{e_2 + e_s + e_q}{m_2 + m_s +
              m_q}{f_2 + f_s + f_q}{\cmin{\Lambda}{\var{q}} \inter \Del_s}{\appctx{\slist'}{v\esub{p_1}{u_1}\esub{p_2}{u_2}}\esub{q}{s}: \sig}}
          \end{prooftree}\] 
        Indeed, we first remark that $\Lambda|_{q} = \Del_u|_{q}$ holds by
        relevance and $\alpha$-conversion. Secondly,
        $\cmin{\Gam_v}{\var{p}} \inter \Gam_u =\cmin{\Gam_v}{\var{p}} \inter
            (\cmin{\Del_u}{\var{q}}) \inter \Del_s =
                      \cmin{\Lambda}{\var{q}} \inter \Del_s$
       also holds  by
       relevance and $\alpha$-conversion. Last, we conclude with the following remarks:
      \begin{itemize}
          \item $b' = b_2 + b_s = b_1 + b_s = b_v + b_u = b$
          \item $e' = e_2 + e_s + e_q = e_1 + e_s + e_q = e_v + e_u +
            e_p = e$
          \item $m' = m_2 + m_s + m_q = m_1 - 1 + m_s + m_q = m_v +
            m_u + m_p - 1 = m - 1$
          \item $f' = f_2 + f_s + f_q = f_1 + f_s + f_q = f_v + f_u +
            f_p = f$
      \end{itemize}  

     \end{itemize}

    \item $t= \l p. u \Rew{\head} \l p. u' =t'$, where
      $u \Rew{\head} u'$. This case is straightforward by the \ih\
    \item $t= vu \Rew{\head} v'u =t'$, where
      $v \Rew{\head} v'$ and $\isnotabs{v}$. This case is straightforward by the \ih\
    \item $t= v\esub{p}{u} \Rew{\head} v'\esub{p}{u} =t'$, where
      $v \Rew{\head} v'$. This case is straightforward by the \ih\
    \item $t= v\esub{p}{u} \Rew{\head} v\esub{p}{u'} =t'$, where
      $v \notRew{\head}$ and $p\neq x$ and $u \Rew{\head} u'$.
      By construction there are typing subderivations 
      $\Phi_v \tri \amuju{b_v}{e_v}{m_v}{f_v}{\Gam_v}{v:\sig}$,
      $ \Phi_p \tri \Gam_v|_{p} \pder^{(e_p,m_p,f_p)} p:\A$ and
      $\Phi_u \tri \amuJu{b_u}{e_u}{m_u}{f_u}{\Gam_u}{u:\A}$.
      Since $p$ is not a variable then $ \Phi_p$ ends with
      rule $\pattn$ or $\trpairpat$. In both cases $\A$ contains
      only one type, let us say $\A = \mult{\sig_u}$. Then $\Phi_u$ has the following form
      \[ \infer{\amuJu{b_u}{e_u}{m_u}{f_u}{\Gam_u}{u:\sig_u}}
               {\Phi_u \tri \amuJu{b_u}{e_u}{m_u}{f_u}{\Gam_u}{u:\mult{\sig_u}}}\]
             The \ih\ applied to the premise of $\Phi_u$ gives
             a derivation 
             $\amuJu{b'_u}{e'_u}{m'_u}{f_u}{\Gam_u}{u':\sig_u}$
             and having the expected counters. 
             To conclude we build a type derivation  $\Phi'$ for
             $v\esub{p}{u'}$ having the expected counters. 
\end{itemize}

\end{proof}

\section{Completeness for System \Exact}\label{a:completeness}

\gettoappendix{l:M-typable-tight} 
\begin{proof}
  We generalise the property to the two following statements, proved by
structural induction on $t\in \N$, $t \in \M$, respectively, using
relevance (Lem.~\ref{l:relevance+clash-free}:\ref{relevance}).
  \begin{itemize}
  \item If $t \in \N$, then there exists a tight derivation $\tderiv
    \rhd \amuju{0}{0}{0}{\size{t}}{\Gam}{t: \typendn}$:
   \begin{itemize}
    \item If $t = x$, then $\amuju{0}{0}{0}{0}{x{:}\mult{\typendn}}{x:
        \typendn}$ by $(\ax)$, where $|x| = 0$.

     \item If $t = u\,v$ where $u \in \N$, then $\size{t} = \size{u} +
       1$ and by \ih there is a tight derivation $\tderiv_u
       \rhd \amuju{0}{0}{0}{\size{u}}{\Gam_u}{u: \typendn}$. Then
       \[\inferrule*{\amuju{0}{0}{0}{\size{u}}{\Gam_u}{u: \typendn} \\
         }{\amuju{0}{0}{0}{\size{u} + 1}{\Gam_u}{u\,v: \typendn}}\]
       The result then holds for $\Gam := \Gam_u$.
       
       \item  If $t = u\esub{\pair{p_1}{p_2}}{v}$ where $u,v \in \N$,
         then $\size{t} = \size{u} + \size{v} +
       1$ and by \ih there are tight derivations $\tderiv_u
       \rhd \amuju{0}{0}{0}{\size{u}}{\Gam_u}{u: \typendn}$, $\tderiv_v
       \rhd \amuju{0}{0}{0}{\size{v}}{\Gam_v}{v: \typendn}$. Then, $\tderiv'_v
       \rhd \amuju{0}{0}{0}{\size{v}}{\Gam_v}{v: \mult{\typendn}}$ and
       \[ \inferrule*{\tderiv_u
           \\   {\Gam_u}|_{\pair{p_1}{p_2}} \pder^{(0,0,1)}  \pair{p_1}{p_2}: \mult{\typendn} \\ \tderiv'_v}{\amuju{0}{0}{0}{\size{u} + \size{v} +
       1}{(\cmin{\Gam_u}{\var{\pair{p_1}{p_2}}}) \inter
       \Gam_v}{u\esub{\pair{p_1}{p_2}}{v}: \typendn}}\]
    The result then holds for $\Gam := (\cmin{\Gam_u}{\var{\pair{p_1}{p_2}}}) \inter
       \Gam_v$, since by \ih\
       $\istight{\Gam_u}$ and $\istight{\Gam_v}$ thus $\istight{\Gam}$.
     \end{itemize}
     
  \item If $t \in \M$, then there exists a tight derivation $\tderiv
    \rhd \amuju{0}{0}{0}{\size{t}}{\Gam}{t: \tight}$. 
    \begin{itemize}
    \item If $t \in \N$ then by the previous item the result holds
      for $\tight := \typendn$.
      
    \item If $t = \pair{u}{v}$ then $\size{t} = 1$ and $\amuju{0}{0}{0}{1}{}{\pair{u}{v}
        : \typendm}$ by $(\tnormalpair)$. The result then holds for
      $\Gam := \es$.
      
      \item If $t = \l{p}.u$ where $u \in \M$ then $\size{t} =
        \size{u} + 1$ and, by \ih, there is a tight derivation $\tderiv_u \rhd
        \amuju{0}{0}{0}{\size{u}}{\Gam_u}{u:
          \tight}$. Then \[\inferrule*{\tderiv_u \rhd
        \amuju{0}{0}{0}{\size{u}}{\Gam_u}{u:
          \tight}}{\amuju{0}{0}{0}{\size{u} +
          1}{\cmin{\Gam_u}{\var{p}}}{\l p.u:\typendm}}\] 
     The result then holds for $\Gam :=
     \cmin{\Gam_u}{\var{p}}$. Observe that, since by \ih\
     $\istight{\Gam_u}$ holds,
     and $\cmin{\Gam_u}{\var{p}} \subseteq \Gam_u$ then
     $\istight{\cmin{\Gam_u}{\var{p}}}$ trivially holds.

     \item If $t = u\esub{\pair{p_1}{p_2}}{v}$ where $u\in \M$ and $v
       \in \N$ then $\size{t} = \size{u} + \size{v} +
       1$. Moreover, by the previous item there is a tight $\tderiv_v
    \rhd \amuju{0}{0}{0}{\size{v}}{\Gam_v}{v: \typendn}$ (so that $\istight{\Gam_v}$) and, by \ih,
    there is a tight derivation $\tderiv_u
    \rhd \amuju{0}{0}{0}{\size{u}}{\Gam_u}{u: \tight}$. Then, $\tderiv'_v
    \rhd \amuju{0}{0}{0}{\size{v}}{\Gam_v}{v: \mult{\typendn}}$ and
    \[ \inferrule*{\tderiv_u \\   (\Gam_u)|_{\pair{p}{q}} \pder^{(0,0,1)}  \pair{p}{q}: \mult{\typendn} \\ \tderiv'_v}{\amuju{0}{0}{0}{\size{u} + \size{v} +
       1}{(\cmin{\Gam_u}{\var{\pair{p_1}{p_2}}}) \inter
       \Gam_v}{u\esub{\pair{p_1}{p_2}}{v}: \tight}}\]
    The result then holds for $\Gam := (\cmin{\Gam_u}{\var{\pair{p_1}{p_2}}}) \inter
    \Gam_v$, since $\istight{\Gam_v}$ as remarked,
       and by \ih\   $\istight{\Gam_u}$, thus $\istight{\Gam}$.

     \end{itemize} 
    
  \end{itemize}
\end{proof} 

\gettoappendix{l:anti-substitution}

\begin{proof} As in the case of the substitution lemma, the proof follows by generalising the property for the two cases where the type derivation $\Phi$ assigns a type or a multi-set type. 
\begin{itemize}
\item Let $\Phi \tri  \amuju{b}{e}{m}{f}{\Gam}{t\isub{x}{u}:\sig}$. Then, there exist derivations $\Phi_t$, $\Phi_u$,
  integers $b_t, b_u, e_t, e_u, m_t, m_u, f_t, f_u$,
  contexts $\Gam_t, \Gam_u$, and multi-type $\A$ such that
  $\Phi_t \tri  \amuju{b_t}{e_t}{m_t}{f_t}{\Gam_t; x:\A}{t:\sig}$, 
  $\Phi_u \tri  \amuju{b_u}{e_u}{m_u}{f_u}{\Gam_u}{u:\A}$,
  $b=b_t+ b_u$,
  $e = e_t+ e_u$,
  $m= m_t+ m_u$,
  $f=f_t+ f_u$, and 
  $\Gam = \Gam_t \inter \Gam_u$.
\item  Let $\Phi \tri  \amuju{b}{e}{m}{f}{\Gam}{t\isub{x}{u}:\B}$. Then, there exist derivations $\Phi_t$, $\Phi_u$,
  integers $b_t, b_u, e_t, e_u, m_t, m_u, f_t, f_u$,
  contexts $\Gam_t, \Gam_u$, and multi-type $\A$ such that
  $\Phi_t \tri  \amuju{b_t}{e_t}{m_t}{f_t}{\Gam_t; x:\A}{t:\B}$, 
  $\Phi_u \tri  \amuju{b_u}{e_u}{m_u}{f_u}{\Gam_u}{u:\A}$,
  $b=b_t+ b_u$,
  $e = e_t+ e_u$,
  $m= m_t+ m_u$,
  $f=f_t+ f_u$, and 
   $\Gam = \Gam_t \inter \Gam_u$.
\end{itemize}
We will reason by induction on $\Phi$. For all the rules (except $\many$), we will have the trivial case $t\isub{x}{u}$, where $t=x$, in which case $t\isub{x}{u} = u$, for which we have a derivation $\Phi \tri \amuju{b}{e}{m}{f}{\Gam}{u:\sig}$. Therefore
$\Phi_t \tri \amuju{0}{0}{0}{0}{x:\mult{\sig}}{x:\sig}$ and $\Phi_u \tri \amuju{b}{e}{m}{f}{\Gam}{u:\mult{\sig}}$ is obtained from $\Phi$ using the $(\many)$ rule. The conditions on the counters hold trivially. We now reason on the different cases assuming that $t\neq x$.
 \begin{itemize}
 \item If $\Phi$ is $(\ax)$, therefore $\Phi \tri \amuju{0}{0}{0}{0}{y:\mult{\sig}}{y:\sig}$. We only consider the case where 
$t=y$ and $y\neq x$. Then we take $\A=\ems$, $\Phi_t \tri \amuju{0}{0}{0}{0}{y:\mult{\sig};x:\ems}{y:\sig}$ and $\Phi_u \tri \amuju{0}{0}{0}{0}{}{u:\ems}$,  using rule $(\many)$. The conditions on the counters follow trivially.
 \item  If $\Phi$ ends with $(\many)$, then $\Phi \tri \amuJu{+_{\kK}b_k}{+_{\kK}e_k}{+_{\kK} m_k}{+_{\kK}f_k}{\inter_{\kK}\Gamk}{t\isub{x}{u}: \mult{\sigk}_{\kK}}$ follows from $\Phi^k \tri \amuJu{b_k}{e_k}{m_k}{f_k}{\Gamk}{t\isub{x}{u}: \sigk}$, for each $\kK$.
By the \ih, there exist $\Phi_t^k$, $\Phi_u^k$, $b_t^k$, $b_u^k$, $e_t^k$, $e_u^k$, $m_t^k$, $m_u^k$, $f_t^k$, $f_u^k$, contexts $\Gam_t^k$,  $\Gam_u^k$ and multi-type $\A_k$, such that  $\Phi_t^k \tri \amuJu{b_t^k}{e_t^k}{m_t^k}{f_t^k}{\Gam_t^k;x:\A_k}{t:\sigk}$, $\Phi_u^k \tri \amuJu{b_u^k}{e_u^k}{m_u^k}{f_u^k}{\Gam_u^k}{u:\A_k}$, $\Gam_k = \Gam_t^k \inter \Gam_u^k$, $b_k = b_t^k + b_u^k$, $e_k = e_t^k + e_u^k$, $m_k = m_t^k + m_u^k$, $f_k = f_t^k + f_u^k$. Taking $\A = \inter_{\kK}\A_k $ and using rule $(\many)$ we get $\amuJu{+_{\kK}b_t^k}{+_{\kK}e_t^k}{+_{\kK} m_t^k}{+_{\kK}f_t^k}{\inter_{\kK}\Gam_t^k; x:\A}{t: \mult{\sigk}_{\kK}}$. From the premisses of $\Phi_u^k$ for $\kK$, applying the $(\many)$ rule, we get  
$\amuJu{+_{\kK}b_u^k}{+_{\kK}e_u^k}{+_{\kK} m_u^k}{+_{\kK}f_u^k}{\inter_{\kK}\Gam_u^k}{u: \A}$.
Note that $\Gam = \inter_{\kK}\Gamk = (\inter_{\kK}\Gam_t^k)\inter (\inter_{\kK}\Gam_u^k)$, $b=+_{\kK}b_k = (+_{\kK}b_t^k) + (+_{\kK}b_u^k)$,  $e=+_{\kK}e_k = (+_{\kK}e_t^k) + (+_{\kK}e_u^k)$,  $m=+_{\kK}m_k = (+_{\kK}m_t^k) + (+_{\kK}m_u^k)$ and  $f=+_{\kK}f_k = (+_{\kK}f_t^k) + (+_{\kK}f_u^k)$, as expected.
\item If $\Phi$ ends with $(\introarrow)$, then $t=\l p.t'$, therefore $$\Phi \tri \amuju{b_t+1}{e_t+e_p}{m_t + m_p}{f_t + f_p}{\cmin{\Gam}{\var{p}}}{\l p.(t'\isub{x}{u}): \B \rew  \sig}$$ follows from $\amuju{b_t}{e_t}{m_t}{f_t}{\Gam}{t'\isub{x}{u} : \sig}$ and $\Gam|_{p} \pder^{(e_p,m_p,f_p)}  p:\B$. Note that, one can always assume that $\var{p} \cap \fv{u} = \es$ and $x\notin \var{p}$. By the \ih, $\Phi_{t'} \tri \amuju{b_t'}{e_t'}{m_t'}{f_t'}{\Gam_{t'};x:\A}{t': \sig}$,  $\Phi_{u} \tri \amuju{b_u}{e_u}{m_u}{f_u}{\Gam_u}{u: \A}$, with $\Gam = \Gam_{t'} \inter \Gam_u$, $b_t = b_{t'} + b_u$, $e_t = e_{t'} + e_u$, $m_t = m_{t'} + m_u$, $f_t = f_{t'} + f_u$. Then using rule $(\introarrow)$ we get $\Phi_t \tri \amuju{b_{t'}+1}{e_{t'}+e_p}{m_{t'} + m_p}{f_{t'} + f_p}{\cmin{\Gam_{t'}}{\var{p}} ; x:\A}{\l p.t': \B \rew  \sig}$. And $\cmin{\Gam}{\var{p}} = (\cmin{\Gam_{t'}}{\var{p}}) \inter \Gam_u$, $b_t+1 = b_t'+1 + b_u$, $e_t = e_{t'} + e_u$, $m_t = m_{t'} + m_u$ and $f_t = f_{t'} + f_u$, as expected.
\item If $\Phi$ ends with $(\ruletight)$, the result follows by induction, similarly to the previous case.
\item If $\Phi$ ends with $(\app)$ then $t=t'u'$, then $$\Phi \tri \amuju{b_{t'}+b_{u'}}{e_{t'}+e_{u'}}{m_{t'}+m_{u'}}{f_{t'}+f_{u'}}{\Gam \inter \Del}{ t'\isub{x}{u}\,u'\isub{x}{u}: \sig}$$
  follows from $\amuju{b_{t'}}{e_{t'}}{m_{t'}}{f_{t'}}{\Gam}{t'\isub{x}{u}: \B \rew \sig}$ and 
  $\amuJu{b_{u'}}{e_{u'}}{m_{u'}}{f_{u'}}{\Del}{u'\isub{x}{u} : \B}$. By the \ih, there exist $\Phi_{t'},\Phi_{t'}^{u}$, $b_{t''}$, $b_{t'}^u$, $e_{t''}$, $e_{t'}^u$, $m_{t''}$, $m_{t'}^u$, $f_{t''}$, $f_{t'}^u$, contexts $\Gam_{t'}$, $\Gam_{t'}^u$ and multi-type $\A_{t'}$, such that
  \[
  \begin{array}{l}
    \Phi_{t'} \tri \amuju{b_{t''}}{e_{t''}}{m_{t''}}{f_{t''}}{\Gam_{t'};x:\A_{t'} }{t': \B \rew \sig}\\
    \Phi_{t'}^{u} \tri \amuju{b_{t'}^u}{e_{t'}^u}{m_{t'}^u}{f_{t'}^u}{\Gam_{t'}^u}{u: \A_{t'}}
  \end{array}
  \]
  where $b_{t'} = b_{t''} +b_{t'}^u$, $e_{t'} = e_{t''} +e_{t'}^u$, $m_{t'} = m_{t''} +m_{t'}^u$, $f_{t'} = f_{t''} +f_{t'}^u$ and $\Gam = \Gam_{t'} \inter \Gam_{t'}^u$.
  And by the \ih applied to the second premise of $\Phi$, there exist $\Phi_{u'},\Phi_{u'}^{u}$, $b_{u''}$, $b_{u'}^u$, $e_{u''}$, $e_{u'}^u$, $m_{u''}$, $m_{u'}^u$, $f_{u''}$, $f_{u'}^u$, contexts $\Gam_{u'}$, $\Gam_{u'}^u$ and multi-type $\A_{u'}$, such that
  \[
  \begin{array}{l}
     \Phi_{u'} \tri \amuju{b_{u''}}{e_{u''}}{m_{u''}}{f_{u''}}{\Del_{u'};x:\A_{u'} }{u': \B }\\
    \Phi_{u'}^{u} \tri \amuju{b_{u'}^u}{e_{u'}^u}{m_{u'}^u}{f_{u'}^u}{\Del_{u'}^u}{u: \A_{u'}}
  \end{array}
  \]
  where $b_{u'} = b_{u''} +b_{u'}^u$, $e_{u'} = e_{u''} +e_{u'}^u$, $m_{u'} = m_{u''} +m_{u'}^u$, $f_{u'} = f_{u''} +f_{u'}^u$ and $\Del = \Del_{u'} \inter \Del_{u'}^u$.
  Taking $\A = \A_{t'} \inter\A_{u'}$, by the $(\app)$ rule, one gets $\Phi_{t'u'} \tri \amuju{b_{t''}+b_{u''}}{e_{t''}+e_{u''}}{m_{t''}+m_{u''}}{f_{t''}+f_{u''}}{\Gam_{t'}\inter \Del_{u'};x:\A_{t'} \inter\A_{u'} }{t': \B \rew \sig}$ and applying the $(\many)$ rule to the premisses of $\Phi_{t'}^{u}$ and $\Phi_{u'}^{u}$ one gets $\Phi_{u} \tri \amuju{b_{t'}^u+b_{u'}^u}{e_{t'}^u+e_{u'}^u}{m_{t'}^u+m_{u'}^u}{f_{t'}^u+f_{u'}^u}{\Gam_{t'}^u \inter \Del_{u'}^u}{u: \A}$. Note that $\Gam \inter \Del = (\Gam_{t'} \inter \Gam_{t'}^u) \inter (\Del_{u'} \inter \Del_{u'}^u) = (\Gam_{t'} \inter \Del_{u'}) \inter (\Gam_{t'}^u \inter \Del_{u'}^u)$ and $b = b_{t'} + b_{u'} = (b_{t''} +b_{t'}^u)+(b_{u''} +b_{u'}^u) = (b_{t''} +b_{u''})+(b_{t'}^u +b_{u'}^u)$ as expected (the same happens for the remaining counters).
\item If $\Phi$ ends with $(\appn)$ the result follows from a straightforward induction\item If $\Phi_t$ ends with  $(\trsub)$, so that $t=t'[p/u']$, then the result follows a similar reasoning to the case for application, since $t'[p/u']\isub{x}{u} = (t'\isub{x}{u})[p/u'\isub{x}{u}]$.
  \item If $\Phi$ ends with $(\trpair)$ or $(\tnormalpair)$, so that $t=\pair{t'}{u'}$, then we have two cases. The case for $\tnormalpair$, follows from $\Phi$ being of the form $\amuju{0}{0}{0}{1}{}{\pair{t'\isub{x}{u}}{u'\isub{x}{u}}:\typendm}$. We then take $\A =\ems$, $\Phi_{\pair{t'}{u'}} \tri \amuju{0}{0}{0}{1}{x:\ems}{\pair{t'}{u'}:\typendm}$ and $\Phi_u \tri \amuju{0}{0}{0}{0}{\empty}{u:\mult{\ }}$ follows trivially from the $(\many)$ rule. Then conditions on counters and contexts hold trivially. The case for $(\trpair)$ follows by induction using the same reasoning as in rule $(\app)$. 
 \end{itemize}

\end{proof}

\gettoappendix{l:subject-expansion}
\begin{proof} By induction on $\Rewbis{\head}$, using Lem.~\ref{l:anti-substitution}.
\begin{itemize}
\item $t=\appctx{\slist}{\l p. v}  u \Rew{\head} \appctx{\slist}{v \esub{p}{u}} =t'$.
  The proof is by induction on the list context $\slist$.\\
  If $\slist = \Box$ then by construction there are $\Phi_v \rhd \amuju{b_v}{e_v}{m_v}{f_v}{\Gam_v}{v :\sig}$ and  $\Phi_u \rhd \amuju{b_u}{e_u}{m_u}{f_u}{\Gam_u}{u :\A}$ for some multi-type $\A$, and $\Phi'$ is of the form
  \[\inferrule*{\Phi_v \\ (\Gam_v)|_{p} \pder^{(e_p,m_p,f_p)} p:\A \\ \Phi_u}{\amuju{b'}{e'}{m'}{f}{(\cmin{\Gam_v}{\var{p}}) \inter \Gam_u}{v \esub{p}{u}:\sig}}\]
  where $\Gam = (\cmin{\Gam_v}{\var{p}}) \inter \Gam_u$, $b' = b_v + b_u$, $e' = e_v + e_u + e_p$, $m' = m_v + m_u + m_p$ and $f = f_v + f_u + f_p$. Then
  \[\Phi \tri \inferrule{ \inferrule*{ \Phi_v \\  (\Gam_v)|_{p} \pder^{(e_p,m_p,f_p)} p:\A}{\amuju{b_v + 1}{e_v + e_p}{m_v + m_p}{f_v + f_p}{\cmin{\Gam_v}{\var{p}}}{\l{p}.v :\A \rew \sig}} \\ \Phi_u }{\amuju{(b_v + 1) + b_u}{(e_v + e_p) + e_u}{(m_v + m_p) + m_u}{(f_v + f_p) + f_u}{ (\cmin{\Gam_v}{\var{p}}) \inter \Gam_u}{(\l p. v) u :\sig}}\] where $b = (b_v + 1) + b_u = b' + 1$, $e = (e_v + e_p) + e_u = e'$ and $m = (m_v + m_p) + m_u = m'$.

  If $\slist \neq \Box$ the proof from the \ih is straightforward.

\item  $t= v\esub{x}{u}   \Rew{\head}  v\isub{x}{u}=t'$, where $v \notRew{\head}$. Then by Lem.~\ref{l:anti-substitution}  there exist derivations $\Phi_v$, $\Phi_u$, integers $b_v, b_u, e_v, e_u, m_v, m_u, f_v, f_u$, contexts $\Gam_v, \Gam_u$, and multi-type $\A$ such that $\Phi_v \tri  \amuju{b_v}{e_v}{m_v}{f_v}{\Gam_v; x:\A}{v:\sig}$, $\Phi_u \tri  \amuju{b_u}{e_u}{m_u}{f_u}{\Gam_u}{u:\A}$,
  $b'=b_v+ b_u$, $e' = e_v+ e_u$, $m'= m_v+ m_u$,
  $f=f_v+ f_u$, and $\Gam = \Gam_v \inter \Gam_u$. Then,
  \[\Phi \tri \inferrule{ \Phi_v \\  (\Gam_v; x:\A)|_{x} \pder^{(1,0,0)} x:\A\\ \Phi_u}{\amuju{b_v + b_u}{e_v + e_u + 1}{m_v + m_u + 0}{f_v + f_u + 0}{\Gam_v \inter \Gam_u}{v\esub{x}{u}: \sig}}\]
where $b = b_v + b_u = b'$, $e = e_v + e_u + 1 = e' + 1$ and $m = m_v + m_u = m'$. Note that $\cmin{(\Gam_v;x:\A)}{\var{x}} = \Gam_v$.  

\item  $t= v\esub{\pair{p_1}{p_2}}{\appctx{\slist}{\pair{u_1}{u_2}}}
      \Rew{\head} \appctx{\slist}{
        v\esub{p_1}{u_1}\esub{p_2}{u_2}}=t'$, where $v \notRew{\head}$.
    Let abbreviate $p = \pair{p_1}{p_2}$ and $u = \pair{u_1}{u_2}$. 
      The proof is by induction on the list $\slist$.
\begin{itemize}
    \item $\slist = \Box$, then by construction there are $\Phi_v \tri \amuju{b_v}{e_v}{m_v}{f_v}{\Gam_v}{v:\sig}$, $\Phi_1 \tri \amuju{b_u^1}{e_u^1}{m_u^1}{f_u^1}{\Gam_1}{u_1:\A_1}$ and $\Phi_{2} \tri \amuju{b_u^2}{e_u^2}{m_u^2}{f_u^2}{\Gam_2}{u_2:\A_2}$ where
\[\Phi_{v\esub{p_1}{u_1}} \tri \inferrule{\Phi_v \\  \Gam_v|_{p_1} \pder^{(e_1,m_1,f_1)} p_1:\A_1 \\ \Phi_1 }{\amuju{b_v + b_u^1}{e_v + e_u^1 + e_1}{m_v + m_u^1 + m_1}{f_v + f_u^1 + f_1}{(\cmin{\Gam_v}{\var{p_1}}) \inter  \Gam_1}{v\esub{p_1}{u_1}:\sig}}\]
and
  \[\Phi' \tri \inferrule{ \Phi_{v\esub{p_1}{u_1}} \\ ((\cmin{\Gam_v}{\var{p_1}}) \inter  \Gam_1)|_{p_2} \pder^{(e_2,m_2,f_2)} p_2:\A_2 \\ \Phi_2}{\amuju{b'}{e'}{m'}{f}{\Gam}{v\esub{p_1}{u_1}\esub{p_2}{u_2}:\sig}}\]
  where $b' = b_v +_{i=1,2} b_u^i$, $e' = e_v +_{i=1,2} e_u^i + e_i$, $m' = m_v  +_{i=1,2} m_u^i + m_i$, $f = f_v + _{i=1,2} f_u^i + f_i$ and $\Gam = (\cmin{((\cmin{\Gam_v}{\var{p_1}}) \inter  \Gam_1)}{\var{p_2}}) \inter \Gam_2$. Moreover, $(\cmin{((\cmin{\Gam_v}{\var{p_1}}) \inter  \Gam_1)}{\var{p_2}}) = (\cmin{(\cmin{\Gam_v}{\var{p_1}})}{\var{p_2}} \inter  \cmin{\Gam_1}{\var{p_2}}
  )$, where $\cmin{(\cmin{\Gam_v}{\var{p_1}})}{\var{p_2}} = \cmin{\Gam_v}{\var{\pair{p_1}{p_2}}}$ and $\cmin{\Gam_1}{\var{p_2}} =^{Lem. \ref{l:relevance+clash-free}:\ref{relevance}} \Gam_1$. Similarly, $ ((\cmin{\Gam_v}{\var{p_1}}) \inter  \Gam_1)|_{p_2} =^{Lem. \ref{l:relevance+clash-free}:\ref{relevance}}
(\cmin{\Gam_v}{\var{p_1}})|_{p_2}$ and, by linearity of patterns, $ (\cmin{\Gam_v}{\var{p_1}})|_{p_2} =\Gam_v|_{p_2}$. Hence,
 \[\Phi_{\pair{p_1}{p_2}} \tri \inferrule{ \Gam_v|_{p_1} \pder^{(e_1,m_1,f_1)} p_1:\A_1 \\ \Gam_v|_{p_2} \pder^{(e_2,m_2,f_2)} p_2:\A_2 }{\Gam_v|_{p_1} \inter  \Gam_v|_{p_2} \pder^{(e_1+e_2,1+m_1+m_2,f_1+f_2)} \pair{p_1}{p_2} :
     \mult{\prodt{\A_1}{\A_2}} }\] where $\Gam_v|_{p_1} \inter  \Gam_v|_{p_2} = \Gam_v|_{\pair{p_1}{p_2}}$, and $\Phi_{\pair{u_1}{u_2}} \tri \amuju{b_u^1+b_u^2}{e_u^1+e_u^2}{m_u^1+m_u^2}{f_u^1+f_u^2}{\Gam_1 \inter \Gam_2}{\pair{u_1}{u_2} : \mult{\prodt{\A_1}{\A_2}}}$. Therefore,
 \[\Phi \tri \inferrule{ \Phi_v \\ \Phi_{\pair{p_1}{p_2}} \\ \Phi_{\pair{u_1}{u_2}} }{\amuju{b}{e}{m}{f}{(\cmin{\Gam_v}{\var{\pair{p_1}{p_2}}}) \inter  (\Gam_1 \inter \Gam_2)}{v\esub{\pair{p_1}{p_2}}{\pair{u_1}{u_2}}:\sig}}\] where $b = b_v +_{i=1,2} b_u^i = b'$, $e' = e_v +_{i=1,2} e_u^i + e_i = e'$ and $m = 1 + m_v  +_{i=1,2} m_u^i + m_i = m' + 1$.

  \item If $\slist = \slist'\esub{q}{s}$, then $\Phi'$ is of the form:
  \[
  \begin{prooftree}
    \amuju{b'_l}{e'_l}{m'_l}{f'_l}{\Gam_{\slist'}}{\appctx{\slist'}{v\esub{p_1}{u_1}\esub{p_2}{u_2}}: \sig} \sep \Gam_{\slist'}|_{q} \pder^{(e_q,m_q,f_q)} q :\A \sep \amuju{b_s}{e_s}{m_s}{f_s}{\Gam_s}{s:\A}  
    \justifies{\amuju{b'_l + b_s}{e'_l + e_s + e_q}{m'_l + m_s + m_q}{f'_l + f_s + f_q}{\cmin{\Gam_{\slist'}}{\var{q}} \inter \Gam_s}{\appctx{\slist'}{v\esub{p_1}{u_1}\esub{p_2}{u_2}}\esub{q}{s} : \sig} }
  \end{prooftree}
  \]
  where $b' = b'_l + b_s$, $e' = e'_l + e_s + e_q$, $m' = m'_l + m_s + m_q$, $f' = f'_l + f_s + f_q$ and $\Gam = \cmin{\Gam_{\slist'}}{\var{q}} \inter \Gam_s$. From $v\esub{\pair{p_1}{p_2}}{\appctx{\slist'}{\pair{u_1}{u_2}}}
  \Rew{\head} \appctx{\slist'}{v\esub{p_1}{u_1}\esub{p_2}{u_2}}$
  and derivation $\Phi'_{\slist'}$ for the leftmost premise by the \ih one gets $\Phi_{\slist'} \tri
  \amuju{b_l}{e_l}{m_l}{f_l}{\Gam_{\slist'}}{v\esub{\pair{p_1}{p_2}}{\appctx{\slist'}{\pair{u_1}{u_2}}}:
    \sig}$ where $b_l = b'_l$, $e_l = e'_l$, $m_l = m'_l + 1$ and $f_l = f'_l$. Furthermore $\Phi_{\slist'}$ is necessarily of the form:
{\small
  \[
  \begin{prooftree}
    \amuju{b_v}{e_v}{m_v}{f_v}{\Gam_v}{v:\sig} \sep \Gam_v|_{p} \pder^{(e_p,m_p,f_p)} p : \mult{\prodt{\A_1}{\A_2}}\sep
    \begin{prooftree}
      \amuju{b'_u}{e'_u}{m'_u}{f'_u}{\Gam_u}{\appctx{\slist'}{u}: \prodt{\A_1}{\A_2}}
      \justifies{\amuju{b'_u}{e'_u}{m'_u}{f'_u}{\Gam_u}{\appctx{\slist'}{u}: \mult{\prodt{\A_1}{\A_2}}}}
    \end{prooftree}
    \justifies{\amuju{b_v + b'_u}{e_v + e'_u + e_p}{m_v + m'_u + m_p}{f_v + f'_u + f_p}{\cmin{\Gam_{v}}{\var{p}} \inter \Gam_u}{v\esub{p}{\appctx{\slist'}{u}}: \sig}}
  \end{prooftree}
  \]}
  where $b_l = b_v + b'_u$, $e_l = e_v + e'_u + e_p$, $m_l = m_v + m'_u + m_p$, $f_l = f_v + f'_u + f_p$ and $\Gam_{\slist'} = \cmin{\Gam_{v}}{\var{p}} \inter \Gam_u$. Note that, by relevance and $\alpha$-conversion we have that $\Gam_{\slist'} |_{q} = \Gam_{u}|_{q}$. Then one can construct the following derivation $\Phi_u$:
  \[
  \begin{prooftree}
    \begin{prooftree}
      \amuju{b'_u}{e'_u}{m'_u}{f'_u}{\Gam_u}{\appctx{\slist'}{u}: \prodt{\A_1}{\A_2}} \sep
      \Gam_u|_{q} \pder^{(e_q,m_q,f_q)} q: \A \sep \Phi_s
      \justifies{
        \amuju{b'_u + b_s}{e'_u + e_s + e_q}{m'_u + m_s + m_q}{f'_u + f_s + f_q}{\cmin{\Gam_u}{\var{q}} \inter \Gam_s}{\appctx{\slist'\esub{q}{s}}{u} : \prodt{\A_1}{\A_2}}}
    \end{prooftree}
      \justifies{\amuju{b'_u + b_s}{e'_u + e_s + e_q}{m'_u + m_s + m_q}{f'_u + f_s + f_q}{\cmin{\Gam_u}{\var{q}} \inter \Gam_s}{\appctx{\slist'\esub{q}{s}}{u} : \mult{\prodt{\A_1}{\A_2}}}}
    \end{prooftree}
  \]
  where $b_u = b'_u + b_s$, $e_u = e'_u + e_s + e_q$, $m_u = m'_u + m_s + m_q$ and $f_u = f'_u + f_s + f_q$. From which we build $\Phi$:
  {\small
  \[
  \begin{prooftree}
    \amuju{b_v}{e_v}{m_v}{f_v}{\Gam_v}{v:\sig} \sep \Gam_v|_{p} \pder^{(e_p,m_p,f_p)} p : \mult{\prodt{\A_1}{\A_2}}\sep
    \amuju{b_u}{e_u}{m_u}{f_u}{\cmin{\Gam_u}{\var{q}} \inter \Gam_s }{\appctx{\slist'\esub{q}{s}}{u} : \mult{\prodt{\A_1}{\A_2}}} 
    \justifies{\amuju{b_v + b_u}{e_v + e_u + e_p}{m_v + m_u + m_p}{f_v + f_u + f_p}{\cmin{\Gam_v}{\var{p}} \inter \cmin{\Gam_u}{\var{q}} \inter \Gam_s }{v\esub{p}{\appctx{\slist'\esub{q}{s}}{u}}:\sig}}
  \end{prooftree}
  \]}
  With $\cmin{\Gam_v}{\var{p}} \inter \cmin{\Gam_u}{\var{q}} \inter \Gam_s = \cmin{(\cmin{\Gam_v}{\var{p}} \inter \Gam_u)}{\var{q}} \inter \Gam_s = \Gam$.
  Furthermore,
  \begin{itemize}
    \item $b = b_v + b_u = b_v + b'_u + b_s = b_l + b_s = b'$
    \item $e = e_v + e_u + e_p = e_v + e'_u + e_s + e_q + e_p = e_l + e_s + e_q = e'$
    \item $m = m_v + m_u + m_p = m_v + m'_u + m_s + m_q + m_p = m_l + m_s + m_q = m' + 1$
    \item $f = f_v + f_u + f_p =  f_v + f'_u + f_s + f_q + f_p = f_l + f_s + f_q = f'$  
  \end{itemize} 
\end{itemize}  
   
 \item $t= \l p. u \Rew{\head} \l p. u' =t'$, where
      $u \Rew{\head} u'$. This case is straightforward by the \ih\
    \item $t= vu \Rew{\head} v'u =t'$, where
      $v \Rew{\head} v'$ and $\isnotabs{v}$. This case is straightforward by the \ih\
    \item $t= v\esub{p}{u} \Rew{\head} v'\esub{p}{u} =t'$, where
      $v \Rew{\head} v'$. This case is straightforward by the \ih\
    \item $t= v\esub{p}{u} \Rew{\head} v\esub{p}{u'} =t'$, where
      $v \notRew{\head}$ and $p\neq x$ and $u \Rew{\head} u'$. By construction there are subderivations $\Phi_v \tri \amuju{b_v}{e_v}{m_v}{f_v}{\Gam_v}{v:\sig}$,  $\Gam_v|_{p} \pder^{(e_p,m_p,f_p)} p:\A$ and $\Phi_{u'} \tri \amuju{b_{u'}}{e_{u'}}{m_{u'}}{f_{u'}}{\Gam_{u'}}{u':\A}$ for some multi-set $\A$ and $\Gam = (\cmin{\Gam_v}{\var{p}}) \inter \Gam_{u'}$. Since $p$ is not a variable then $ \Phi_p$ ends with rule $(\pattn)$ or $(\trpairpat)$. In both cases $\A$ contains
      only one type, let us say $\A = \mult{\sig_{u'}}$. Then $\Phi_{u'}$ has the following form
      \[ \Phi_{u'} \tri \inferrule{\amuJu{b_{u'}}{e_{u'}}{m_{u'}}{f_{u'}}{\Gam_{u'}}{u':\sig_{u'}}}
               {\amuJu{b_{u'}}{e_{u'}}{m_{u'}}{f_{u'}}{\Gam_{u'}}{u':\mult{\sig_{u'}}}}\]
             The \ih\ applied to the premise of $\Phi_{u'}$ gives
             a derivation 
             $\amuJu{b_u}{e_u}{m_u}{f_u}{\Gam_{u'}}{u:\sig_{u'}}$
             and having the expected counters. 
             To conclude we build a type derivation  $\Phi'$ for
             $v\esub{p}{u'}$ having the expected counters. 
\end{itemize} 

\end{proof}


\begin{thebibliography}{10}

\bibitem{AccattoliB17}
  B.~Accattoli, and B.~Barras
  \newblock The Negligible and Yet Subtle Cost of Pattern Matching.
  \newblock In {\em {APLAS}},  {\em LNCS} 10695, 2017.

\bibitem{AccattoliGK18}
B.~Accattoli, S.~Graham{-}Lengrand, and D.~Kesner.
\newblock Tight typings and split bounds.
\newblock {\em {PACMPL}}, 2({ICFP}):94:1--94:30, 2018.

\bibitem{AccattoliG18}
B.~Accattoli and G.~Guerrieri.
\newblock Types of fireballs.
\newblock In {\em {APLAS}},   {\em LNCS} 11275,   2018.

\bibitem{AccattoliGL19}
B.~Accattoli, G.~Guerrieri, and M.~Leberle.
\newblock Types by need.
\newblock In {\em {ESOP}},  {\em LNCS} 11423,    2019.


\bibitem{AK12}
B.~Accattoli, D.~Kesner. 
\newblock The permutative lambda-calculus. 
\newblock In {\em {LPAR}},  {\em LNCS} 7180,    2019.

\bibitem{AlvesDFK18}
S.~Alves, B.~Dundua, M.~Florido, and T.~Kutsia.
\newblock Pattern-based calculi with finitary matching.
\newblock {\em Logic Journal of the {IGPL}}, 26(2):203--243, 2018.

\bibitem{BBBK17}
T.~Balabonski, P.~Barenbaum, E.~Bonelli, and D.~Kesner.
\newblock Foundations of strong call by need.
\newblock {\em {PACMPL}}, 1({ICFP}):20:1--20:29, 2017.

\bibitem{BBM18}
P.~Barenbaum, E.~Bonelli, and K.~Mohamed.
\newblock Pattern matching and fixed points: Resource types and strong
  call-by-need: Extended abstract.
\newblock In {\em {PPDP}},  {ACM}, 2018.

\bibitem{barendregt84nh}
H.~Barendregt.
\newblock {\em The Lambda Calculus: Its Syntax and Semantics}, 
  {\em Studies in logic and the foundation of mathematics} 103.
\newblock North-Holland, Amsterdam, revised edition, 1984.

\bibitem{BernadetTh}
A.~Bernadet.
\newblock {\em Types intersections non-idempotents pour raffiner la
  normalisation forte avec des informations quantitatives}.
\newblock PhD thesis, \'Ecole Polytechnique,  2014.

\bibitem{BernadetL13}
A.~Bernadet and S.~Lengrand.
\newblock Non-idempotent intersection types and strong normalisation.
\newblock {\em Logical Methods in Computer Science}, 9(4), 2013.

\bibitem{BonelliKLR12}
E.~Bonelli, D.~Kesner, C.~Lombardi, and A.~R{\'{\i}}os.
\newblock Normalisation for dynamic pattern calculi.
\newblock In {\em {RTA}}, {\em LIPIcs} 15,   2012.

\bibitem{BoudolCL99}
G.~Boudol, P.-L. Curien, and C.~Lavatelli.
\newblock A semantics for lambda calculi with resources.
\newblock {\em Mathematical Struct. in Computer Science}, 9(4):437--482,
  1999.

\bibitem{BE01}
A.~Bucciarelli and T.~Ehrhard.
\newblock On phase semantics and denotational semantics: the exponentials.
\newblock {\em Ann. Pure Appl. Logic}, 109(3):205--241, 2001.


\bibitem{BucciarelliEM12}
A.~Bucciarelli, T.~Ehrhard, and G.~Manzonetto.
\newblock A relational semantics for parallelism and non-determinism in a
  functional setting.
\newblock {\em Ann. Pure Appl. Logic}, 163(7):918--934, 2012.

\bibitem{BKRDR15}
A.~Bucciarelli, D.~Kesner, and S.~{Ronchi Della Rocca}.
\newblock Observability for pair pattern calculi.
\newblock In {\em {TLCA}}, {\em LIPIcs} 38, 2015.


\bibitem{BucciarelliKV17}
A.~Bucciarelli, D.~Kesner, and D.~Ventura.
\newblock Non-idempotent intersection types for the lambda-calculus.
\newblock {\em Logic Journal of the {IGPL}}, 25(4):431--464, 2017.

\bibitem{CarlierW04}
S.~Carlier and J.~B. Wells.
\newblock Type inference with expansion variables and intersection types in
  system {E} and an exact correspondence with beta-reduction.
\newblock In {\em {PPDP}}, {ACM}, 2004.

\bibitem{rhoCalIGLP-I-2001}
H.~Cirstea and C.~Kirchner.
\newblock The rewriting calculus --- {Part~I}.
\newblock {\em Logic Journal of the Interest Group in Pure and Applied Logics},
  9(3):427--463, 2001.

\bibitem{Carvalho07}
D.~de~Carvalho.
\newblock {\em S\'emantiques de la logique lin\'eaire et temps de calcul}.
\newblock PhD thesis, Universit\'e Aix-Marseille II, 2007.

\bibitem{Carvalho18}
D.~de~Carvalho.
\newblock Execution time of {\(\lambda\)}-terms via denotational semantics and
  intersection types.
\newblock {\em Mathematical Struct.  in Computer Science}, 28(7):1169--1203,
  2018.

\bibitem{CarvalhoF16}
D.~de~Carvalho and L.~T. de~Falco.
\newblock A semantic account of strong normalization in linear logic.
\newblock {\em Inf. Comput.}, 248:104--129, 2016.

\bibitem{Ehrhard12}
T.~Ehrhard.
\newblock Collapsing non-idempotent intersection types.
\newblock In P.~C{\'e}gielski and A.~Durand, editors, {\em CSL}, {\em LIPIcs} 16,  2012.

\bibitem{EhrhardG16}
T.~Ehrhard and G.~Guerrieri.
\newblock The bang calculus: an untyped lambda-calculus generalizing
  call-by-name and call-by-value.
\newblock In {\em {PPDP}},  {ACM}, 2016.

\bibitem{Gardner94}
P.~Gardner.
\newblock Discovering needed reductions using type theory.
\newblock In {\em {TACS}}, {\em LNCS} 789,    1994.

\bibitem{Girard87}
J.-Y. Girard.
\newblock Linear logic.
\newblock {\em Theoretical Computer Science}, 50:1--102, 1987.

\bibitem{abs-1904-06845}
G.~Guerrieri and G.~Manzonetto.
\newblock The bang calculus and the two girard's translations.
\newblock In {\em Linearity-TLLA@FLoC}, {\em {EPTCS}} 292,  2018.

\bibitem{JK09}
B.~Jay and D.~Kesner.
\newblock First-class patterns.
\newblock {\em Journal of Functional Programming}, 19(2):191--225, 2009.

\bibitem{Kahl-2004a}
W.~Kahl.
\newblock Basic pattern matching calculi: a fresh view on matching failure.
\newblock In {\em {FLOPS}},  {\em LNCS} 2998,  2004.

\bibitem{Kesner16}
D.~Kesner.
\newblock Reasoning about call-by-need by means of types.
\newblock In {\em FoSSaCS}, {\em LNCS} 9634,  2016.

\bibitem{KLR11-hor}
D.~Kesner, C.~Lombardi, and A.~R\'{\i}os.
\newblock Standardisation for constructor based pattern calculi.
\newblock In {\em Proceedings of the 5th International Workshop on Higher-Order
  Rewriting}, 2010.

\bibitem{KV14}
D.~Kesner and D.~Ventura.
\newblock Quantitative types for the linear substitution calculus.
\newblock In {\em {IFIP} {TCS}},  {\em LNCS} 8705,  2014.

\bibitem{KV15}
D.~Kesner and D.~Ventura.
\newblock A resource aware computational interpretation for {H}erbelin's
  syntax.
\newblock In {\em {ICTAC}},  {\em LNCS} 9399, 2015.

\bibitem{KV17}
D.~Kesner and P.~Vial.
\newblock Types as resources for classical natural deduction.
\newblock In {\em {FSCD}},  {\em LIPIcs} 84,  2017.

\bibitem{KV19}
D.~Kesner and P.~Vial.
\newblock Extracting exact bounds from typing in a classical framework.
\newblock 25th International Conference on Types for Proofs and Programs, 2019.

\bibitem{kfoury96}
A.~Kfoury.
\newblock A linearization of the lambda-calculus and consequences.
\newblock Technical report, Boston University, 1996.

\bibitem{Kfoury2000}
A.~Kfoury.
\newblock A linearization of the lambda-calculus and consequences.
\newblock {\em Journal of Logic and Computation}, 10(3):411--436, 2000.

\bibitem{krivine93book}
J.~L. Krivine.
\newblock {\em Lambda-Calculus, Types and Models}.
\newblock Masson, Paris, and Ellis Horwood, Hemel Hempstead, 1993.

\bibitem{Ong17}
C.~L. Ong.
\newblock Quantitative semantics of the lambda calculus: Some generalisations
  of the relational model.
\newblock In {\em {LICS}}, {IEEE} Computer Society, 2017.

\bibitem{PPR17}
L.~Paolini, M.~Piccolo, and S.~{Ronchi Della Rocca}.
\newblock Essential and relational models.
\newblock MSCS, 27(5):626-650, 2017.

\bibitem{Regnier92}
L.~Regnier.
\newblock {\em Lambda calcul et r\'eseaux}.
\newblock PhD thesis,  Universit\'e Paris VII, 1992.

\bibitem{SayagMauny97}
  E.~Sayag and M.~Mauny.
  \newblock{Structural properties of intersection types}
  \newblock In {\em LIRA}, University of Novi Sad,  1997.

\bibitem{Terese03}
Terese.
\newblock {\em Term Rewriting Systems},  {\em Cambridge Tracts in
  Theoretical Computer Science} 55.
\newblock Cambridge University Press, 2003.

\bibitem{Wells02}
J.~B. Wells.
\newblock The essence of principal typings.
\newblock In {\em {ICALP}},  {\em LNCS} 2380, 2002.

\end{thebibliography}
\end{document}